\theoremstyle{plain}
	\newtheorem{theorem}{Theorem}[section]
	\newtheorem*{theorem*}{Theorem}
	\newtheorem*{lemma*}{Lemma}
  \newtheorem*{corollary*}{Corollary}
	\newtheorem{lemma}[theorem]{Lemma}
	\newtheorem{corollary}[theorem]{Corollary}
\theoremstyle{definition}
	\newtheorem{definition}[theorem]{Definition}
	\newtheorem{remark}{Remark}
	\newtheorem*{remark*}{Remark}
	\newtheorem{observation}[theorem]{Observation}
	\newtheorem*{observation*}{Observation}
\tikzset{>={angle 60}}
 \newcommand{\ints}{\mathbb{Z}}
 \newcommand{\reals}{\mathbb{R}}
 \newcommand{\rats}{\mathbb{Q}}
\newcommand{\vecofones}{\mathbf {1}}
\renewcommand{\epsilon}{\varepsilon}
\newcommand{\eps}{\epsilon}
\renewcommand{\phi}{\varphi}
\renewcommand{\theta}{\vartheta}
\newcommand{\transp}{^{\text{\upshape{\tiny{T}}}}}
\newcommand{\floor}[1]{ \left\lfloor #1 \right\rfloor }
\newcommand{\ceil}[1]{ \left\lceil #1 \right\rceil }
\newcommand{\opt}{\text{\upshape{OPT}}}
\DeclareMathOperator*{\argmin}{argmin}
\newcommand{\calX}{\mathcal{X}}
\newcommand{\calY}{\mathcal{Y}}
\newcommand{\calP}{\mathcal{P}}
\newcommand{\calR}{\mathcal{R}}
\newcommand{\bigo}{\mathcal{O}}
\newcommand{\norm}[1]{\lVert #1 \rVert}
\newcommand{\Norm}[1]{\langle\!\langle #1 \rangle\!\rangle}
\newcommand{\CS}{\textup{CP}}
\newcommand{\RPM}{\textup{RP}}
\newcommand{\gap}{\textsc{Gap}}
\newcommand{\ycs}{y^*} 
\newcommand{\yref}{y^*} 
\newcommand{\ideal}{y^{\textup{id}}} 
\newcommand{\refpt}{y^{\textup{rp}}} 
\newcommand{\rd}{r}
\title{Reference Point Methods and Approximation\\in Multicriteria Optimization}
\author{C. B\"using}
\affil{RWTH Aachen, \texttt{buesing@or.rwth-aachen.de}}
\author{K.S. Goetzmann}
\author{J. Matuschke}
\author{S. Stiller}
\affil{TU Berlin, \texttt{\{goetzmann,matuschke,stiller\}@math.tu-berlin.de}}
\date{}
\begin{document}

\maketitle

\begin{abstract}
Operations research applications often pose multicriteria problems. Mathematical research on
multicriteria problems predominantly revolves around the set of Pareto optimal solutions, while in
practice, methods that output a single solution are more widespread. In real-world multicriteria
optimization, reference point methods are widely used and successful examples of such methods. A
reference point solution is the solution closest to a given reference point in the objective space. 

We study the approximation of reference point solutions. In particular, we establish that
approximating reference point solutions is polynomially equivalent to approximating the Pareto set.
Complementing these results, we show for a number of general algorithmic techniques in single
criteria optimization how they can be lifted to reference point optimization. In particular, we lift
the link between dynamic programming and FPTAS, as well as oblivious LP-rounding techniques. The
latter applies, e.g., to \textsc{Set Cover} and several machine scheduling problems. 
\end{abstract}

%

\section{Introduction}


In many applications of combinatorial optimization, trade-offs between conflicting objectives play a
crucial role. For example, route guidance systems are a classical application of the shortest path
problem. Yet, a good route guidance should allow the driver to make an informed choice to balance
travel time and fuel consumption.

It is well-known that even for this basic example, the bicriteria shortest path problem, the number
of Pareto optimal (i.e., non-dominated) solutions can grow exponentially with the size of the
network. Decision makers may have different preferences how much extra fuel to spend on less travel
time. Thus, a central task of multicriteria optimization is to either find a \emph{single} solution
based on a priori expressed trade-off preferences of the decision maker, or to identify a set of
solutions that is of manageable (in mathematical terms: polynomial) size but still reflects all
possible trade-off options at least approximately.

A straightforward way to a single solution is the weighted-sum method: The trade-off preferences are
specified by two non-negative weights for time and fuel consumption. The navigation system then
chooses a route minimizing the weighted sum of the two objectives. Unfortunately, this method
deprives the decision maker of essential solutions: Consider an instance with three possible routes
with corresponding objective value vectors $(10,1)$, $(6,6)$, and $(1,10)$, respectively. The route
with fuel consumption $6$ and travel time $6$ will never be the optimum for any choice of weights,
despite being a balanced and thus attractive alternative for many drivers.

Formally, this shortcoming of the weighted-sum approach means that it cannot reach every point of
the Pareto set. This motivates the concept of compromise solutions and reference point
solutions as defined by Yu~\cite{Yu1973}, which returns a solution closest to a given reference
point, where the 
distance is measured by some norm in the objective space. Compromise solutions use the component-wise optimum over 
all solutions as a
reference point. The trade-off preferences are reflected by the choice of the norm in the objective
space. Every point in the Pareto set is a reference point solution for some norm. Reference point
methods are widely used in practice, serving as a core concept of MCDM\footnote{Multicriteria
Decision Making} tools (cf.~Caballero et al.~\cite{Caballero2002} and Opricovic and
Tzeng~\cite{Opricovic2004} for particular examples and
Ehrgott et al.~\cite{Ehrgott2005} for an overview). Still, they did not attract a lot of theoretical
interest so
far.

We show that approximating reference point solutions is polynomially equivalent to approximating
the Pareto set as proposed by Papadimitriou and Yannakakis~\cite{Papadimitriou2000}. Further, we
provide general techniques for approximation algorithms, by means of which reference point solutions
can often be approximated with the same factor as the single-criterion problem, most notably for the
case of LP-rounding. A byproduct of our results are approximation algorithms for the Pareto sets of many hard
combinatorial optimization problems.




\paragraph{Related work.}
Multicriteria optimization has a long tradition. The central notion of \emph{Pareto optimality} goes
back to works by Vilfredo Pareto in the late 19th and early 20th century. Ever since then, solution
concepts in multicriteria optimization have been studied. The notion of \emph{compromise solutions}
was introduced in 1973 by Yu~\cite{Yu1973} and further studied and extended in the following years
by Freimer and Yu~\cite{Freimer1976}, Gearhardt~\cite{Gearhart1979}, Choo and Steuer~\cite{Choo1983}
and many others. The concept was later extended to more general reference points and is incorporated
in many MCDM tools (cf.~Caballero et al.~\cite{Caballero2002}, Opricovic and
Tzeng~\cite{Opricovic2004}, Ehrgott et al.~\cite{Ehrgott2005}). Recently, Voorneveld et
al.~\cite{Voorneveld2011} gave an axiomatization of compromise solutions, in particular those
w.r.t.~the Euclidean norm. 

Also the approximation of Pareto sets has been studied for several decades now. It was
initiated by Hansen in 1979~\cite{Hansen1979}, followed by several publications on specific problems
such as shortest paths (Warburton~\cite{Warburton1987}) and scheduling (Cheng et
al.~\cite{Cheng1998}). More general results
on the existence and computability of approximate Pareto sets were presented by Safer in his PhD
thesis~\cite{Safer1992} in 1992, and in 2000 by Papadimitriou and
Yannakakis~\cite{Papadimitriou2000}. Some of our results are based on the latter. 

The results by Papadimitriou and Yannakakis~\cite{Papadimitriou2000} were extended by Vassilvitskii
and Yannakakis~\cite{Vassilvitskii2005} and, under stronger assumptions on the problems, further
improved by Diakonikolas and Yannakakis~\cite{Diakonikolas2007,Diakonikolas2008}.
The latter publication is particularly related to our results on the
equivalence between the approximability of the weighted sum problem and the approximability of the
Pareto set (Corollary~\ref{cor:single-approx-to-pareto-approx}), as the authors show a similar
statement for \emph{convex} approximate Pareto sets.

Also several other works have studied the relationship between approximate Pareto sets and aggregations of the objectives into one single objective, and are thus related to reference point methods. Ackermann et al.~\cite{Ackermann2007} use approximate Pareto sets to optimize an aggregation that is assumed to be (partially) differentiable. Their results are restricted to bi-objective problems, however.
Recently, Mittal and Schulz~\cite{Mittal2012b, Mittal2012} have used approximate Pareto sets to approximately optimize low-rank functions over polytopes and discrete sets. While one of our results can be seen as a special case of their framework, the remainder of our work also implies the reverse direction of their results: If one can approximately optimize a certain class of low-rank functions, one can also compute an approximate Pareto set.

Multicriteria optimization and in particular compromise solutions are also closely related to
robust optimization, in particular to min-max regret robustness. This connection has also been
noted and exploited by others, e.g.~Aissi et al.~\cite{Aissi2006,Aissi2007}. We extend some of
their results to reference point methods.

\paragraph{Our contribution.}
Our research mainly focuses on minimization problems, and we will restrict ourselves to this
setting throughout most parts of this paper. We note that this is \emph{not} without loss of
generality, and some of the results do not hold in the context of maximization. We discuss the
differences in Section \ref{sec:maximization}.

In Section \ref{sec:general}, we establish an algorithmic link between reference point solutions and
approximation of the Pareto set. As a main result, we show that approximating reference point
solutions, approximating compromise solutions, and approximating the Pareto set are polynomially
equivalent. An overview over the reductions that are proven in this paper is given in
Figure~\ref{fig:equivalence-graph}. We also show that any point in the Pareto set can be obtained as
reference point solution for two classes of popular norms with polynomially sized norm parameters,
extending a result by Gearhardt~\cite{Gearhart1979}.

Combining these results with an easy  constant factor approximation for reference points, through
optimization of the weighted sum, yields the following interesting corollary: For any discrete
minimization problem with a fixed number of linear criteria, there is a constant factor
approximation for the Pareto set if and only if there is a constant factor approximation for the
single-criterion version of the problem. The approximation guarantee of the thus obtained set is
increased by a factor of $k$ 	(the number of criteria), but it remains constant.

In Section \ref{sec:application}, we show how to solve the reference point
problem approximately for many combinatorial optimization problems. 
As a main result in this section, we show that single-objective approximations obtained by oblivious
LP-rounding directly can be transferred to approximation algorithms for reference point methods.
Along the way, we also prove that reference point solutions for linear objectives on convex sets can
be found efficiently. From this we get a short alternative proof to Papadimitriou and
Yannakakis~\cite{Papadimitriou2000} for the existence of an FPTAS for
the Pareto set of such problems.
Finally, we extend a technique by Aissi et al.~\cite{Aissi2006} from robust optimization to
multicriteria
optimization, allowing us to construct an FPTAS for reference point problems from pseudopolynomial
algorithms. 

In Section \ref{sec:maximization} we analyze maximization problems and present both positive and
negative
answers to the question which of the results from Section \ref{sec:general} carry over to
maximization.

\section{Preliminaries}\label{sec:preliminaries}

Throughout the paper, we let $\calP$ denote a multicriteria discrete optimization problem with $k$
objectives. As usual in multicriteria optimization, we assume the number of objectives to be
fixed. With the exception of Section~\ref{sec:maximization}, we consider only minimization
objectives. 
As we want to study approximation, we also restrict to non-negative objective values.
An instance $I$ of $\calP$ is thus given by the set of feasible solutions $\calX$ and the vector of 
objective functions $c: \calX \rightarrow \ints^k_{\geq0}$. The \emph{objective vector set}
of the instance is defined by $\calY := c(\calX) \subseteq \ints^k_{\geq0}$. A solution $y\in\calY$
is \emph{Pareto optimal} if there is no $y'\in\calY\setminus\{y\}$ with $y'\leq y$, where $y'\leq y$
is defined as $y_i\leq y'_i\;\forall\;i\in[k]$. By $[k]$ here and throughout the paper we denote the
set $\{1,2,\ldots,k\}$. The \emph{Pareto set} $\calY_P$ is the set of all Pareto optimal solutions.

Similar to Papadimitriou and Yannakakis~\cite{Papadimitriou2000}, we will assume throughout this
paper that for any instance $I$,
we can compute an exponential bound on the objective values of all solutions, i.e., a number $M>0$
such that $\calY \subseteq [0,M]^k$ and such that there is a polynomial $\pi$ with
$M\leq2^{\pi(|I|)}$, where $|I|$ is the encoding length of the instance. This is not a major
restriction in usual discrete optimization problems.

\paragraph{Reference point methods.}
To model the decision maker's preferences, reference point methods take two types of additional input: a \emph{reference point} $\refpt \in
\ints_{\geq0}^k$ and a \emph{weight vector} $\lambda \in \rats^{k}_{\geq0}$ on the objectives.
The reference point  is a---usually unattainable---vector of aspired values for each criterion.
The weights are used to adjust a fixed norm $\norm{\cdot}$ on $\reals^k$ by letting
$\norm{\cdot}^\lambda$ be the norm defined by $\norm{y}^\lambda := \norm{(\lambda_1 y_1, \ldots,
\lambda_k y_k)}$.

The goal is to find a solution that is as close as possible to the reference point w.r.t.~$\norm{\cdot}^\lambda$. Conceive of this distance as the
price to pay to attain a compromise among the criteria. The objective value of an optimal reference
point solution is \emph{the value of the reference point, degraded by the price of compromise}.
For minimization, the reference point objective function thus reads:
$$\rd_{\refpt, \lambda}(y) = \norm{\refpt}^\lambda + \norm{y-\refpt}^\lambda\;.$$
Of particular interest in this context is the \emph{ideal point} $\ideal \in \ints_{\geq0}^k$,
which is defined as the point in the objective space obtained by optimizing each objective
individually, i.e., $\ideal_i := \min_{y\in\calY} y_i$. Throughout this paper, we will restrict ourselves to reference points
$\refpt$ with $\refpt \leq \ideal$. We call these points \emph{feasible reference points}.


Formally, we define the problem of reference point solutions, $\RPM(\calP, \norm{\cdot})$ for short, as follows:
Given an instance of $\calP$, a feasible reference point $\refpt \in \ints^k_{\geq 0}$, and a weight
vector $\lambda \in \rats^k_{\geq0}$ as input, find a solution $x \in \calX$ that minimizes
$\rd_{\refpt, \lambda}(c(x))$. Given the particular interest of the ideal point, we will also consider the problem $\CS(\calP,
\norm{\cdot})$, which is known as \emph{compromise programming}: Given an instance of $\calP$
and $\lambda \in \rats^k_{\geq0}$, find a solution $x \in \calX$ that minimizes $\rd_{\ideal,
\lambda}(c(x))$.

\paragraph{The constant in the objective.}

As $\norm{\refpt}$ is a constant, exact minimization of $\rd(y)$ boils down to minimizing the
distance $\norm{y-\refpt}$, as the level sets of this function are identical to that of the
reference point objective function. Still, for judging the quality of an approximation, this
short-cut is not permissible, as the following
trivial example shows. Consider a multicriteria problem defined by $k$ unrelated copies of a single
criteria optimization problem, for which we have a tight approximation algorithm with factor
$\alpha$. Let the distance be measured in any norm, and choose the ideal point as a
reference point. As the single criteria problems are unrelated, one expects that solving each
problem separately by the approximation algorithm gives an $O(\alpha)$-approximation for the
reference point solution. This is indeed true for the reference point objective function.
However, for minimizing the distance, the ratio to the optimum is infinite, because the optimum
attains the ideal point for the unrelated problems.  

Conversely, any approximation algorithm for the
distance $\norm{y-\refpt}$ could be turned into an algorithm that solves the single-criterion
problem exactly (as the minimal distance to the ideal point is $0$ when focusing on a single
criterion). Thus, we can not hope for approximating the distance $\norm{y-\refpt}$ for any
problem that is NP-hard in the single criterion version. In contrast to that, for the objective
$\rd(y)$ we do get positive approximation results also for NP-hard problems.

\paragraph{Caveat on complexity.}
Note that although the concept of reference point solutions is a generalization of compromise
solutions, in terms of complexity \CS\ is \emph{not} a special case of \RPM. In the former problem,
the ideal point is not given, while in the latter case the reference point is given in the input.
This leads to different consequences if the underlying single-criterion problem can not be solved in
polynomial time. In this case, the objective function of {\CS} is hard to evaluate. However, in the
context of approximability this is only a minor issue, as
Corollary~\ref{cor:pareto-gives-compromise} shows.
For \RPM, on the other hand, it becomes hard to verify feasibility of the input (i.e., checking
whether $\refpt \leq \ideal$). The best we can expect from an algorithm is to \emph{approximately
distinguish} between feasible and infeasible instances, i.e., an $\alpha$-approximation algorithm
needs to accept all feasible inputs and reject all instances where $\refpt_i > \alpha \cdot
\ideal_i$ for some $i \in [k]$, but it might also accept instances with slightly infeasible
reference points, as long as $\refpt \leq \alpha \ideal$.

\paragraph{Norms.}
Throughout this paper, we will restrict  to norms fulfilling the following two properties. A norm
$\norm{\cdot}$ is called \emph{monotone}, if $y' \leq y''$ implies $\norm{y'} \leq \norm{y''}$ for
any $y', y'' \in \reals^k_{\geq0}$. It is called \emph{polynomially decidable}, if we can decide
whether $\norm{y'} \leq \norm{y''}$ in time polynomial in the encoding length of $y'$ and $y''$. 

We will mainly use the following families of norms: the infinity-norm
$\norm{y}_\infty := \max_i |y_i|$ (which we will sometimes also denote by $\Norm{y}_\infty$ for
convenience), the standard $\ell^p$-norm $\norm{y}_p := (\sum_i |y_i|^p)^\frac{1}{p}$, and the
cornered $p$-norm $\Norm{y}_p := \max_i |y_i| + \frac{1}{p}\sum_i |y_i|$ (both for $p\geq1$). The
cornered norm has been considered in the context of compromise programming before, e.g.
by Gearhart~\cite{Gearhart1979}. Our motivation to use this norm is twofold. Firstly, for
general values of $p$, it will be hard to minimize a distance measured in the $\ell^p$-norm because
of the exponents. The cornered $p$-norms are simpler, but still have properties similar to the
$\ell^p$-norms: Their unit spheres are nested within each other, and for increasing values of $p$
they approach the axis parallel square. This allows to control the degree of balancing of the
criteria in the reference point solution. Secondly, the infinity-norm (often referred to as
Chebyshev-norm in this context) is very popular in MCDM-tools. Often it is \emph{augmented} by a
small linear term to avoid weakly Pareto optimal solutions (cf.~Choo and Steuer~\cite{Choo1983}),
similar to the
addition of the term $\frac{1}{p}\norm{y}_p$.

Note that all $\ell^p$- and cornered $p$-norms are monotone and polynomially decidable.

\paragraph{Approximation of the Pareto set.}
We extend the well-known concept of approximation algorithms for the single-objective case to
approximability of the Pareto set in a similar way as done in
Papadimitriou and Yannakakis~\cite{Papadimitriou2000}, with the
slight difference of including constant factor approximations. For $\alpha>1$, an
\emph{$\alpha$-approximate Pareto set} is a set $\calY_\alpha\subseteq\calY$ such that for all
$y\in\calY_P$ there is $y'\in\calY_\alpha$ with $y'\leq\alpha y$. An \emph{$\alpha$-approximation
algorithm for the Pareto set} is an algorithm that constructs an $\alpha$-approximate Pareto set in
time polynomial in the encoding length of the instance of $\calP$. An \emph{FPTAS for the Pareto
set} is a family of algorithms that, for all $\eps>0$, contains a $(1+\eps)$-approximation algorithm
for the Pareto set with running time polynomial in  $\frac{1}{\eps}$ and the encoding length of the
instance.

\section{Equivalence of Approximation}\label{sec:general}

In this section, we investigate the relation between approximation of the Pareto set, reference point methods, and compromise programming. Our main theorem states these three notions of approximability are essentially equivalent: A constant approximation factor for one of these problems implies constant (although possibly different) approximation factors for the others, and the same is true for approximation schemes.

\begin{theorem}\label{thm:approximation-equiv}
Let $\calP$ be a multicriteria discrete minimization problem. The following statements are
equivalent.
\begin{itemize}
\item There is a constant factor approximation (FPTAS, respectively) for the Pareto set of $\calP$.
\item There is a constant factor approximation (FPTAS, respectively) for $\RPM(\calP, \norm{\cdot})$
for every monotone and polynomially decidable norm $\norm{\cdot}$.
\item There is a constant factor approximation (FPTAS, respectively) for $\RPM(\calP, \norm{\cdot}_\infty)$.
\item There is a family of algorithms that, for each $p \geq 1$, contains a constant factor approximation (FPTAS, respectively) for $\RPM(\calP, \norm{\cdot}_p)$ or $\RPM(\calP, \Norm{\cdot}_p)$, and the running time of all algorithms is bounded by a polynomial in the input size and $\log(p)$.
\item There is a constant factor approximation (FPTAS, respectively) for $\CS(\calP, \norm{\cdot})$
for every monotone and polynomially decidable norm $\norm{\cdot}$.
\item There is a constant factor approximation (FPTAS, respectively) for $\CS(\calP, \norm{\cdot}_\infty)$.
\item There is a family of algorithms that, for each $p \geq 1$, contains a constant factor approximation (FPTAS, respectively) for $\CS(\calP, \norm{\cdot}_p)$ or $\CS(\calP, \Norm{\cdot}_p)$, and the running time of all algorithms is bounded by a polynomial in the input size and $\log(p)$.
\end{itemize}
\end{theorem}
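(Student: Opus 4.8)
The plan is to prove the seven statements equivalent by arranging them around the Pareto-set statement as a hub: I would show that approximability of the Pareto set implies each reference-point and compromise variant, and conversely that each variant lets one reconstruct an approximate Pareto set. Numbering the seven bullets $(1)$--$(7)$ in the order stated, I would close the cycles $(1)\Rightarrow(2)\Rightarrow(3)\Rightarrow(1)$ for $\RPM$ and $(1)\Rightarrow(5)\Rightarrow(6)\Rightarrow(1)$ for $\CS$, and then attach the two $p$-norm-family statements via $(1)\Rightarrow(4)\Rightarrow(3)$ and $(1)\Rightarrow(7)\Rightarrow(6)$. Since $(2)\Rightarrow(3)$ and $(5)\Rightarrow(6)$ are mere specializations (the infinity norm is monotone and polynomially decidable), this yields a strongly connected implication graph. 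Throughout, the constant-factor and FPTAS cases run in parallel, as the approximation guarantees compose multiplicatively.

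The easy directions out of $(1)$ are evaluation arguments. First I would note that, because every admissible norm is monotone and $\refpt\leq\ideal\leq y$ for all $y\in\calY$, an optimal reference-point solution may be taken to be Pareto optimal. Given an $\alpha$-approximate Pareto set $\calY_\alpha$, I would return the point of $\calY_\alpha$ minimizing $\rd_{\refpt,\lambda}$; if $\ycs\in\calY_P$ is optimal and $y'\in\calY_\alpha$ satisfies $y'\leq\alpha\ycs$, then writing $\alpha\ycs-\refpt=\alpha(\ycs-\refpt)+(\alpha-1)\refpt$ and combining monotonicity with the triangle inequality gives $\norm{y'-\refpt}^\lambda\leq\alpha\norm{\ycs-\refpt}^\lambda+(\alpha-1)\norm{\refpt}^\lambda$, hence $\rd_{\refpt,\lambda}(y')\leq\alpha\,\rd_{\refpt,\lambda}(\ycs)$. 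This is exactly where retaining the additive constant $\norm{\refpt}^\lambda$ in the objective is essential. For $(1)\Rightarrow(4)$ and $(1)\Rightarrow(7)$ I would run the same argument with the cornered norm $\Norm{\cdot}_p$, whose value $\max_i|y_i|+\tfrac1p\sum_i|y_i|$ is computable in time polynomial in $\log p$, so the evaluation over $\calY_\alpha$ is uniform in $p$. For the compromise directions $(1)\Rightarrow(5)$ and $(1)\Rightarrow(7)$ the only extra ingredient is that $\ideal$ is not given; but $\min_{y'\in\calY_\alpha}y'_i$ approximates $\ideal_i$ within a factor $\alpha$, which suffices to evaluate the $\CS$ objective approximately (the content of the referenced Corollary~\ref{cor:pareto-gives-compromise}).

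The main obstacle is the reverse direction, say $(3)\Rightarrow(1)$: building an approximate Pareto set from a single-solution oracle. The key structural observation is that, taking the origin as a (always feasible) reference point, the infinity-norm objective collapses to the weighted Chebyshev scalarization $\rd_{0,\lambda}(y)=\max_i\lambda_i y_i$, and that every Pareto point $\bar y$ is a minimizer for the weights $\lambda_i=1/\bar y_i$ (a cheaper solution would dominate $\bar y$). I would therefore lay a geometric grid of weight vectors over $[1/M,1]^k$ --- of which there are $O((\log_{1+\eps}M)^k)$, polynomial in the input and $1/\eps$ for fixed $k$ --- call the $\RPM(\calP,\norm{\cdot}_\infty)$ oracle once per grid point, and collect the returned solutions. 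Each call realizes, up to the oracle's factor, a \gap\ query in the sense of Papadimitriou and Yannakakis~\cite{Papadimitriou2000}: the solution returned for the grid weight nearest $1/\bar y$ satisfies $\max_i y'_i/\bar y_i\leq\alpha(1+\eps)$, i.e.\ it $\alpha(1+\eps)$-dominates $\bar y$. Invoking their theorem that an approximate \gap\ routine yields an approximate Pareto set then closes the argument, the two factors multiplying. The variant $(6)\Rightarrow(1)$ is identical except that the reference point is forced to be $\ideal$; here I must cope with not knowing $\ideal$ and with the fact, noted in the caveat above, that the oracle only distinguishes feasible from grossly infeasible inputs --- precisely the slack tolerated by the definition of an approximate Pareto set.

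Finally, $(4)\Rightarrow(3)$ and $(7)\Rightarrow(6)$ follow by taking $p$ large: since $\norm{y}_\infty\leq\norm{y}_p\leq k^{1/p}\norm{y}_\infty$ and $\norm{y}_\infty\leq\Norm{y}_p\leq(1+\tfrac{k}{p})\norm{y}_\infty$, choosing $p=\Theta(k/\eps)$ --- so that $\log p$ stays polynomial --- makes whichever $p$-norm the supplied algorithm handles agree with the infinity norm up to a factor $1+\eps$, so that algorithm is also an approximation for the infinity-norm problem. I expect the delicate points to be pinning down the exact grid granularity needed to certify the $\alpha$- versus $(1+\eps)$-guarantee, and the bookkeeping for the unknown ideal point (including coordinates with $\bar y_i=\ideal_i$) in the $\CS$ reductions.
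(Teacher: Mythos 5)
Your overall architecture is the paper's: the Pareto set as hub, the easy directions by selecting the best point from an $\alpha$-approximate Pareto set (your triangle-inequality computation is literally Lemma~\ref{lem:pareto-to-cs}, and your $\alpha^2$ handling of the unknown ideal point is Corollary~\ref{cor:pareto-gives-compromise}), and the reverse directions through the \gap\ characterization of Papadimitriou--Yannakakis. Your $(3)\Rightarrow(1)$ with reference point $0$ and weights $\lambda_i=1/\bar y_i$ is exactly the paper's Lemma~\ref{lem:ref-to-pareto} specialized to $\refpt=0$, and it works, up to one fixable omission: your weight grid lives in $[1/M,1]^k$, but for a Pareto point with $\bar y_i=0$ an $\alpha$-approximate Pareto set must match that coordinate \emph{exactly} ($y'_i\leq\alpha\cdot 0$ forces $y'_i=0$), so you need an extra large weight value on zero coordinates together with integrality of $\calY$ (the paper sets $\lambda_i=2$ there). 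Your route for $(4)\Rightarrow(3)$ and $(7)\Rightarrow(6)$ --- take $p=\Theta(k/\eps)$ and absorb the factor $1+\tfrac{k}{p}$ (resp.\ $k^{1/p}$) into the approximation guarantee --- is a genuinely different and simpler argument than the paper's Lemma~\ref{lem:ref_p-to-pareto}, which instead runs the \gap\ reduction natively in the $p$-norm with an instance-dependent, exponentially large $p\geq 2kMq$ so that integrality yields \emph{exact} domination $y'\leq y$; both are valid, and the theorem's ``polynomial in $\log p$'' clause accommodates either.

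The genuine gap is in $(6)\Rightarrow(1)$ (and hence $(5),(7)\Rightarrow(1)$) in the FPTAS case, precisely where you write that the argument is ``identical except that the reference point is forced to be $\ideal$.'' It is not: with the reference point pinned at the (unknown) ideal point, the objective $\norm{\ideal}^\lambda_\infty+\norm{z-\ideal}^\lambda_\infty$ does \emph{not} collapse to the weighted Chebyshev scalarization $\max_i\lambda_i z_i$, and the additive term $\norm{\ideal}^\lambda_\infty$ can be as large as the distance term. Concretely, with $\lambda_i=1/\bar y_i$ one only gets $r_\lambda(\bar y)=\norm{\ideal}^\lambda_\infty+\norm{\bar y-\ideal}^\lambda_\infty\leq 2$, so a $(1+\delta)$-approximate compromise solution $y'$ satisfies merely $\max_i y'_i/\bar y_i\leq(1+\delta)\cdot 2$: even a compromise-programming FPTAS would then certify only factor-$2$ domination, which yields the constant-factor half of the equivalence but not the FPTAS half. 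The paper's repair is the actual content of Lemma~\ref{lem:ref-to-pareto} and Corollary~\ref{cor:cs-to-pareto}: first spend $k$ oracle calls with unit weight vectors to compute a surrogate reference point with $\refpt\leq\ideal\leq\beta\refpt$, and show the \CS\ oracle is a $\beta^2$-approximation for $\RPM$ at this $\refpt$; then run the \gap\ reduction with \emph{shifted} weights $\lambda_i=1/(y_i-\refpt_i)$, after the precheck $y\geq\alpha\refpt$ (legitimate, since otherwise no $y''\leq\tfrac{1}{\alpha}y$ exists and \gap\ is negative), which bounds the additive term via $\norm{\refpt}^\lambda_\infty\leq\tfrac{1}{\alpha-1}$. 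This is what makes the tolerated oracle factor $\beta=\tfrac{\alpha^2}{2\alpha-1}\rightarrow 1$ as $\alpha\rightarrow1$ (with $1/\delta\in\bigo(1/\eps^2)$, as the paper remarks), rescuing the approximation-scheme statement; the shifted weights and the feasibility precheck are the ideas missing from your sketch.
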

Figure~\ref{fig:equivalence-graph} explicitly depicts the reductions we prove in the remainder of
this paper.

\begin{figure}[ht]
	\begin{center}
		\begin{tikzpicture}
			[xscale=2.5,yscale=2.5,textnode/.style={rounded corners,draw=black,inner sep=5pt,align=left}]
			\footnotesize			
			\node[textnode](pareto) at (0,1) {Pareto set};
			\node[textnode](gap) at (2,1) {\gap\ problem};
			\node[textnode](ws) at (4,1) {$\min_y \lambda\transp y$\\(weighted sum)};
			\node[textnode](rpm-gen) at (0,2) {$\RPM(\norm{\cdot})$,\\ $\norm{\cdot}$ monotone\\\&
poly decidable};
			\node[textnode](rpm-p) at (2,2) {$\RPM(\norm{\cdot}_p)$ and\\$\RPM(\Norm{\cdot}_p)$,
$p\geq1$};
			\node[textnode](rpm-inf) at (4,2) {$\RPM(\norm{\cdot}_\infty)$};
			\node[textnode](cs-gen) at (0,0) {$\CS(\norm{\cdot})$,\\ $\norm{\cdot}$ monotone\\\&
poly decidable};
			\node[textnode](cs-p) at (2,0) {$\CS(\norm{\cdot}_p)$ and\\ $\CS(\Norm{\cdot}_p)$,
$p\geq1$};
			\node[textnode](cs-inf) at (4,0) {$\CS(\norm{\cdot}_\infty)$};
			\draw[<->] (pareto) -- node[above]{\cite{Papadimitriou2000}} (gap);	
			\draw[->] (gap) -- node[above]{binary search} (ws); 
			\draw[->] (cs-gen) -- node[above]{} (cs-p);
			\draw[->] (cs-p) -- node[above]{} (cs-inf);
			\draw[->] (pareto) -- node[right]{Cor.~\ref{cor:pareto-gives-compromise}} (cs-gen);
			\draw[->] (cs-p) -- node[left]{Cor.~\ref{cor:cs_p-to-pareto}} (gap);
			\draw[->] (cs-inf) -- node[sloped, above]{Cor.~\ref{cor:cs-to-pareto}} (gap);
			\draw[dashed,->] (ws) -- node[right]{Thm.~\ref{thm:weighted-sum-approx}} (cs-inf);
			\draw[->] (rpm-gen) -- node[below]{} (rpm-p);
			\draw[->] (rpm-p) -- node[below]{} (rpm-inf);
			\draw[->] (pareto) -- node[right]{Cor.~\ref{cor:pareto-rpm}} (rpm-gen);
			\draw[->] (rpm-p) -- node[left]{Lemma \ref{lem:ref_p-to-pareto}} (gap);
			\draw[->] (rpm-inf) -- node[sloped, above]{Lemma \ref{lem:ref-to-pareto}} (gap);
			\draw[dashed,->] (ws) -- node[right]{Thm.~\ref{thm:weighted-sum-approx}} (rpm-inf);
		\end{tikzpicture}
		\caption{A graph of the reductions of approximability. \label{fig:equivalence-graph}
	An arrow from node A to node B indicates
that whenever there is a constant factor approximation algorithm for A, there also is a
constant factor approximation algorithm for B. With exception of the dashed arrows the implication
also holds for
approximation schemes. The result on the implication from weighted sum to $\CS$ and $\RPM$ is given
in Theorem~\ref{thm:weighted-sum-approx}.}
	\end{center}
\end{figure}
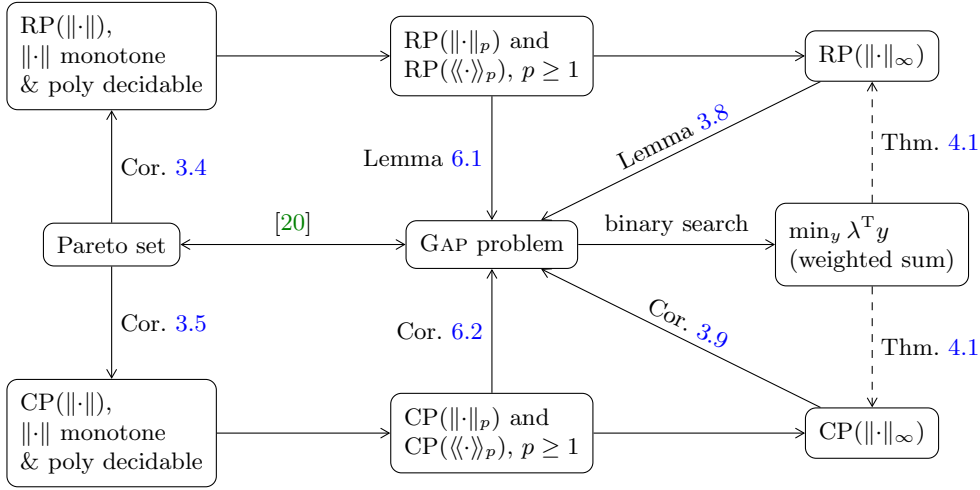

Before we discuss these reductions, that together prove Theorem~\ref{thm:approximation-equiv}, in
detail, we turn our attention to a result of independent interest, that motivates the algorithmic
use of the $\ell^p$- and cornered $p$-norms.

\paragraph{Reference point solutions and the Pareto set.} 
Gearhardt~\cite{Gearhart1979} showed that for both the $\ell^p$-norm and the
cornered $p$-norm, if $p$ tends to infinity, the distance between the Pareto set and the set of
compromise solutions with respect to all non-negative normalized weight vectors tends to zero. This
means that for discrete optimization problems with a finite set of feasible solutions, there is a
finite value $p_0$ for which the two sets coincide. We show that, under the assumption that there is
an exponential bound on the objectives, also $p$ can be chosen
in such a way that it is polynomially encodable.

\begin{theorem}\label{thm:cs-gives-all-pareto}
    Let $\refpt \in \ints_{\geq0}^k$ be a feasible reference point. If the objective vector set
$\calY$ is contained in $[0,M]^k$, then the following statements hold true.
	\begin{enumerate}
		\item If $p>\frac{\log k}{\log(1+\frac{1}{M})}$, then for any Pareto optimal solution
$y\in\calY$ there is a weight vector $\lambda\in\rats_{\geq0}^k$ such that $y$ minimizes $\norm{y -
\refpt}^\lambda_p$.
		\item If $p>kM$, then for any Pareto optimal solution $y\in\calY$ there is a weight vector
$\lambda\in\rats_{\geq0}^k$ such that $y$ minimizes $\Norm{y - \refpt}^\lambda_p$.
	\end{enumerate}
\end{theorem}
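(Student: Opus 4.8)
The plan is to exploit that for a feasible reference point we have $\refpt \le \ideal \le y$ for every $y \in \calY$, so that all difference vectors $y - \refpt$ lie in $\reals^k_{\geq 0}$ and the weighted norms only ever act on non-negative vectors. Fix the Pareto optimal solution $z$ for which we want to exhibit weights, and split the coordinates into $T = \{i : z_i > \refpt_i\}$ and $S = \{i : z_i = \refpt_i\}$. (If $T = \emptyset$ then $z = \refpt \le \ideal$, so $z = \ideal$ is attainable, dominates everything, and trivially minimizes every monotone distance; hence assume $T \neq \emptyset$.) The idea is to choose weights that normalize the distance vector of $z$ to the all-ones pattern on $T$: set $\lambda_i := 1/(z_i - \refpt_i)$ for $i \in T$, which is well defined and rational since $z_i - \refpt_i$ is a positive integer bounded by $M$, and set $\lambda_i := k+2$ for $i \in S$. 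Writing $a_i := \lambda_i(y'_i - \refpt_i)$ for a competitor $y'$, under this choice the weighted distance vector of $z$ itself is $1$ on $T$ and $0$ on $S$.

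The decisive input is the combination of Pareto optimality and integrality: for any $y' \in \calY$ with $y' \neq z$, Pareto optimality of $z$ forbids $y' \le z$, so there is a coordinate $i_0$ with $y'_{i_0} > z_{i_0}$, and since the objective values are integral, $y'_{i_0} - z_{i_0} \ge 1$. If $i_0 \in T$ this yields $a_{i_0} = (y'_{i_0} - \refpt_{i_0})/(z_{i_0} - \refpt_{i_0}) \ge 1 + 1/(z_{i_0} - \refpt_{i_0}) \ge 1 + 1/M$, i.e.\ the competitor has one weighted coordinate bounded away from $1$; if $i_0 \in S$ then $a_{i_0} \ge k+2$. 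For Part 1, I will use that the $\ell^p$-norm dominates any single coordinate, so $\norm{y' - \refpt}^\lambda_p \ge a_{i_0}$, while $\norm{z - \refpt}^\lambda_p = |T|^{1/p} \le k^{1/p}$. The hypothesis $p > \log k / \log(1+1/M)$ is exactly equivalent to $k^{1/p} < 1 + 1/M$, which makes the $i_0 \in T$ case strict; the $i_0 \in S$ case is immediate from $k+2 > k^{1/p}$. Hence $z$ is the strict minimizer.

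For Part 2 the same weights are used, but now $\Norm{y' - \refpt}^\lambda_p = \max_i a_i + \tfrac1p \sum_i a_i$, and I expect the linear (sum) term to be the main obstacle: unlike the $\ell^p$ case, a competitor can be \emph{better} than $z$ in the coordinates other than $i_0$, so its sum term may fall below that of $z$ and partially cancel the advantage gained in the max term. The key estimate is therefore to bound this cancellation. Using $\max_i a_i \ge 1 + 1/M$ (for $i_0 \in T$) against the max-term $1$ of $z$, and $\sum_i a_i \ge 0$ against the sum-term $|T| \le k$ of $z$, one gets
\[
\Norm{y' - \refpt}^\lambda_p - \Norm{z - \refpt}^\lambda_p \;\ge\; \frac1M - \frac{k}{p},
\]
which is positive precisely when $p > kM$, matching the hypothesis; the $i_0 \in S$ case again follows from the large weight, as $k+2 > 1 + k/p \ge \Norm{z - \refpt}^\lambda_p$. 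The only remaining routine checks are that the chosen $\lambda$ is rational and polynomially encodable (both $1/(z_i-\refpt_i)$ and $k+2$ are, since $\log M$ is polynomial in $|I|$) and that the degenerate coordinates in $S$ never affect $\Norm{z-\refpt}^\lambda_p$ because they contribute $\lambda_i \cdot 0 = 0$.
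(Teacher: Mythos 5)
Your proof is correct and follows essentially the same route as the paper's: the identical weight construction ($\lambda_i = 1/(z_i - \refpt_i)$ on coordinates where $z$ exceeds the reference point, a large constant on ties), the same use of Pareto optimality plus integrality to extract a coordinate $i_0$ with $a_{i_0} \geq 1 + 1/M$ (or $\geq k+2$), and the same comparison of the competitor's max term against $z$'s full norm value $1 + |T|/p \leq 1 + k/p$ (resp.\ $k^{1/p}$), with the thresholds $p > kM$ and $p > \log k / \log(1 + 1/M)$ arising exactly as in the paper. The only cosmetic differences are your tie weight $k+2$ in place of the paper's $1+k$ and your explicit (and welcome) handling of the degenerate case $T = \emptyset$.
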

\begin{proof}
	We first consider the cornered norm $\Norm{y}_p^\lambda = \max_{i\in[k]}\{\lambda_i y_i\} +
\frac{1}{p} \sum_{i\in[k]} \lambda_i y_i$. Therefore, let $p>kM$. Further let
$y\in\calY$ be a Pareto optimal cost vector, and let $I:=\{i\in[k]: y_i = \refpt_i\}$. We set the
weight vector $\lambda$ as follows:
\[ \lambda_i = \begin{cases} 1+k &\text{if } i\in I, \\ \frac{1}{y_i-\refpt_i} &\text{otherwise.}
\end{cases} \]
The weighted distance of $y$ to the reference point is
\[ \Norm{y-\refpt}_p^\lambda = \max_{i\notin I}\{\lambda_i(y_i-\refpt_i)\} + \frac{1}{p}
\sum_{i\notin I} \lambda_i(y_i-\refpt_i) = 1 + \frac{1}{p}\:(k-|I|) \leq 1+\frac{k}{p} \;. \]

Consider any $y'\in\calY\setminus\{y\}$. If there is an index $j\in I$ with $y'_j>y_j=\refpt_j$,
then, since $\calY\subseteq\ints^k$, we know that $y'_j-\refpt_j\geq1$, and therefore
\begin{align*}
	\Norm{y'-\refpt}_p^\lambda &\geq \lambda_j(y'_j-\refpt_j) + \frac{1}{p}\: \lambda_j(y'_j -
\refpt_j) \geq (1+\frac{1}{p})\cdot\lambda_j > 1+k > 1+\frac{k}{p} \;,
\end{align*}
so in this case $y$ is closer to $\refpt$ than $y'$. 

Otherwise, since $y$ is Pareto optimal, there is some $j\in[k]$ such that $y_j<y'_j$, or with
integrality, $y'_j-y_j\geq1$. On the other hand, we know that $y_j-\refpt_j \leq M$. Therefore for
this index $j$,
\[ \lambda_j(y'_j-\refpt_j) = \frac{y'_j-\refpt_j}{y_j-\refpt_j} =
\frac{y'_j-y_j+y_j-\refpt_j}{y_j-\refpt_j} = 1+ \frac{y'_j-y_j}{y_j-\refpt_j} \geq 1+ \frac{1}{M}
\;, \]
and as a consequence
\[ \Norm{y'-\refpt}_p^\lambda \geq \max_{i\in[k]}\{\lambda_i(y'_i-\refpt_i)\} \geq 1+\frac{1}{M} >
1+\frac{k}{p} \;, \]
so again $y$ is closer to $\refpt$ than $y'$.

For the $\ell^p$-norm we let $p>\frac{\log k}{\log(1+\frac{1}{M})}$, and set the weight vector
$\lambda$ as before. We get
\[ \big( \norm{y-\refpt}_p^\lambda \big)^p = \sum_{i\notin I} \left(
\frac{y_i-\refpt_i}{y_i-\refpt_i} \right)^p = k-|I| \leq k \;. \]

If there is a $j\in I$ with $y'_j>y_j=\refpt_j$, then
\[ \big( \norm{y'-\refpt}_p^\lambda \big)^p \geq \big( \lambda_j(y'_j-\refpt_j)\big)^p \geq
\lambda_j^p > k \;. \]
Otherwise, with the same choice of $j\in[k]$ as above,
\[ \big( \norm{y'-\refpt}_p^\lambda \big)^p \geq ( \lambda_j(y'_j-\refpt_j) )^p \geq \left(
1+\frac{1}{M} \right)^p > k \;, \]
where the last inequality holds by the choice of $p$. Thus again in both cases $y$ is closer to
$\refpt$ than $y'$, completing the proof.
\end{proof}

\paragraph{From approximate Pareto sets to approximating reference point solutions.}
We start the proof of Theorem~\ref{thm:approximation-equiv} by showing that from an
$\alpha$-approximate
Pareto set we can always choose an $\alpha$-approximate solution to $\RPM$.
\begin{lemma}\label{lem:pareto-to-cs} 
	Let $\refpt$ be a feasible reference point, and let $\calY_\alpha$ be an $\alpha$-approximate
Pareto set of $\calP$. Then for any monotone
norm $\norm{\cdot}$, $\min_{y\in\calY_\alpha} \rd(y) \leq \alpha \cdot \min_{y\in\calY}\rd(y)$,
where $\rd(y)=\norm{\refpt} + \norm{y-\refpt}$.
\end{lemma}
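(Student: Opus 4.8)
The plan is to take an optimal reference point solution $\ycs \in \argmin_{y\in\calY}\rd(y)$, argue that it may be assumed Pareto optimal, extract from $\calY_\alpha$ a solution $y'$ dominating a scaled copy of it, and then bound $\rd(y')$ by combining the monotonicity of the norm with the constant term $\norm{\refpt}$ that is built into $\rd$.

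First I would record that $\rd$ is monotone on the relevant domain. Feasibility of the reference point gives $\refpt \leq \ideal \leq y$ for every $y\in\calY$, so $y-\refpt \geq 0$, and monotonicity of $\norm{\cdot}$ yields $\rd(y_1)\leq\rd(y_2)$ whenever $y_1\leq y_2$. Since $\calY$ is finite, the minimum of $\rd$ is therefore attained at a Pareto optimal point: if a minimizer were dominated, the dominating vector would have no larger $\rd$-value and hence also be optimal. So I may assume $\ycs\in\calY_P$, and by definition of an $\alpha$-approximate Pareto set there is some $y'\in\calY_\alpha$ with $y'\leq\alpha\ycs$.

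The core estimate chains three facts. As $y'\in\calY$ we again have $y'-\refpt\geq 0$, and from $y'\leq\alpha\ycs$ with $\alpha\geq 1$ we get $0\leq y'-\refpt\leq\alpha\ycs-\refpt$, so monotonicity gives $\norm{y'-\refpt}\leq\norm{\alpha\ycs-\refpt}$. The decisive step is the decomposition
\[ \alpha\ycs - \refpt = \alpha(\ycs-\refpt) + (\alpha-1)\refpt, \]
which, via the triangle inequality and homogeneity, gives $\norm{\alpha\ycs-\refpt}\leq\alpha\norm{\ycs-\refpt}+(\alpha-1)\norm{\refpt}$. Adding $\norm{\refpt}$ to both sides, the slack term $(\alpha-1)\norm{\refpt}$ merges with it into exactly $\alpha\norm{\refpt}$, yielding $\rd(y')\leq\alpha(\norm{\refpt}+\norm{\ycs-\refpt})=\alpha\,\rd(\ycs)$, which is the claim.

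The step I expect to carry the real weight is this last decomposition. A naive approach bounding only the distance would require $\norm{y'-\refpt}\leq\alpha\norm{\ycs-\refpt}$, which is \emph{false} in general and is precisely the obstruction flagged in the discussion on the constant in the objective. What makes the argument work is that the extra slack $(\alpha-1)\refpt$ introduced by scaling the reference point is absorbed by the constant $\norm{\refpt}$ already present in $\rd$; everything relies on $\alpha\geq 1$ and on the feasibility assumption $\refpt\leq\ideal$, which keeps all vectors in $\reals^k_{\geq 0}$ so that monotonicity applies. The same computation carries over verbatim to the weighted norm $\norm{\cdot}^\lambda$, since it is again monotone.
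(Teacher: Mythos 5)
Your proof is correct and is essentially identical to the paper's: it uses the same w.l.o.g.\ reduction to a Pareto optimal minimizer, the same decomposition $\alpha \yref - \refpt = \alpha(\yref-\refpt) + (\alpha-1)\refpt$ via monotonicity, triangle inequality and homogeneity, and the same absorption of the slack $(\alpha-1)\norm{\refpt}$ into the constant term of $\rd$. The only difference is that you spell out in more detail why the minimizer may be assumed Pareto optimal, a step the paper dispatches with ``by monotonicity.''
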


\begin{proof}
	Let $\yref \in \calY$ be an optimal solution to $\min_{y\in\calY}\rd(y)$. By monotonicity, we can
w.l.o.g. assume $\yref$ to be Pareto optimal. Thus, there is $y'\in\calY_\alpha$ such that $y' \leq
\alpha \yref$. 
Using monotonicity and triangle inequality, we get
\[ \norm{y' - \refpt} \leq \norm{ \alpha(\yref - \refpt) + (\alpha-1)\refpt} \leq \alpha\norm{\yref
- \refpt } + (\alpha-1) \norm{\refpt} \;. \]
Reformulation yields
\[ \min_{y\in\calY_\alpha} \rd(y) \leq \rd(y') = \norm{\refpt} + \norm{y' - \refpt} \leq
\alpha (\norm{\refpt} + \norm{\yref - \refpt}) = \alpha \rd(\yref) \;. \qedhere
\]
\end{proof}

\begin{corollary}\label{cor:pareto-rpm}
If there is an $\alpha$-approximation algorithm for the Pareto set of $\calP$, then there is an
$\alpha$-approximation for $\RPM(\calP, \norm{\cdot})$, for every monotone and polynomially
decidable
norm $\norm{\cdot}$.
\end{corollary}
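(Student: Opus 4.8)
The plan is to read Lemma~\ref{lem:pareto-to-cs} algorithmically. First I would run the hypothesized $\alpha$-approximation algorithm for the Pareto set; by definition it returns, in time polynomial in the encoding length of the instance, an $\alpha$-approximate Pareto set $\calY_\alpha \subseteq \calY$, which is therefore of polynomial cardinality. The algorithm for $\RPM(\calP,\norm{\cdot})$ then simply scans $\calY_\alpha$ and outputs a solution attaining $\min_{y\in\calY_\alpha}\rd(y)$. Since the weighted norm $\norm{\cdot}^\lambda$ is again monotone (as $\lambda\in\rats^k_{\geq0}$), Lemma~\ref{lem:pareto-to-cs} applied to $\norm{\cdot}^\lambda$ gives $\min_{y\in\calY_\alpha}\rd(y)\le \alpha\min_{y\in\calY}\rd(y)$, so the returned solution is an $\alpha$-approximation. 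This is the entire approximation argument; the remaining work is purely to justify that each step is polynomial and that the definition of an $\RPM$-approximation is met.

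For the scan I would exploit that the norm is polynomially decidable. We cannot necessarily evaluate $\rd(y)$ numerically, but since $\norm{\refpt}^\lambda$ is a common additive constant, $\argmin_{y\in\calY_\alpha}\rd(y)=\argmin_{y\in\calY_\alpha}\norm{y-\refpt}^\lambda$, and a minimizer of the right-hand side is found by a linear scan over $\calY_\alpha$ using $O(|\calY_\alpha|)$ pairwise comparisons of the form $\norm{y-\refpt}^\lambda\le\norm{y'-\refpt}^\lambda$. Each such comparison is decidable in polynomial time, provided we compare nonnegative vectors: on a feasible instance $\refpt\le\ideal\le y$ for every $y\in\calY$, hence $y-\refpt\in\reals^k_{\geq0}$ and the comparison is legitimate. (The same inequality $\refpt\le y$ is what the monotonicity step inside Lemma~\ref{lem:pareto-to-cs} silently uses.)

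The delicate point, which I expect to be the main obstacle, is the feasibility caveat: for $\RPM$ the reference point is given, and an $\alpha$-approximation must accept every feasible instance while rejecting every instance with $\refpt_i>\alpha\,\ideal_i$ for some $i$. I would resolve this with the approximate ideal point $\hat\ideal_i:=\min_{y\in\calY_\alpha}y_i$, for which $\ideal_i\le\hat\ideal_i\le\alpha\,\ideal_i$ holds: the left inequality is clear from $\calY_\alpha\subseteq\calY$, and for the right one I take a Pareto optimal $y$ with $y_i=\ideal_i$ (any dominator of a coordinate-$i$ minimizer is again such a minimizer) and use the covering property to find $y'\in\calY_\alpha$ with $y'_i\le\alpha y_i$. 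The algorithm rejects precisely when $\refpt_i>\hat\ideal_i$ for some $i$. A feasible input satisfies $\refpt\le\ideal\le\hat\ideal$ and is thus accepted, while an input with $\refpt_i>\alpha\,\ideal_i\ge\hat\ideal_i$ is rejected, as required; and on every accepted instance $\refpt\le\hat\ideal\le y$ for all $y\in\calY_\alpha$, so the comparison-based scan above remains valid. Establishing this two-sided estimate on $\hat\ideal$ from the mere covering guarantee of $\calY_\alpha$ is the only non-mechanical part of the argument.
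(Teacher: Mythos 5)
Your proposal is correct and follows essentially the same route as the paper, which derives Corollary~\ref{cor:pareto-rpm} directly from Lemma~\ref{lem:pareto-to-cs} by scanning the polynomially sized $\alpha$-approximate Pareto set and selecting the minimizer via pairwise norm comparisons. Your additional handling of the feasibility check via $\hat\ideal_i := \min_{y\in\calY_\alpha} y_i$ is a valid (and correctly executed) formalization of the ``approximate distinguishing'' requirement that the paper states in its complexity caveat but leaves implicit in the corollary itself.
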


\begin{remark}
Corollary~\ref{cor:pareto-rpm} can also be proved using a result by Mittal and
Schulz~\cite{Mittal2012b}, showing how
to use an
$\alpha$-approximate Pareto set in order to obtain an $\alpha^c$-approximation algorithm for any
monotone low-rank function $h: \calY \rightarrow \reals_{\geq 0}$ fulfilling $h(\mu y) \leq \mu^c
h(y)$ for some constant $c > 0$, all $y \in \calY$, and all $\mu > 1$. Indeed, it can be shown that
the reference point objective function $r$ fulfills this requirement with $c = 1$ for every monotone
norm. Since showing this property is not less effort than proving the result directly, and for
reasons of self-containment, we have included the direct proof here.
\end{remark}

In fact, we can also approximate the compromise solution without knowing the exact ideal point: As
$\calY_\alpha$ contains an $\alpha$-approximate optimal solution for each objective, we can obtain a
reference point $\refpt$ with $\tfrac{1}{\alpha} \ideal \leq \refpt \leq \ideal$. By choosing the
point closest to $\refpt$ from $\calY_\alpha$ we get an $\alpha^2$-approximation to the compromise
solution.

\begin{corollary}\label{cor:pareto-gives-compromise}
If there is an $\alpha$-approximation algorithm for the Pareto set of $\calP$, then there is an
$\alpha^2$-approximation for $\CS(\calP, \norm{\cdot})$, for every monotone and polynomially
decidable norm $\norm{\cdot}$.
\end{corollary}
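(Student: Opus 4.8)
The plan is to convert the $\alpha$-approximate Pareto set $\calY_\alpha$ into an explicit integral reference point $\refpt$ that approximates the (unknown, possibly hard-to-evaluate) ideal point $\ideal$, and then output the point of $\calY_\alpha$ closest to $\refpt$. The crucial conceptual point is that this is an honest algorithm: it never evaluates the $\CS$-objective (which would require $\ideal$ and hence exact single-criterion optima), it only compares distances to the explicitly known point $\refpt$, which is possible in polynomial time because $\norm{\cdot}$ is polynomially decidable and $\calY_\alpha$ has polynomial size. The ideal point $\ideal$ is then used only in the analysis.

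First I would construct $\refpt$. For each coordinate $i$ set $b_i := \min_{y \in \calY_\alpha} y_i$ and let $\refpt_i := \lceil b_i / \alpha \rceil$. Since $\calY_\alpha \subseteq \calY$ we have $b_i \geq \ideal_i$, and since a Pareto optimal point attaining $\ideal_i$ is $\alpha$-covered by some element of $\calY_\alpha$ we have $b_i \leq \alpha\,\ideal_i$. Dividing by $\alpha$ and using that $\ideal_i$ is integral, the rounding yields a feasible integral reference point with $\tfrac1\alpha\ideal \leq \refpt \leq \ideal$; this two-sided bound is all the analysis will use (the upper bound makes $\refpt$ feasible, the lower bound keeps $\refpt$ close to $\ideal$).

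For the analysis, let $y^*$ be an optimal compromise solution, which I may assume to be Pareto optimal since $\rd_{\ideal}(y)=\norm{\ideal}+\norm{y-\ideal}$ is monotone in $y$ on $\{y \geq \ideal\}$, and let $\tilde y \in \calY_\alpha$ with $\tilde y \leq \alpha y^*$. Writing $\hat y$ for the output, the core is to bound $\norm{\hat y - \ideal}$. Using $\refpt \leq \ideal \leq \hat y$ and monotonicity, then optimality of $\hat y$ in $\calY_\alpha$, then $\tilde y \leq \alpha y^*$ together with $\refpt \geq \tfrac1\alpha\ideal$, I obtain
\[
\norm{\hat y - \ideal} \;\le\; \norm{\hat y - \refpt} \;\le\; \norm{\tilde y - \refpt} \;\le\; \norm{\alpha y^* - \tfrac1\alpha\ideal} \;\le\; \alpha\,\norm{y^*-\ideal} + \tfrac{\alpha^2-1}{\alpha}\,\norm{\ideal},
\]
where the last step uses $\alpha y^* - \tfrac1\alpha\ideal = \alpha(y^*-\ideal) + \tfrac{\alpha^2-1}{\alpha}\ideal$ and the triangle inequality. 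Adding the constant $\norm{\ideal}\le\rd_{\ideal}(y^*)$ gives $\rd_{\ideal}(\hat y) \le \tfrac{\alpha^2+\alpha-1}{\alpha}\norm{\ideal} + \alpha\,\norm{y^*-\ideal}$, and since $\tfrac{\alpha^2+\alpha-1}{\alpha}\le\alpha^2$ and $\alpha\le\alpha^2$ both hold for $\alpha\ge1$, this is at most $\alpha^2\rd_{\ideal}(y^*)$.

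The step I expect to be the main obstacle is exactly the final coefficient bookkeeping, i.e.\ keeping the factor at $\alpha^2$ rather than $\alpha^3$. The tempting modular argument---``$\hat y$ is an $\alpha$-approximation for $\RPM(\refpt)$ by Corollary~\ref{cor:pareto-rpm}, and the $\RPM(\refpt)$-optimum $\alpha$-approximates $\CS$''---does \emph{not} compose to $\alpha^2$: the $\RPM(\refpt)$-optimum can strictly exceed the $\CS$-optimum (lowering the reference point from $\ideal$ to $\refpt$ genuinely raises the minimum), and separately converting $\rd_{\refpt}(\hat y)$ back to $\rd_{\ideal}(\hat y)$ loses a further factor. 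One must instead bound $\norm{\hat y - \ideal}$ directly as above and let the slack in the constant term $\norm{\ideal}$ absorb the distortion caused by shifting the reference point off $\ideal$; what makes precisely $\alpha^2$ (and nothing larger) come out is the elementary inequality $\tfrac{\alpha^2+\alpha-1}{\alpha}\le\alpha^2$, equivalent to $(\alpha-1)^2(\alpha+1)\ge 0$.
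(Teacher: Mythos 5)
Your proposal is correct and takes essentially the same route as the paper: the identical reference point $\refpt_i = \lceil \tfrac{1}{\alpha}\min_{y\in\calY_\alpha} y_i\rceil$ with $\refpt \leq \ideal \leq \alpha\refpt$, the identical output (the point of $\calY_\alpha$ closest to $\refpt$), and the same triangle-inequality bookkeeping, which you merely inline rather than citing Corollary~\ref{cor:pareto-rpm}. One remark on your closing caveat: the paper \emph{does} invoke Corollary~\ref{cor:pareto-rpm}, but with the compromise optimum $\ycs$ itself as the feasible comparator for $\RPM(\calP,\norm{\cdot})$ with reference point $\refpt$ (rather than composing through the $\RPM$-optimum and a second conversion step), so this semi-modular route also lands at $\alpha^2$; only the doubly-composed version you rightly warn against degrades to $\alpha^3$.
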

\begin{proof}
Let $\calY_\alpha$ be an $\alpha$-approximation to the Pareto set. Observe that
$$\refpt_i := \ceil{ \tfrac{1}{\alpha} \min_{y\in \calY_\alpha} y_i }$$
yields a feasible reference point with $\refpt \leq \ideal \leq \alpha \refpt$. 

Now let $y' := \argmin_{y \in \calY_\alpha} \norm{\refpt} + \norm{y - \refpt}$, which by
Corollary~\ref{cor:pareto-rpm} is an $\alpha$-approximation to the reference point solution for
$\refpt$.
Thus, for the compromise solution $\ycs$, we get
$$\norm{y' - \ideal} \leq \norm{y' -\refpt} \leq \alpha\norm{\ycs - \refpt} + (\alpha
-1)\norm{\refpt} \;,$$
where the first inequality follows from monotonicity.
Observe that $\ideal - \refpt \leq (\alpha - 1)\refpt$ and thus, again by monotonicity,
\begin{align*}
\norm{\ycs - \refpt} \leq \norm{\ycs - \ideal} + (\alpha - 1)\norm{\refpt} \;.
\end{align*}
This finally yields
\begin{align*}
\norm{\ideal} + \norm{y' - \ideal} &\leq \norm{\ideal} +  \alpha(\norm{\ycs - \ideal} + (\alpha -
1)\norm{\refpt}) + (\alpha-1)\norm{\refpt}\\
&\leq \alpha^2 \norm{\ideal} + \alpha\norm{\ycs - \ideal} \;.  \qedhere
\end{align*}
\end{proof}

\paragraph{From approximating reference point solutions to an approximate Pareto set.}


In order to show the converse of the result proven above, we use a characterization
from Papadimitriou and Yannakakis~\cite{Papadimitriou2000}, stating that approximability of the
Pareto set is equivalent to
tractability of the so-called \gap\ problem. 

\begin{definition}[\gap\ Problem]
Given an instance of $\calP$ and a vector $y \in \rats^k_{\geq 0}$ as input, the \gap\ problem for
approximation factor $\alpha > 1$, denoted by $\gap(\calP, \alpha)$, is to find a solution $y' \in
\calY$ with $y' \leq y$, or to guarantee that there is no solution $y'' \in \calY$ with $y'' \leq
\frac{1}{\alpha}\:y$.
\end{definition}

\begin{theorem}[Papadimitriou \& Yannakakis, 2000]\label{thm:constant-factor-gap}
Let $\calP$ be a multicriteria discrete minimization problem, and let $\alpha>1$. If there is an
$\alpha$-approximation algorithm for the Pareto set, then $\gap(\calP,\alpha)$ is solvable in
polynomial time. If $\gap(\calP,\alpha)$ is solvable in polynomial time, then there is an
$\alpha^2$-approximation algorithm for the Pareto set.
\end{theorem}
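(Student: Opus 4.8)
The plan is to prove the two implications separately; the first is a short observation and the second is the substantial one.

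For the forward direction, suppose we have an $\alpha$-approximation algorithm for the Pareto set, and let $y \in \rats^k_{\geq 0}$ be a $\gap$ query. First I would compute an $\alpha$-approximate Pareto set $\calY_\alpha$ and scan it for a point $z \leq y$; if one exists I return it, which is a valid answer to the first branch of the gap problem. If no such point exists, I would output the guarantee that no solution dominates $\frac{1}{\alpha} y$. To see this is correct, suppose for contradiction that some $y'' \in \calY$ satisfies $y'' \leq \frac{1}{\alpha} y$. Since $\calY$ is finite there is a Pareto-optimal $w \leq y''$, and the defining property of $\calY_\alpha$ yields $z \in \calY_\alpha$ with $z \leq \alpha w \leq \alpha y'' \leq y$, contradicting the fact that no point of $\calY_\alpha$ lies below $y$. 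As $\calY_\alpha$ has polynomial size and is computed in polynomial time, this solves $\gap(\calP, \alpha)$ in polynomial time.

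For the converse, assume $\gap(\calP, \alpha)$ is solvable in polynomial time, and build an $\alpha^2$-approximate Pareto set by a geometric grid argument. Using the exponential bound $\calY \subseteq [0,M]^k$ with $M \leq 2^{\pi(|I|)}$, I would place, in each coordinate, the grid values $\{0, 1, \alpha, \alpha^2, \ldots, \alpha^J\}$ with $J = \lceil \log_\alpha M \rceil + 2$; because $\log_\alpha M$ is polynomial in $|I|$ and $k$ is fixed, the full grid has polynomially many points. For each grid point $g$ I would call $\gap(\calP, \alpha)$ on $g$ and collect into a set $A$ every solution returned. Since there are polynomially many calls to a polynomial-time oracle, $A$ is constructed in polynomial time.

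It remains to argue that $A$ is an $\alpha^2$-approximate Pareto set; this correctness argument is the crux. Fix a Pareto-optimal point $p \in \calY_P$ and let $g$ be the coordinatewise-smallest grid point with $g \geq \alpha p$. Since the positive grid values are consecutive powers of $\alpha$, one checks $g \leq \alpha^2 p$ in every coordinate (zero coordinates are handled trivially by the grid value $0$). Now $p \in \calY$ and $p \leq \frac{1}{\alpha} g$, so a solution dominating $\frac{1}{\alpha} g$ does exist; hence on input $g$ the oracle cannot legitimately take the guarantee branch and must return a solution $y' \leq g \leq \alpha^2 p$. This $y'$ lies in $A$ and $\alpha^2$-dominates $p$, as required. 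The main obstacle is precisely this interplay between the two branches of the gap oracle: I must query a grid point $g$ large enough that the existence of $p$ forces the ``find'' branch (so I need $g \geq \alpha p$), yet small enough that the returned point stays within a factor $\alpha^2$ of $p$. The geometric spacing of the grid is exactly what reconciles these two demands, and the factor $\alpha^2$ (rather than $\alpha$) is the unavoidable price of combining the grid resolution with the oracle's slack.
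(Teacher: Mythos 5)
Your proof is correct. The paper does not prove this theorem itself --- it is imported verbatim from Papadimitriou and Yannakakis~\cite{Papadimitriou2000} --- and your argument is essentially the standard one from that source: the forward direction by scanning an $\alpha$-approximate Pareto set and arguing via a Pareto-optimal witness below any hypothetical $y'' \leq \frac{1}{\alpha}y$, and the converse via a geometric grid of ratio $\alpha$ (polynomially many points since $\log_\alpha M$ is polynomial in $|I|$ and $k$ is fixed) with one \gap\ call per grid point, the factor $\alpha^2$ arising exactly as the product of the grid spacing and the oracle's slack.
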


We now show how to use an approximation algorithm for $\RPM$ to solve the \gap\ problem with a
slight increase in the approximation factor. In fact, our result does not even require the algorithm
to solve $\RPM$ for arbitrary reference points. It suffices to find a particular reference point, on
an instance-by-instance basis, that can be approximated. We formalize this by introducing two
algorithms, the first acting as an oracle computing a suitable reference point, which then can be
approximated by the second algorithm.\footnote{Note that it is not sufficient for the first
algorithm to simply return a trivial feasible reference point such as $0$, as it has to ensure that
the second algorithm can provide an approximation for this point. E.g., in the proof of
Corollary~\ref{cor:cs-to-pareto}, it needs to return a point close to the ideal point.}

\begin{lemma}\label{lem:ref-to-pareto}
Let $\alpha > 1$ and set $\beta := \tfrac{\alpha^2}{2\alpha-1}$. There is a polynomial time
algorithm for $\gap(\calP, \alpha)$, if there are two polynomial time algorithms $A_1, A_2$ such
that, 
\begin{itemize}
\item given an instance of $\calP$, algorithm $A_1$ computes a feasible reference point $\refpt \in
\ints^k_{\geq0}$ for that instance, and,
\item additionally given $\refpt$ and $\lambda \in \rats^k_{\geq0}$, algorithm $A_2$ computes in
polynomial time a solution $y' \in \calY$ with $r(y') \leq \beta \min_{z \in \calY} r(z)$, for $r(z)
= \norm{\refpt}_\infty^\lambda + \norm{z - \refpt}_\infty^\lambda$.
\end{itemize}
\end{lemma}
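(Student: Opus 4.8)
The plan is to build a $\gap(\calP,\alpha)$ algorithm out of the black boxes $A_1,A_2$. On input an instance of $\calP$ together with a target vector $y\in\rats^k_{\geq0}$, first call $A_1$ to obtain a feasible reference point $\refpt\leq\ideal$. I then choose a weight vector $\lambda$ so that the weighted Chebyshev distance from $\refpt$ measures exactly the violation of the constraint ``$z\leq y$'': on coordinates with $y_i>\refpt_i$ I set $\lambda_i:=1/(y_i-\refpt_i)$, so that $\norm{z-\refpt}_\infty^\lambda=\max_i(z_i-\refpt_i)/(y_i-\refpt_i)$ (using $z_i\geq\refpt_i$ for every $z\in\calY$), and hence $\norm{z-\refpt}_\infty^\lambda\leq1$ if and only if $z\leq y$. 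Feeding $\refpt$ and $\lambda$ to $A_2$ returns some $y'\in\calY$; the algorithm outputs $y'$ when $y'\leq y$, and otherwise reports that no $z\in\calY$ satisfies $z\leq\tfrac1\alpha y$.

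Before invoking $A_2$ I run a cheap feasibility test that both guarantees correctness of the negative answer and, crucially, controls the additive constant in the objective. If $\refpt_i>\tfrac1\alpha y_i$ for some $i$, then every $z\in\calY$ has $z_i\geq\refpt_i>\tfrac1\alpha y_i$, so no solution lies below $\tfrac1\alpha y$ and the algorithm may answer negatively; this also disposes of coordinates with $y_i=\refpt_i>0$, leaving only degenerate coordinates $y_i=\refpt_i=0$, which I handle by assigning weight $\lambda_i:=2$ (forcing $z_i=0$ by integrality whenever $\norm{z-\refpt}_\infty^\lambda\leq1$, while contributing $0$ to both terms of $r$ since $\refpt_i=0$). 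After this test we may assume $\refpt\leq\tfrac1\alpha y$, and the key consequence is a bound on the constant $C:=\norm{\refpt}_\infty^\lambda$: from $y_i-\refpt_i\geq(\alpha-1)\refpt_i$ one gets $C=\max_i\refpt_i/(y_i-\refpt_i)\leq\tfrac{1}{\alpha-1}$.

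The heart of the argument is the positive direction: I must show that whenever there is $z^*\in\calY$ with $z^*\leq\tfrac1\alpha y$, the solution $y'$ returned by $A_2$ satisfies $y'\leq y$. For such a $z^*$, the elementary inequality $(a-t)/(b-t)\leq a/b$ (valid for $0\leq t\leq a<b$) applied coordinatewise with $a=\tfrac1\alpha y_i$, $b=y_i$, $t=\refpt_i$ yields $\norm{z^*-\refpt}_\infty^\lambda\leq1/\alpha$, so $\min_{z\in\calY}r(z)\leq r(z^*)\leq C+1/\alpha$, where $r(z)=\norm{\refpt}_\infty^\lambda+\norm{z-\refpt}_\infty^\lambda$. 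The guarantee of $A_2$ then gives $r(y')\leq\beta\,(C+1/\alpha)$, and since $r(y')=C+\norm{y'-\refpt}_\infty^\lambda$, it suffices to verify $\beta(C+1/\alpha)\leq C+1$, i.e.\ $(\beta-1)C\leq1-\beta/\alpha$. This is exactly where $\beta=\alpha^2/(2\alpha-1)$ is calibrated: a short computation gives $1-\beta/\alpha=(\alpha-1)/(2\alpha-1)$ and $\beta-1=(\alpha-1)^2/(2\alpha-1)$, so the required inequality reduces to $C\leq1/(\alpha-1)$, precisely the bound from the feasibility test. Hence $\norm{y'-\refpt}_\infty^\lambda\leq1$ and therefore $y'\leq y$. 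Conversely, if $A_2$ returns $y'\not\leq y$, then by contraposition no $z\leq\tfrac1\alpha y$ exists, so the negative answer is justified; all steps are polynomial and $\lambda$ is polynomially encodable.

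I expect the main obstacle to be the additive constant $C=\norm{\refpt}_\infty^\lambda$ in the objective $r$. A pure distance-minimization reduction would preserve the factor $\alpha$, but here the constant inflates the value $A_2$ is permitted to return, so the reduction only closes if $C$ can be bounded independently of the instance. The feasibility test supplies $C\leq1/(\alpha-1)$, and the value of $\beta$ is then forced by balancing the two slacks — the distance slack $1/\alpha$ and the constant slack $C$ — exactly against the acceptance threshold $1$. The remaining effort is bookkeeping around the zero and boundary coordinates and confirming that the weighted infinity norm comparisons can be decided in polynomial time.
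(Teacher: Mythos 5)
Your proof is correct and is essentially the paper's own reduction: the same preliminary test (rejecting unless $y \geq \alpha\refpt$, which is exactly your bound $C = \norm{\refpt}_\infty^\lambda \leq \tfrac{1}{\alpha-1}$), the same weights $\lambda_i = \tfrac{1}{y_i - \refpt_i}$ (with $\lambda_i = 2$ on the degenerate coordinates $y_i = \refpt_i = 0$), and the same calibration of $\beta = \tfrac{\alpha^2}{2\alpha-1}$. The only difference is organizational: the paper tests $r(y') \leq r(y)$ and argues the negative branch directly via $\beta r(y'') \geq r(y') > r(y)$, whereas you test $y' \leq y$ (an equivalent criterion, since $r(y) = C+1$ and $r(y')\leq C+1$ iff $y'\leq y$ for integral $y'$) and establish the negative branch as the contrapositive of the completeness claim $r(z^*) \leq C + \tfrac{1}{\alpha}$.
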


\begin{proof}
Let $y \in \rats_{\geq0}^k$ be the input to the $\gap$ problem. W.l.o.g., we can assume that $y \geq
\alpha \refpt$ for the reference point $\refpt$ computed by $A_1$, as otherwise there is no $y' \leq
\tfrac{1}{\alpha}y$ and $\gap$ can be answered negatively. 

We will solve the \gap\ problem with a single call of the $\beta$-approximation algorithm for
$\RPM(\calP, \norm{\cdot}_\infty)$.
For $i \in [k]$, let $\lambda_i := \tfrac{1}{y_i - \refpt_i}$ if $y_i > \refpt_i$, and $\lambda_i :=
2$ if $y_i = \refpt_i = 0$. Let $y'$ be a $\beta$-approximation to $\min_{z \in \calY} r(z)$. 

If $r(y') \leq r(y)$, we return $y'$ as answer to the $\gap$ problem:
$$ 
\lambda_i(y'_i - \refpt_i) \leq \norm{y' - \refpt}_\infty^\lambda \leq \norm{y -
\refpt}_\infty^\lambda \leq 1$$
for all $i \in [k]$ by choice of the weights. Dividing by $\lambda_i$ yields $y'_i \leq y_i$ if $y_i
> 0$, or $y'_i \leq \tfrac{1}{2}$ if $y_i = 0$. In the latter case, integrality of $y'_i$ implies
$y'_i = 0$.

If $r(y') > r(y)$, we answer $\gap$ negatively: Let $y'' \in \calY$. We show that there is an $i \in
[k]$ with $ y''_i > \frac{1}{\alpha} y_i$. First observe that $\beta r(y'') \geq r(y') > r(y)$,
which implies 
$$\beta \norm{y'' - \refpt}_\infty^\lambda > \norm{y - \refpt}_\infty^\lambda - (\beta
- 1) \norm{\refpt}_\infty^\lambda \;.$$ 
Substituting the weights and using $y \geq \alpha\refpt$ yields
$$
\beta \frac{y''_i - \refpt_i}{y_i - \refpt_i} > \frac{y_j - \refpt_j}{y_j - \refpt_j} - (\beta -
1) \frac{\refpt_{j'}}{y_{j'} - \refpt_{j'}} \geq 1 - \frac{\beta - 1}{\alpha - 1} \;,
$$
with $i, j, j'$ being the indices of those components attaining the maxima in the norms. (If either
of the denominators is $0$, then $y''_i > \tfrac{1}{\alpha}y_i$ follows directly.) 
Using the fact that $1 < \beta \leq \alpha$, we get
$$\beta y''_i > (1 - \tfrac{\beta - 1}{\alpha - 1})(y_i - \refpt_i) + \beta \refpt_i \geq (1 -
\tfrac{\beta - 1}{\alpha - 1})y_i \;.$$
It is easy to verify that $\beta = \tfrac{\alpha^2}{2\alpha-1}$ now implies $y''_i >
\tfrac{1}{\alpha}y_i$ and the negative answer to the \gap\ problem is correct.
\end{proof}

As a particular application of Lemma~\ref{lem:ref-to-pareto}, we can show now that also an
approximation
to $\CS$ suffices to approximate the Pareto set:

\begin{corollary}\label{cor:cs-to-pareto}
Let $\alpha > 1$ and set $\beta := \sqrt{\tfrac{\alpha^2}{2\alpha-1}}$. There is a polynomial time
algorithm for $\gap(\calP, \alpha)$, if there is a $\beta$-approximation algorithm for $\CS(\calP,
\norm{\cdot}_\infty)$.
\end{corollary}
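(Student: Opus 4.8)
The plan is to deduce this from Lemma~\ref{lem:ref-to-pareto}. That lemma reduces $\gap(\calP,\alpha)$ to the existence of two polynomial-time algorithms: an oracle $A_1$ producing a feasible reference point $\refpt$, and an algorithm $A_2$ that, given $\refpt$ and any weight vector $\lambda$, is a $\tfrac{\alpha^2}{2\alpha-1}$-approximation for $\RPM(\calP,\norm{\cdot}_\infty)$ at $\refpt$. Since $\beta^2=\tfrac{\alpha^2}{2\alpha-1}$, it suffices to build both algorithms from the assumed $\beta$-approximation for $\CS(\calP,\norm{\cdot}_\infty)$ in such a way that $A_2$ attains factor $\beta^2$. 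The whole difficulty lies in the fact that the compromise oracle silently anchors its objective at the \emph{unknown} ideal point $\ideal$, whereas $A_2$ must approximate the reference point objective anchored at the \emph{known} point $\refpt$ that $A_1$ returns.

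For $A_1$ I would first pin down $\ideal$ up to a factor of $\beta$. Running the compromise oracle with the unit weight vector $e_i$ makes its objective collapse to $\rd_{\ideal,e_i}(z)=z_i$, so the returned solution $y^{(i)}$ satisfies $\ideal_i\le y^{(i)}_i\le\beta\ideal_i$; note that this does not require knowing $\ideal$. Setting $\refpt_i:=\ceil{y^{(i)}_i/\beta}$ then yields an integral, feasible reference point with $\refpt\le\ideal\le\beta\refpt$, obtained with $k$ oracle calls. Algorithm $A_2$ simply runs the compromise oracle with the requested $\lambda$ and returns its output $y'$.

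The claim to prove for $A_2$ is $\rd_{\refpt,\lambda}(y')\le\beta^2\min_{z\in\calY}\rd_{\refpt,\lambda}(z)$. I would avoid the naive route of bounding the two reference point objectives against each other in one shot, which loses a factor like $(2\beta-1)\beta>\beta^2$; instead I would split each objective into its reference-norm part and its minimal-distance part. Writing $P:=\norm{\refpt}^\lambda_\infty$, $I:=\norm{\ideal}^\lambda_\infty$, $D:=\min_z\norm{z-\ideal}^\lambda_\infty$ and $D':=\min_z\norm{z-\refpt}^\lambda_\infty$, the oracle guarantee gives $\norm{y'-\ideal}^\lambda_\infty\le(\beta-1)I+\beta D$. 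Combining this with the triangle inequality $\norm{y'-\refpt}^\lambda_\infty\le\norm{y'-\ideal}^\lambda_\infty+\norm{\ideal-\refpt}^\lambda_\infty$ and the monotonicity estimates $I\le\beta P$, $\norm{\ideal-\refpt}^\lambda_\infty\le(\beta-1)P$, and $D\le D'$ should collapse everything to $\rd_{\refpt,\lambda}(y')\le\beta^2P+\beta D\le\beta^2(P+D')=\beta^2\min_z\rd_{\refpt,\lambda}(z)$. Feeding $A_1$ and $A_2$ into Lemma~\ref{lem:ref-to-pareto} then finishes the proof.

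The step I expect to be the real obstacle is exactly the factor bookkeeping in $A_2$: the additive term $I$, the weighted ideal-point norm, does not cancel between the two objectives the way it does \emph{within} a single reference point objective, so one has to track $P$, $I$, $D$, and $D'$ separately and invoke $I\le\beta P$ together with $D\le D'$ at precisely the right moment. Any attempt to bound the ideal-anchored optimum against the $\refpt$-anchored optimum before unpacking these four quantities is too lossy and overshoots $\beta^2$, which is why the careful split is essential.
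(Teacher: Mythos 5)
Your proposal is correct and follows essentially the same route as the paper's proof: the same $A_1$ (unit-weight calls to the compromise oracle with $\refpt_i := \ceil{y^{(i)}_i/\beta}$, yielding $\refpt \leq \ideal \leq \beta\refpt$) and the same $A_2$ analysis via the triangle inequality through $\ideal$, where your bookkeeping with $P$, $I$, $D$, $D'$ and the step $D \leq D'$ is just an explicit rewriting of the paper's chain, whose final implicit step is the monotonicity bound $\norm{\yref - \ideal}_\infty \leq \norm{\yref - \refpt}_\infty$. No gaps; the factor arithmetic collapsing to $\beta^2 = \tfrac{\alpha^2}{2\alpha-1}$ checks out exactly as in the paper.
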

\begin{proof}
We show that algorithm $A_1$ and $A_2$ exist, as required by Lemma~\ref{lem:ref-to-pareto}.

Algorithm $A_1$: For every $i \in [k]$, let $\bar{y}^{(i)}$ be a $\beta$-approximation to $\min_{z
\in \calY} r_{\ideal,\bar{\lambda}}(z)$ for weights $\bar{\lambda}_i = 1$ and $\bar{\lambda}_{j} =
0$ for $j \in [k] \setminus \{i\}$. Then $\refpt_i := \ceil{\tfrac{1}{\beta}\bar{y}^{(i)}_i}$
defines a
feasible reference point with $\refpt \leq \ideal \leq \beta \refpt$.

Algorithm $A_2$: Let $\lambda \in \rats^k$. Let $y'$ be a $\beta$-approximation to $\min_{z \in
\calY} r_{\ideal, \lambda}(z)$, and let $\yref = \argmin_{z \in \calY}
r_{\refpt,\lambda}(z)$ be an optimal solution to $\RPM$. We show that $r_{\refpt,\lambda}(y') \leq
\beta^2 r_{\refpt,\lambda}(\yref)$, which concludes the proof.
\begin{align*}
\norm{\refpt}_\infty + \norm{y' - \refpt}_\infty & \leq \norm{\refpt}_\infty + \norm{y' -
\ideal}_\infty + \norm{\ideal - \refpt}_\infty \\
& \leq \beta \norm{\refpt}_\infty + \norm{y' -
\ideal}_\infty\\
& \leq \beta \norm{\refpt}_\infty + \beta \norm{\yref - \ideal}_\infty +
(\beta-1)\norm{\ideal}_\infty\\
& \leq \beta^2 \norm{\refpt}_\infty + \beta \norm{\yref - \ideal}_\infty \;. \qedhere
\end{align*}
\end{proof}

Corresponding versions of Lemma~\ref{lem:ref-to-pareto} and Corollary~\ref{cor:cs-to-pareto} with
the same
approximation factors can be shown for the $\norm{\cdot}_p$- and $\Norm{\cdot}_p$-norms. These
results have been moved to the appendix.

\begin{remark}
	In order to show the result for approximation schemes, let $\alpha=1+\eps$ and $\beta=1+\delta$.
In both cases it suffices to choose $\delta$ such that $1/\delta \in \bigo(1/\eps^2)$, maintaining
polynomiality in $1/\delta$.
\end{remark}


\section{Approximating Reference Point Solutions}\label{sec:application}

In this section, we discuss several general techniques for obtaining approximation algorithms for
$\RPM(\mathcal{P}, \norm{\cdot})$. We start with a simple constant factor approximation based on the
weighted sum method, then turn our attention to convex optimization and LP rounding, and close with
approximation schemes arising from pseudopolynomial algorithms.

\paragraph{Approximation by weighted sum.}
Although not all Pareto optimal solutions can be reached by minimizing a weighted sum, this method
still provides an easy way to transfer approximability results from the single-criterion world to
reference point methods.

In~\cite{Diakonikolas2008}, Diakonikolas and Yannakakis show that an approximate \emph{convex}
Pareto set can be computed, if the weighted sum can be optimized. Our results show that via
reference point solutions, also the Pareto set can be approximated. The approximation factor,
however, increases by a factor of $k$.

\begin{theorem}\label{thm:weighted-sum-approx}
If there is an $\alpha$-approximation for $\min_{y \in \calY} \lambda^Ty$, then there is a
$k\alpha$-approximation for $\RPM(\calP, \norm{\cdot}_\infty)$.
\end{theorem}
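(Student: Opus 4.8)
The plan is to bound the $\ell^\infty$ reference point objective of the weighted-sum solution against the optimal one by exploiting the standard norm inequality relating $\norm{\cdot}_\infty$ to the weighted linear functional $\lambda\transp y$. The key observation is that for any nonnegative vector $z$ and any nonnegative weights $\lambda$, one has $\norm{(\lambda_1 z_1,\dots,\lambda_k z_k)}_\infty \le \sum_i \lambda_i z_i \le k\,\norm{(\lambda_1 z_1,\dots,\lambda_k z_k)}_\infty$. This sandwich is exactly where the factor $k$ will come from, matching the claimed loss in the approximation guarantee. So I would first fix the reference point $\refpt$ and weight vector $\lambda$ from the $\RPM$ input, let $\yref$ be an optimal reference point solution, and express both the weighted sum and the $\ell^\infty$ distance in terms of the shifted vectors $y-\refpt$.

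The main technical subtlety is that the objective $\rd_{\refpt,\lambda}(y)=\norm{\refpt}^\lambda_\infty+\norm{y-\refpt}^\lambda_\infty$ carries the additive constant $\norm{\refpt}^\lambda_\infty$, and (as the paper stresses in the ``constant in the objective'' paragraph) this constant must be respected for the approximation ratio to make sense. Concretely, I would run the given $\alpha$-approximation for the weighted sum on the shifted instance, i.e.\ approximately minimize $\sum_i \lambda_i (y_i - \refpt_i)$ over $y\in\calY$ (note $y_i-\refpt_i\ge 0$ since $\refpt\le\ideal\le y$, so the shifted objective is nonnegative and the single-criterion approximation applies). Let $y'$ be the returned solution. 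The plan is then the chain
\begin{align*}
\norm{y'-\refpt}^\lambda_\infty &\le \sum_i \lambda_i(y'_i-\refpt_i) \le \alpha \sum_i \lambda_i(\yref_i-\refpt_i) \le k\alpha\,\norm{\yref-\refpt}^\lambda_\infty,
\end{align*}
where the first step uses $\norm{\cdot}_\infty\le$ (sum), the middle step is the $\alpha$-approximation guarantee applied to the minimum of the shifted weighted sum (using that $\yref$ is one feasible competitor), and the last step uses (sum)$\le k\norm{\cdot}_\infty$.

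Finally I would fold in the additive constant. Since $\rd_{\refpt,\lambda}(y') = \norm{\refpt}^\lambda_\infty + \norm{y'-\refpt}^\lambda_\infty$ and the above gives $\norm{y'-\refpt}^\lambda_\infty \le k\alpha\,\norm{\yref-\refpt}^\lambda_\infty$, I would combine with the trivial bound $\norm{\refpt}^\lambda_\infty \le k\alpha\,\norm{\refpt}^\lambda_\infty$ (as $k\alpha\ge 1$) to conclude
\[
\rd_{\refpt,\lambda}(y') \le k\alpha\bigl(\norm{\refpt}^\lambda_\infty + \norm{\yref-\refpt}^\lambda_\infty\bigr) = k\alpha\,\rd_{\refpt,\lambda}(\yref),
\]
which is the desired $k\alpha$-approximation. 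The part I expect to require the most care is verifying that the $\alpha$-approximation for $\min_y \lambda\transp y$ can legitimately be applied to the shifted objective $\lambda\transp(y-\refpt)$ while preserving the same factor $\alpha$; this hinges on $\lambda\transp\refpt$ being a constant and on nonnegativity of the shifted costs, so that minimizing $\lambda\transp(y-\refpt)$ is equivalent to minimizing $\lambda\transp y$ over the same feasible set and the approximation ratio is unaffected. Everything else is the elementary two-sided norm comparison.
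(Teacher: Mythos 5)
There is a genuine gap, and it sits exactly at the step you flagged as needing care. The middle inequality of your chain, $\sum_i \lambda_i(y'_i-\refpt_i) \le \alpha \sum_i \lambda_i(\yref_i-\refpt_i)$, requires an $\alpha$-approximation for the \emph{shifted} weighted sum $\min_{y\in\calY}\lambda\transp(y-\refpt)$, and the hypothesis does not provide one. Subtracting a constant preserves the set of minimizers but not multiplicative approximation guarantees: from $\lambda\transp y' \le \alpha\,\lambda\transp \yref$ you only get $\lambda\transp(y'-\refpt) \le \alpha\,\lambda\transp(\yref-\refpt) + (\alpha-1)\lambda\transp\refpt$, and the additive term $(\alpha-1)\lambda\transp\refpt$ can dwarf the shifted optimum. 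In the extreme case (already at $k=1$ with $\refpt=\ideal$) the shifted optimum is $0$, so an $\alpha$-approximation for the shifted objective would have to solve the single-criterion problem \emph{exactly} --- this is precisely the obstruction in the paper's ``constant in the objective'' paragraph, which you cite but whose moral cuts against your use of it: the whole point there is that the distance $\norm{y-\refpt}$ is inapproximable for NP-hard problems, and only the objective with the constant $\norm{\refpt}$ retained admits approximation. There is also no well-defined ``shifted instance'' to feed the given algorithm: the shift is by a fixed vector in objective space, uniform over all solutions, and for the combinatorial problems the theorem targets (costs of the form $Cx$, shortest paths, etc.) a uniform decrease of every solution's total cost is in general not representable as a cost function of the same problem class.

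The repair is the paper's actual proof, and it stays close to your framing: run the $\alpha$-approximation on the \emph{unshifted} objective $\min_{y\in\calY}\lambda\transp y$ (which is literally what the hypothesis gives), and carry the shift through the analysis rather than through the algorithm. Concretely, with $y'$ the weighted-sum solution,
\begin{align*}
\norm{y'-\refpt}^\lambda_\infty \;\le\; \lambda\transp(y'-\refpt) \;\le\; \alpha\,\lambda\transp(\yref-\refpt) + (\alpha-1)\lambda\transp\refpt \;\le\; \alpha k \norm{\yref-\refpt}^\lambda_\infty + (\alpha-1)k\norm{\refpt}^\lambda_\infty\;,
\end{align*}
using your two-sided comparison of $\lambda\transp z$ with $\norm{z}^\lambda_\infty$ on the nonnegative vectors $\yref-\refpt$ and $\refpt$. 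Adding the constant $\norm{\refpt}^\lambda_\infty$ and using $1+(\alpha-1)k \le k\alpha$ then yields $r(y') \le k\alpha\, r(\yref)$: the surplus $(\alpha-1)k\norm{\refpt}^\lambda_\infty$ is absorbed by the constant in the objective, which is exactly why the constant must be kept. Your norm sandwich and your handling of the additive term at the end are correct; only the intermediate chain must be rerouted through the unshifted weighted sum.
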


\begin{proof}
Let $\refpt$ be a feasible reference point and $\lambda \in \rats^k_{\geq0}$. Let $\yref =
\argmin r_{\refpt,\lambda}(y)$ and let $y' \in \calY$ be an $\alpha$-approximation to
$\min_{y \in \calY} \lambda^Ty$. Then
\begin{align*}
\norm{\refpt}^\lambda_\infty + \norm{y' - \refpt}^\lambda_\infty & \leq \norm{\refpt}^\lambda_\infty
+ \lambda^T(y' - \refpt) \\
& \leq \norm{\refpt}^\lambda_\infty + \alpha \lambda^T \yref - \lambda^T \refpt\\
& \leq \norm{\refpt}^\lambda_\infty + \alpha \lambda^T (\yref - \refpt) + (\alpha-1) \lambda^T\refpt\\
& \leq \norm{\refpt}^\lambda_\infty + \alpha k \norm{\yref - \refpt}^\lambda_\infty + (\alpha-1) k \norm{\refpt}^\lambda_\infty \\
& \leq k\alpha (\norm{\refpt}^\lambda_\infty + \norm{\yref - \refpt}^\lambda_\infty) \;. \qedhere
\end{align*}
\end{proof}
In combination with Theorem~\ref{thm:approximation-equiv}, this implies the following result.

\begin{corollary}\label{cor:single-approx-to-pareto-approx}
For any multicriteria combinatorial minimization problem $\calP$ with a constant number of linear
objectives, there is a constant factor approximation for the Pareto set of $\calP$, if and only if
there is a constant factor approximation for the single-criterion version of $\calP$.
\end{corollary}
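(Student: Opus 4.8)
The plan is to prove the two directions of the equivalence by composing the results already at hand, exploiting that $\calP$ has a \emph{constant} number $k$ of objectives and that all objectives are linear, so that any non-negative combination $\lambda^{T}c$ is again a valid single-criterion objective. Throughout, I read ``single-criterion version of $\calP$'' as the weighted-sum problem $\min_{y\in\calY}\lambda^{T}y$ for an arbitrary $\lambda\in\rats^{k}_{\geq0}$, i.e.\ the optimization of a single non-negative linear objective over $\calX$; pinning down this reading is the one conceptual point that needs to be made explicit, since both directions hinge on it.

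For the direction from the Pareto set to the single criterion I would argue directly, without new machinery. Suppose there is an $\alpha$-approximation algorithm for the Pareto set, producing an $\alpha$-approximate set $\calY_\alpha$. Given $\lambda\geq0$, let $\yref$ minimize $\lambda^{T}y$ over $\calY$; since $\lambda\geq0$ we may replace $\yref$ by a Pareto optimal point dominating it without increasing $\lambda^{T}\yref$, so we may assume $\yref$ Pareto optimal. Then there is $y'\in\calY_\alpha$ with $y'\leq\alpha\yref$, and non-negativity of $\lambda$ turns this coordinatewise domination into the scalar bound $\lambda^{T}y'\leq\alpha\,\lambda^{T}\yref$. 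Hence returning $\argmin_{y\in\calY_\alpha}\lambda^{T}y$ is an $\alpha$-approximation to the single-criterion problem. It is worth stressing that $\lambda\geq0$ is exactly what makes this step work.

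For the converse I would chain two earlier results. If there is an $\alpha$-approximation for $\min_{y\in\calY}\lambda^{T}y$, then Theorem~\ref{thm:weighted-sum-approx} yields a $k\alpha$-approximation for $\RPM(\calP,\norm{\cdot}_\infty)$, and since $k$ is fixed this factor $k\alpha$ is again a constant. Feeding this into the equivalence of Theorem~\ref{thm:approximation-equiv}---specifically the implication from $\RPM(\calP,\norm{\cdot}_\infty)$ back to the Pareto set, which internally runs through Lemma~\ref{lem:ref-to-pareto} and the \gap-characterization of Theorem~\ref{thm:constant-factor-gap}---produces a constant-factor approximation for the Pareto set, completing the cycle.

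I do not expect a genuine obstacle, as the statement is assembled entirely from the preceding theorems; the only thing to monitor is that the approximation factor, while it is inflated at each stage (by the factor $k$ in Theorem~\ref{thm:weighted-sum-approx}, by the bookkeeping $\beta=\tfrac{\alpha^{2}}{2\alpha-1}$ of Lemma~\ref{lem:ref-to-pareto}, and by the squaring in Theorem~\ref{thm:constant-factor-gap}), never stops being a constant, which is precisely---and only---what the corollary asserts.
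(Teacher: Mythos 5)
Your proof is correct, and for the substantive direction it is exactly the paper's argument: the paper proves this corollary simply by remarking that Theorem~\ref{thm:weighted-sum-approx} (an $\alpha$-approximation for $\min_{y\in\calY}\lambda\transp y$ gives a $k\alpha$-approximation for $\RPM(\calP,\norm{\cdot}_\infty)$, a constant since $k$ is fixed) combines with the equivalence of Theorem~\ref{thm:approximation-equiv} (internally Lemma~\ref{lem:ref-to-pareto} plus Theorem~\ref{thm:constant-factor-gap}) to yield a constant-factor approximate Pareto set, which is precisely your chain, including the observation that the factor inflations $k\alpha$, $\beta=\tfrac{\alpha^2}{2\alpha-1}$, and the squaring all preserve constancy. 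The only point where you genuinely deviate is the converse direction: the paper (see Figure~\ref{fig:equivalence-graph}) routes ``Pareto set $\rightarrow$ weighted sum'' through the \gap\ problem and binary search, whereas you argue directly that $\argmin_{y\in\calY_\alpha}\lambda\transp y$ over an $\alpha$-approximate Pareto set $\calY_\alpha$ is an $\alpha$-approximation, after replacing the optimum by a Pareto optimal point below it. Your route is more elementary (no binary search, no value bounds needed) and loses no factor, and the two facts you flag---that $\lambda\geq0$ justifies both passing to a dominating Pareto point and converting $y'\leq\alpha\yref$ into $\lambda\transp y'\leq\alpha\lambda\transp\yref$---are exactly the steps that require checking. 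Your explicit identification of the ``single-criterion version'' with the weighted-sum problem $\min_{y\in\calY}\lambda\transp y$ (legitimate because the objectives are linear with non-negative coefficients, so $\lambda\transp C$ is again a valid single-criterion cost vector) matches the paper's own implicit usage, e.g.\ in the corollary following Theorem~\ref{thm:aissi-ext}.
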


\paragraph{Convex optimization with linear objectives.}
For optimization problems where the solution space is convex and the objectives are linear (e.g.
linear programming), we can compute reference point solutions w.r.t.~the cornered norm exactly:
\begin{theorem}[Reference point solutions for convex optimization] \label{thm:lp}
	For a multicriteria minimization problem $\min_{x\in\calX} Cx$, with a convex solution set
$\calX\subseteq\reals^n$  for which a polynomial separation algorithm
exists, and a cost matrix $C\in\rats^{k\times n}$, the problem $\min_{x\in\calX}\rd(Cx)$ with
$\rd(y) = \Norm{\refpt}_p + \Norm{y-\refpt}_p$, for any feasible reference point $\refpt$ and any
$p\in[1,\infty]$, is again a convex optimization problem with linear objectives and thus solvable in
polynomial time.
\end{theorem}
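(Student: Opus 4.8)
The plan is to reduce the problem to linear optimization over a convex set and then invoke the equivalence of separation and optimization. First, $\Norm{\refpt}_p$ is a constant, so it suffices to minimize $\Norm{Cx-\refpt}_p$ over $\calX$. Because $\refpt$ is feasible, $\refpt \le \ideal$, and since $\ideal_i = \min_{x\in\calX}(Cx)_i$ we have $(Cx)_i \ge \ideal_i \ge \refpt_i$ for every $x \in \calX$ and every $i\in[k]$. Thus the argument of every absolute value inside the cornered norm is nonnegative, and for $p \in [1,\infty)$ the objective equals $\max_{i\in[k]}((Cx)_i-\refpt_i) + \tfrac1p\sum_{i\in[k]}((Cx)_i-\refpt_i)$; for $p=\infty$ only the $\max$-term remains.

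Second, I would linearize the $\max$. The sum-term is already linear in $x$. Introducing an epigraph variable $t$ together with the $k$ linear constraints $t \ge (Cx)_i - \refpt_i$ lets me replace $\max_{i}((Cx)_i - \refpt_i)$ by $t$ in the objective. Since $t$ enters the minimization objective with positive coefficient, at an optimum $t$ equals the maximum, so the reformulation
\[ \min\ \ t + \tfrac1p \sum_{i\in[k]}\bigl((Cx)_i-\refpt_i\bigr) \quad\text{s.t.}\quad x\in\calX,\ \ t \ge (Cx)_i - \refpt_i\ \ (i\in[k]) \]
is exact. Its objective is linear in $(x,t)$, and its feasible region is convex, being the intersection of $\calX\times\reals$ with $k$ halfspaces.

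Third, I would supply a polynomial separation oracle for the augmented region and conclude by the ellipsoid method. Given a candidate $(x,t)$, I first run the assumed separation algorithm for $\calX$; a hyperplane separating $x$ from $\calX$ in $\reals^n$ lifts to one in $\reals^{n+1}$ that is independent of $t$. If $x\in\calX$, I check the $k$ explicit inequalities directly and return any violated one as a separating hyperplane. With a linear objective and a polynomial separation oracle, the problem is solvable in polynomial time. The case $p=\infty$ is identical after deleting the sum-term and the factor $\tfrac1p$, and weights $\lambda\in\rats^k_{\geq0}$ change nothing, since replacing each $(Cx)_i-\refpt_i$ by $\lambda_i((Cx)_i-\refpt_i)$ keeps every expression linear in $x$.

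The step I expect to require the most care is the justification that feasibility of $\refpt$ fixes the sign of $Cx-\refpt$ and thereby removes the absolute values: this is precisely what keeps the objective linear rather than merely convex. Without $\refpt\le\ideal$ one would have to model each $|(Cx)_i-\refpt_i|$ by a pair of constraints, which still yields a linear program but would obscure the clean statement that the result is \emph{again} a convex optimization problem with linear objectives.
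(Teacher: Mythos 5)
Your proposal is correct and follows essentially the same route as the paper's proof: an epigraph variable for the $\max$-term (the paper's $\Delta$, your $t$), a linear objective over the augmented convex set $\calX\times\reals$ intersected with $k$ halfspaces, and the equivalence of separation and optimization to conclude polynomial solvability. The only difference is that you spell out the step the paper leaves implicit---that feasibility $\refpt\leq\ideal$ forces $Cx\geq\refpt$ and so removes the absolute values from the cornered norm---which is a correct and welcome clarification rather than a different argument.
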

\begin{proof} The problem can be formulated as follows:
	\begin{align*}
		\min_{x\in\calX}{\rd}(Cx) = \norm{\refpt}_\infty + \left\{ 
		\begin{array}{ll}
			\min\quad &\Delta + \frac{1}{p}\cdot \vecofones\transp Cx \\[.5em]
			\text{s.t.} & Cx - \refpt \leq \Delta\cdot\vecofones \\
			&x\in\calX \\
			&\Delta\in\reals \;.
		\end{array} \right.
	\end{align*}
	Here, $\vecofones$ denotes the vector of ones of corresponding dimension. In the optimum, $\Delta
=
\max_i \{ c_{i\cdot}x - \refpt_i \}$, and therefore the two programs are equivalent. The objective
is clearly linear, and the solution space is $\calX \times \reals$, intersected with the halfspaces
defined by the inequalities $c_{i\cdot}x - \refpt_i \leq \Delta$, $i\in[k]$, and thus convex.

Since we can solve the separation problem for the original set $\calX$, we can also solve it for
the set with the added inequalities. By the equivalence of separation and optimization
(Gr\"otschel et al.~\cite{Groetschel1981}) we can solve $\min_{x\in\calX}{\rd}(Cx)$ in
polynomial
time.
\end{proof}

\begin{remark} A special case of convex optimization problems are linear programs (LPs).
From our result it follows that we can exactly compute reference point solutions for multicriteria
LPs. It also yields a nice alternative proof of the existence of an FPTAS for the Pareto set, which
has first been proven in Papadimitriou and Yannakakis~\cite{Papadimitriou2000} using an involved
geometric argument. 

A different argument for the approximability of Pareto sets of linear programs has independently
been noted by Mittal and Schulz~\cite{Mittal2012}.
\end{remark}

\begin{corollary}\label{cor:approx-for-lp}
Let $\mathcal{P}$ be a multicriteria minimization problem with convex feasible set and linear
objective functions. Assume that there is a positive polynomial $\pi$ such that $\calY \subseteq \{y
\in
\rats^k:~ y_i \geq \tfrac{1}{\pi(|I|)}\ \forall i \in [k]\}$, where $|I|$ is the encoding length of
the instance. If there is a polynomial time algorithm for the separation problem of $\mathcal{P}$,
then there is an FPTAS for the Pareto set. 
\end{corollary}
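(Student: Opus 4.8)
The plan is to derive the FPTAS for the Pareto set from the exact reference point algorithm of Theorem~\ref{thm:lp} via the equivalence established in Theorem~\ref{thm:approximation-equiv}. First I would observe that Theorem~\ref{thm:lp} solves $\RPM(\calP, \Norm{\cdot}_p)$ for the cornered $p$-norm \emph{exactly}, for every $p \in [1,\infty]$ and every feasible reference point and weight vector. The convex program in that proof has objective $\Delta + \tfrac{1}{p}\vecofones\transp Cx$, whose only $p$-dependent datum is the coefficient $\tfrac{1}{p}$ of encoding length $\bigo(\log p)$; hence, invoking the equivalence of separation and optimization exactly as in Theorem~\ref{thm:lp}, the running time is polynomial in the input size and $\log p$. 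An exact algorithm is a fortiori an FPTAS, so the family of cornered-norm solvers indexed by $p \geq 1$ meets precisely the running-time requirement of the fourth bullet of Theorem~\ref{thm:approximation-equiv}.

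The remaining task is to legitimately apply Theorem~\ref{thm:approximation-equiv}, whose reductions (e.g.\ Lemma~\ref{lem:ref-to-pareto}, its $\Norm{\cdot}_p$-analog, and Theorem~\ref{thm:cs-gives-all-pareto}) are stated for problems with integer objective vectors, whereas here $\calY \subseteq \rats^k$. I would bridge this gap by rescaling: multiplying every objective by $N := \pi(|I|)$ yields an equivalent instance with $N\calY \subseteq [1, NM]^k$, where $M \leq 2^{\pi'(|I|)}$ is the exponential bound guaranteed by the preliminaries, so that $NM$ is still exponentially bounded. The crucial point is that the positive lower bound $y_i \geq 1$ after scaling plays exactly the role that integrality played in Section~\ref{sec:general}: in Lemma~\ref{lem:ref-to-pareto} integrality is used only to exclude the degenerate coordinate $y_i = 0$, and in Theorem~\ref{thm:cs-gives-all-pareto} it guarantees a minimal gap between distinct Pareto-relevant values. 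Both are supplied by the lower-bound hypothesis, so a threshold $p = \bigo(kNM)$ — of polynomial encoding length — suffices, and the \gap\ reduction via Theorem~\ref{thm:constant-factor-gap} goes through unchanged.

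Putting these together, the cornered-norm solver feeds the implication from the fourth bullet to the first bullet of Theorem~\ref{thm:approximation-equiv}, producing a $(1+\eps)$-approximate Pareto set in time polynomial in the input size and $1/\eps$. I expect the main obstacle to be the second paragraph: making precise that the assumption $y_i \geq \tfrac{1}{\pi(|I|)}$ is a faithful substitute for integrality throughout the reductions, rather than any single calculation. Once that correspondence is checked, the positive lower bound prevents the multiplicative approximation from degenerating near zero, while the exponential upper bound keeps all relevant parameters — in particular the exponent $p$ — polynomially encodable, which is all the FPTAS requires.
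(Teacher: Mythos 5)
Your skeleton (exact reference point solver from Theorem~\ref{thm:lp} $\to$ \gap\ $\to$ Theorem~\ref{thm:constant-factor-gap}) matches the paper's, but your second paragraph has a genuine gap: you route through the finite-$p$ cornered norms, i.e.\ the fourth bullet of Theorem~\ref{thm:approximation-equiv}, whose underlying reduction is Lemma~\ref{lem:ref_p-to-pareto} --- and there integrality of $\calY$ is \emph{not} used merely to exclude the coordinate $y_i=0$. In the positive case of that lemma, even an exact solver only yields $\lambda_i(y'_i-\refpt_i)\leq\Norm{y-\refpt}_p^\lambda\leq 1+\tfrac{k}{p}$, hence $y'_i\leq(1+\tfrac{k}{p})y_i$; the exact conclusion $y'\leq y$ demanded by a positive \gap\ answer is then forced by rounding, using that $y'_i$ is an \emph{integer} while the \gap\ input $y$ has denominators at most $q$ and $p\geq 2kMq$. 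Your proposed substitute --- $y_i\geq 1$ after rescaling by $N=\pi(|I|)$ --- does not supply this: a lower bound on the values gives no lower bound on the gap between \emph{distinct} values (your parallel claim about Theorem~\ref{thm:cs-gives-all-pareto} is false for the same reason, although that theorem is not part of the reduction chain and is not needed here). In the convex setting $\calY=C\calX$ is typically a continuum with no denominator bound at all, so no finite choice of $p$ absorbs the multiplicative slack $1+\tfrac{k}{p}$, and the positive case of Lemma~\ref{lem:ref_p-to-pareto} fails. (One could try to patch this by bounding the encoding length of the rational point the solver outputs and choosing $p$ accordingly, but that is an additional argument you did not make, and it is unnecessary.)

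The repair is to take $p=\infty$ in Theorem~\ref{thm:lp} --- which the theorem explicitly allows, and which is exactly the paper's proof: the cornered norm degenerates to $\norm{\cdot}_\infty$, and the exact solver for $\RPM(\calP,\norm{\cdot}_\infty)$ feeds Lemma~\ref{lem:ref-to-pareto}, where the positive case gives $\lambda_i(y'_i-\refpt_i)\leq \norm{y-\refpt}_\infty^\lambda\leq 1$ and hence $y'_i\leq y_i$ with \emph{no} slack whenever $y_i>0$. There integrality really is needed only for $y_i=0$, and the hypothesis $y_i\geq\tfrac{1}{\pi(|I|)}$ excludes that case (if $y_i=0$ the \gap\ answer is negative outright) --- this is precisely the observation you correctly quoted but attached to the wrong lemma. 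Since an exact algorithm is a $\beta$-approximation for $\beta=\tfrac{\alpha^2}{2\alpha-1}$ at every $\alpha=1+\eps>1$, this solves $\gap(\calP,1+\eps)$ for all $\eps>0$ with running time independent of $\eps$, and Theorem~\ref{thm:constant-factor-gap}, whose prerequisites are met thanks to the polynomial lower bound on the objective values, yields the FPTAS for the Pareto set.
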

\begin{remark}[Convex sets and the integrality assumption]
Note that our general integrality assumption $\calY \subseteq \ints_{\geq0}^k$ for discrete
optimization problems, introduced in Section \ref{sec:preliminaries}, does not hold for the case of
convex
optimization problems in Theorem~\ref{thm:lp} and Corollary~\ref{cor:approx-for-lp}. However, by
assuming $y_i \geq
\tfrac{1}{\pi(|I|)}$ for all occurring objective values in Corollary~\ref{cor:approx-for-lp}, we
ensure that
all prerequisites stated in Papadimitriou and Yannakakis~\cite{Papadimitriou2000} for
Theorem~\ref{thm:constant-factor-gap} are
still
fulfilled. Furthermore observe that, while our proof of Lemma~\ref{lem:ref-to-pareto} also assumed
integral objectives, we used this integrality assumption only for showing that if the solution $y'$
computed by algorithm $A_2$ fulfills $r(y') \leq r(y)$, then $y_i = 0$ implies $y'_i = 0$. However,
we can ignore this case, as by our assumption all objectives are strictly positive and thus $y_i =
0$ already implies that the answer to \gap\ is negative. Thus, both Lemma~\ref{lem:ref-to-pareto}
and
Theorem~\ref{thm:constant-factor-gap} are still valid for convex optimization problems fulfilling
the
condition of Corollary~\ref{cor:approx-for-lp}.
\end{remark}

\begin{proof}[Proof of Corollary~\ref{cor:approx-for-lp}]
By Theorem~\ref{thm:lp}, we can compute an optimal solution to $\RPM(\calP, \norm{\cdot}_\infty)$
for any
reference point in polynomial time. Thus, by Lemma~\ref{lem:ref-to-pareto}, we can solve
$\gap(\calP,
1+\varepsilon)$ in polynomial time for any $\varepsilon > 0$ (with running time independent of
$\varepsilon$), which by Theorem~\ref{thm:constant-factor-gap} gives an FPTAS for the Pareto set.
\end{proof}

\paragraph{Approximation through LP rounding.}
One of the most successful techniques for the design of approximation algorithms for
integer problems is \emph{LP rounding}: The problem is formulated as a linear integer program (IP),
then the integrality constraints are relaxed and the resulting LP is solved, and finally the
optimal fractional solution is rounded to a feasible integral solution, losing only a certain
factor in the objective.

Many important LP rounding algorithms are \emph{oblivious} in the sense that the rounding procedure
is independent of the cost function. We show that these algorithms can be adapted such that they
also solve the reference point version of the problem, with the same approximation factor.
\begin{theorem}\label{thm:lp-rounding}
	Consider a multicriteria minimization problem $\min_{x\in\calX} Cx$ with a solution
set $\calX\subseteq\ints_{\geq0}^n$ and a cost matrix $C\in\rats^{k\times n}$. If there exist 
\begin{itemize}
	\item a convex relaxation $\calX'$ for which the separation problem can be solved in polynomial
time, 
	\item and a polynomial time rounding procedure $\calR: \calX' \rightarrow \calX$ such that
$c\transp\calR(x') \leq \alpha c\transp x'$ for all $c\in\rats^n_{\geq0}$ and all $x'\in\calX'$,
\end{itemize}
then for any feasible reference point $\refpt$ and any $p\in[1,\infty]$ there is an
$\alpha$-approximation algorithm for $\min_{x\in\calX} \rd(Cx)$, with
$\rd(y)=\Norm{\refpt}_p+\Norm{y-\refpt}_p$.
\end{theorem}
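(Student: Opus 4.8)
The plan is to solve the reference point problem over the relaxation \emph{exactly} via Theorem~\ref{thm:lp}, then round obliviously, and finally transfer the componentwise guarantee of the rounding to the cornered-norm objective in the same way as in Lemma~\ref{lem:pareto-to-cs}. First I would pass from $\calX'$ to the smaller set $\calX'' := \calX' \cap \{x : Cx \geq \refpt\}$. Since the extra constraints are merely the $k$ linear inequalities $c_{i\cdot}x \geq \refpt_i$, the set $\calX''$ is again convex and its separation problem is still solvable in polynomial time (separate $\calX'$, or otherwise return a violated one of these inequalities). Moreover $\calX \subseteq \calX''$, because every genuine solution $x \in \calX$ satisfies $Cx = c(x) \geq \ideal \geq \refpt$ as $\refpt$ is feasible. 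Hence $\calX''$ is a convex relaxation of $\calX$ on which $\refpt$ is still a feasible reference point (every $x \in \calX''$ has $Cx \geq \refpt$), so Theorem~\ref{thm:lp} applies and computes an exact fractional optimizer $x^*$ of $\min_{x \in \calX''} \rd(Cx)$ in polynomial time. As $\calX \subseteq \calX''$, the optimal value $\rd(Cx^*)$ is a lower bound on $\min_{x \in \calX}\rd(Cx)$.

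Next I would round, setting $\bar{x} := \calR(x^*) \in \calX$, which is well defined because $x^* \in \calX'' \subseteq \calX'$. Applying the oblivious guarantee $c\transp\calR(x') \leq \alpha\, c\transp x'$ to each (nonnegative) row of $C$ yields the single componentwise bound $C\bar{x} \leq \alpha\, Cx^*$. Crucially, this one rounding serves all $k$ rows simultaneously, which is exactly where obliviousness is used. Note also that $C\bar{x} \geq \ideal \geq \refpt$ since $\bar{x} \in \calX$, while $Cx^* \geq \refpt$ by the added constraints, so both $C\bar{x} - \refpt$ and $\alpha Cx^* - \refpt$ are nonnegative vectors.

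Finally I would transfer the factor to the objective verbatim from the proof of Lemma~\ref{lem:pareto-to-cs}. By monotonicity of $\Norm{\cdot}_p$ applied to $0 \leq C\bar{x} - \refpt \leq \alpha Cx^* - \refpt = \alpha(Cx^* - \refpt) + (\alpha-1)\refpt$ and the triangle inequality, $\Norm{C\bar{x} - \refpt}_p \leq \alpha\Norm{Cx^* - \refpt}_p + (\alpha-1)\Norm{\refpt}_p$, whence $\rd(C\bar{x}) = \Norm{\refpt}_p + \Norm{C\bar{x} - \refpt}_p \leq \alpha\,\rd(Cx^*) \leq \alpha\min_{x \in \calX}\rd(Cx)$. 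The identical computation covers $p = \infty$, where $\Norm{\cdot}_\infty = \norm{\cdot}_\infty$. Should one wish to carry the reference point weight vector $\lambda \in \rats^k_{\geq0}$ as well, it is absorbed into the data by replacing $C$ with $\mathrm{diag}(\lambda)C$ and $\refpt$ with $\mathrm{diag}(\lambda)\refpt$; this keeps the rows nonnegative and leaves $\calR$ and its guarantee untouched, so taking $\lambda = \vecofones$ is without loss of generality.

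I expect the only real obstacle to be the interface with Theorem~\ref{thm:lp}: its linear reformulation of $\rd(Cx)$ is faithful only where $Cx \geq \refpt$, whereas an unconstrained fractional optimum over $\calX'$ need not satisfy this (the relaxation's ideal point may lie below $\refpt$). Intersecting with $\{Cx \geq \refpt\}$ is precisely the repair: it simultaneously makes the invocation of Theorem~\ref{thm:lp} legitimate, ensures $\rd(Cx^*)$ is a valid lower bound on the integer optimum, and keeps both arguments of the norm in $\reals^k_{\geq0}$ so that the monotonicity step above is justified. Everything else is a direct composition of Theorem~\ref{thm:lp}, the obliviousness of $\calR$, and the norm estimate underlying Lemma~\ref{lem:pareto-to-cs}.
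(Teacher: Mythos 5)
Your proposal is correct and takes essentially the same route as the paper's proof: compute a fractional optimizer of $\rd(C\,\cdot)$ over the relaxation via Theorem~\ref{thm:lp}, apply the oblivious rounding once to obtain the componentwise bound $C\bar{x} \leq \alpha\, Cx^*$ from the row-wise guarantees, and transfer the factor $\alpha$ through the $(\alpha-1)\refpt$ estimate---the paper does this termwise on the max and sum parts of $\Norm{\cdot}_p$, which is equivalent to your monotonicity-plus-triangle-inequality argument. Your intersection of $\calX'$ with $\{x : Cx \geq \refpt\}$ is a legitimate and slightly more careful refinement making the linear reformulation of Theorem~\ref{thm:lp} coincide exactly with $\rd$ at the fractional optimum; the paper omits it because its chain of inequalities remains valid even when $Cx' \not\geq \refpt$, as the LP objective then only underestimates $\rd(Cx')$ while still lower-bounding the integral optimum.
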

\begin{proof}
	From Theorem~\ref{thm:lp} it follows that we can compute in polynomial time a fractional
solution $x'\in\calX'$ minimizing ${\rd}(Cx)$. Let $x=\calR(x')$. Then
\begin{align*}
	{\rd}(Cx) &= \max_{i\in[k]} \{\refpt_i\} + \max_{i\in[k]} \{ (Cx)_i - \refpt_i \} +
\frac{1}{p} \cdot \sum_{i\in[k]} (Cx)_i \\
	&\leq \max_{i\in[k]} \{\refpt_i\} + \max_{i\in[k]} \{ \alpha(Cx')_i - \refpt_i \} + \alpha \cdot
\frac{1}{p} \cdot \sum_{i\in[k]} (Cx')_i \\
	&= \max_{i\in[k]} \{\refpt_i\} + \max_{i\in[k]} \big\{ \alpha\big( (Cx')_i - \refpt_i \big) +
(\alpha-1) \refpt_i \big\} + \alpha \cdot
\frac{1}{p} \cdot \sum_{i\in[k]} (Cx')_i \\
	&\leq \alpha \cdot \max_{i\in[k]} \{\refpt_i\} + \alpha \cdot \max_{i\in[k]} \big\{ (Cx')_i -
\refpt_i
\big\} + \alpha \cdot \frac{1}{p} \cdot \sum_{i\in[k]} (Cx')_i \\
	&= \alpha \cdot {\rd}(Cx') \;. \qedhere
\end{align*}
\end{proof}

Theorem~\ref{thm:lp-rounding} immediately results in the approximability, with a factor independent
of~$k$, of reference
point solutions and the Pareto set for several classical combinatorial optimization problems. We
give two examples here.

For \textsc{Set Cover}, in 1982 Hochbaum~\cite{Hochbaum1982} presented an LP-based
$\kappa$-approximation algorithm, where $\kappa$ is the maximum cardinality of a set. Thus, there is
a $\kappa$-approximation algorithm for the corresponding reference point version, and a
$\bigo(\kappa^2)$-approximation algorithm for the Pareto set. A notable special case is
\textsc{Vertex
Cover}, where $\kappa=2$.

For the scheduling problem of minimizing the weighted sum of completion times on a single machine
with release dates ($1|r_j|\sum w_jC_j$), Hall et al.~\cite{Hall1997} gave a $3$-approximation
algorithm based on an LP-relaxation, resulting in a $3$-approximation for compromise solutions,
which gives a constant factor approximation for the Pareto set as well. M\"ohring et
al.~\cite{Moehring1999} extended this to stochastic scheduling with random processing times
($P|p_j\sim \text{stoch}, r_j|E[\sum w_jC_j]$), for which we consequently also get constant factor
approximations for the multicriteria problems.

\begin{remark} While we usually restrict ourselves to the case of a constant number of
criteria, the results on convex optimization and LP-rounding also hold for a polynomial number of
criteria. This is due to the fact that we can still solve the linear program if we add a polynomial
number of constraints.
\end{remark}

\paragraph{From pseudopolynomial algorithms to approximation schemes.}

Multicriteria optimization, and in particular the concept of compromise solutions, is closely
related
to robust optimization. If each criterion is considered as one \emph{scenario} in the robust
setting, then a compromise solution w.r.t.~$\norm{\cdot}_\infty$ is exactly the same as a
\emph{min-max regret robust} solution.

Aissi et al.~\cite{Aissi2006} consider this robust setting for binary optimization problems and show
that if we can compute upper and lower bounds
on the optimum which only differ by a polynomial factor, and if there is a
pseudopolynomial algorithm whose running time depends on the encoding length of the instance and the
upper
bound, then there is an FPTAS for the min-max regret robust problem. We show that this result can be
extended to reference point solutions.


\newcounter{count-thm:aissi-ext}
\addtocounter{count-thm:aissi-ext}{\arabic{theorem}}
\begin{theorem}\label{thm:aissi-ext}
	Consider a multicriteria minimization problem $\min_{x\in\calX}Cx$ with a set  of feasible
solutions
$\calX\subseteq\{0,1\}^n$ and cost matrix $C\in\ints_{\geq0}^{k\times n}$. For any $p \in [1,
\infty]$, if
	\begin{enumerate}
		\item for any instance $I=(\calX,C)$, and any feasible reference point $\refpt$, a lower and an
upper bound $L$ and $U$ on
$\min_{x\in\calX} \rd(Cx)$ can be computed in time $\pi_1(|I|)$, such that $U\leq
\pi_2(|I|)L$, where $\pi_1$ and $\pi_2$ are non-decreasing polynomials,
		\item and there exists an algorithm that solves $\min_{x\in\calX} \rd(Cx)$ for any
instance $I=(\calX,C)$ in time $\pi_3(|I|,U)$, where $\pi_3$ is a non-decreasing polynomial,
	\end{enumerate}
	then there is an FPTAS for $\min_{x\in\calX} \rd(Cx)$, where $\rd(y) =
\Norm{\refpt}_p + \Norm{y-\refpt}_p$.
\end{theorem}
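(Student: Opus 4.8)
The plan is to adapt the classical scaling-and-rounding scheme that turns a pseudopolynomial algorithm into an FPTAS (as in Ibarra--Kim, and in the form used by Aissi et al.~\cite{Aissi2006}) to the nonlinear objective $\rd$. The structural fact that makes this work is that the cornered $p$-norm is a genuine norm, so $\rd$ is positively homogeneous of degree one: replacing $C$ by $\tfrac1\delta C$ and $\refpt$ by $\tfrac1\delta\refpt$ multiplies every value $\rd(Cx)$ by $\tfrac1\delta$. Hence dividing all data by a scaling parameter $\delta$ and rounding to integers yields an instance on which the exact pseudopolynomial algorithm of assumption~(2) can be run, while the objective is perturbed only additively.

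Concretely, given $\eps>0$, I would first call assumption~(1) to obtain bounds $L\le\min_{x\in\calX}\rd(Cx)\le U$ with $U\le\pi_2(|I|)L$, and set $\delta:=\tfrac{\eps L}{4(n+1)(1+k)}$. Define the scaled instance by $\tilde c_{ij}:=\floor{c_{ij}/\delta}$ together with a scaled reference point $\tilde\refpt$ (discussed below); both are integral, so assumption~(2) applies. Running the exact algorithm returns a scaled optimum $\tilde x$, which is the solution we output, evaluated under the original $\rd$. For the running time, the scaled optimum is at most $U/\delta+O(nk)$, and since the factor $L$ cancels we get $U/\delta\le\tfrac{4(n+1)(1+k)\,\pi_2(|I|)}{\eps}$; this upper bound $\tilde U$ is polynomial in $|I|$ and $1/\eps$, as is the encoding length of $\tilde C$, so $\pi_3(|\tilde I|,\tilde U)$ is polynomial in $|I|$ and $1/\eps$.

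The correctness rests on a uniform additive error bound. For every fixed $x$, rounding the costs down gives $0\le(Cx)_i-\delta(\tilde Cx)_i\le n\delta$ coordinatewise, and similarly $0\le\refpt_i-\delta\tilde\refpt_i\le(n+1)\delta$. Combining homogeneity, the triangle inequality applied to the two norm terms of $\rd$, and the monotonicity of the cornered norm on absolute values, I would derive
\[ \bigl|\rd(Cx)-\delta\,\tilde\rd(\tilde Cx)\bigr|\le\Norm{\refpt-\delta\tilde\refpt}_p+\Norm{(Cx-\delta\tilde Cx)-(\refpt-\delta\tilde\refpt)}_p\le 2(n+1)\bigl(1+\tfrac{k}{p}\bigr)\delta=:E. \]
By the choice of $\delta$ we have $2E\le\eps L\le\eps\,\opt$. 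Comparing $\tilde x$ with a true optimum $x^*$ then yields $\rd(C\tilde x)\le\delta\,\tilde\rd(\tilde C\tilde x)+E\le\delta\,\tilde\rd(\tilde Cx^*)+E\le\rd(Cx^*)+2E\le(1+\eps)\opt$, which is the desired FPTAS guarantee.

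The step I expect to be the main obstacle is keeping $\tilde\refpt$ a \emph{feasible} reference point for the scaled instance, i.e.\ $\tilde\refpt\le\tilde\ideal$; this is needed both so that the exact algorithm is applicable and so that $\tilde Cx-\tilde\refpt\ge0$ and the cornered-norm estimates go through cleanly. Simply taking $\tilde\refpt_i=\floor{\refpt_i/\delta}$ need not preserve feasibility, since rounding the costs down can lower each scaled ideal value $\tilde\ideal_i$ by up to $n$ relative to $\ideal_i/\delta$. I would therefore introduce a safety buffer and set $\tilde\refpt_i:=\max\{0,\floor{\refpt_i/\delta}-n\}$. Because $\tilde\ideal_i\ge\ideal_i/\delta-n\ge\refpt_i/\delta-n$, this guarantees $\tilde\refpt\le\tilde\ideal$, while it only enlarges the gap $\refpt_i-\delta\tilde\refpt_i$ to at most $(n+1)\delta$, exactly the bound already absorbed into $E$ above.
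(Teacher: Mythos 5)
Your proof is correct and follows essentially the same route as the paper's: scale the data by $\delta\approx\eps L/\mathrm{poly}(n,k)$, round down, solve the rounded instance exactly with the pseudopolynomial algorithm (its optimum being bounded by roughly $U/\delta\le\mathrm{poly}(|I|)/\eps$, so assumption~(2) yields polynomial running time), and establish a uniform additive error of order $\eps L\le\eps\cdot\opt$ via homogeneity and the triangle inequality of the cornered norm; the paper uses $\overline{c}_{ij}=\floor{\tfrac{3n}{\eps'L}\,c_{ij}}$ with $\eps'=\eps(1+\tfrac{k}{p})^{-1}$, and your constants differ only cosmetically. The one substantive difference is your safety buffer $\tilde{y}^{\textup{rp}}_i=\max\{0,\floor{\refpt_i/\delta}-n\}$: the paper takes the plain floor $\floor{\refpt_i/\delta}$ and simply asserts feasibility for the rounded instance, but that assertion can fail, since rounding each $c_{ij}$ down can lower the rounded ideal value by up to $n$ below $\ideal_i/\delta$ while the floored reference point loses less than $1$ (e.g., $n=2$, $c_{i1}=c_{i2}=1$, $\refpt_i=\ideal_i=2$, $1/\delta=2.5$ gives rounded reference value $5$ against rounded ideal value $4$). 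So your buffer, with the enlarged per-coordinate rounding error $(n+1)\delta$ correctly absorbed into the choice of $\delta$, actually repairs a small gap in the paper's own argument rather than introducing a detour.
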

By $|I|$ we denote the encoding length of the instance $I$.

\begin{proof}
	\newcommand{\refptbar}{\overline{y}^{\textup{rp}}}
	To compute a $(1+\eps)$-approximation to the reference point solution, we set
$\eps'=\eps\cdot(1+\frac{k}{p})^{-1}$ and apply the pseudopolynomial algorithm to a modified
instance
$\overline{I}$ with cost coefficients $\overline{c}_{ij}:=\floor{\frac{3n}{\eps'L} \, c_{ij}}$.
Observe that
\[ \frac{\eps' L}{3n}\cdot \overline{c}_{ij} \leq c_{ij} < \frac{\eps' L}{3n}(\overline{c}_{ij}+1)
 \;. \]
The reference point for the modified instance is defined by $\refptbar_i := \floor{
\frac{3n}{\eps'L} \, \refpt_i }$. This reference point is feasible for the modified instance, and it
holds that
\begin{equation} 
	\frac{\eps'L}{3n} \, \refptbar_i \leq \refpt_i < \frac{\eps'L}{3n} \refptbar_i +
\frac{\eps'L}{3n} < \frac{\eps'L}{3n} \, \refptbar_i + \frac{\eps'L}{3} \;. \label{eq:refpt-bound}
\end{equation}

Let $x^*$ and $\overline{x}^*$ be reference point solutions for $I$ and $\overline{I}$,
respectively. We now bound the value of $\overline{x}^*$ w.r.t.~the original costs $c$. Let $\rd$
and $\overline{\rd}$ denote the reference point objective function for the original and the modified
costs,
respectively. We get
\begin{align*}
	\rd(C\overline{x}^*) &= \Norm{\refpt}_p + \Norm{C\overline{x}^* - \refpt}_p \\
	&\leq \frac{\eps'L}{3n}\Norm{\refptbar}_p + \frac{\eps'L}{3}\left(1+\frac{k}{p}\right) \\
	&\quad+ \max_{i\in[k]} \left\{ \frac{\eps'L}{3n} (\overline{c}_i \overline{x}^* -
\refptbar_i) \right\} + \frac{\eps' L}{3} + \frac{1}{p}\sum_{i\in[k]} \left(\frac{\eps'L}{3n}
(\overline{c}_i \overline{x}^* - \refptbar_i) + \frac{\eps' L}{3} \right) \\
	&= \frac{\eps'L}{3n}\Norm{\refptbar}_p + \frac{\eps'L}{3n} \Norm{ \overline{C}\overline{x}^*
- \refptbar_i }_p + \frac{2\eps'L}{3}\left(1+\frac{k}{p}\right) \\
	&\leq \frac{\eps'L}{3n} \Norm{\refptbar}_p + \frac{\eps'L}{3n} \Norm{ \overline{C}x^* -
\refptbar_i }_p + \frac{2\eps'L}{3}\left(1+\frac{k}{p}\right) \\
	&\leq \Norm{\refpt}_p + \frac{\eps'L}{3n} \left( \frac{3n}{\eps'L} \Norm{Cx^* - \refpt}_p +
n\Big(1+\frac{k}{p}\Big) \right) + \frac{2\eps'L}{3}\left(1+\frac{k}{p}\right) \\
	&= \Norm{\refpt}_p + \Norm{ Cx^* - \refpt}_p + \eps'L\left(1+\frac{k}{p} \right) \\
	&= \rd(Cx^*) + \eps L \\
	&\leq (1+\eps) \rd(Cx^*) \;.
\end{align*}

It remains to be shown that $\overline{x}^*$ can be computed in time polynomial in $|I|$ and
$\frac{1}{\eps}$. For this, denote by $\overline{L}$ and $\overline{U}$ the lower and upper
bounds on the optimal value $\overline{\opt}$ of the modified instance $\overline{I}$. According to
the prerequisites of the theorem, we can compute $L$ and then $\overline{x}^*$ in time
\begin{align*}
\pi_1(|I|) + \pi_3(|\overline{I}|,\overline{U})
&\leq \pi_1(|I|) + \pi_3(|\overline{I}|,\pi_2(|\overline{I}|)\overline{L}) \\
&\leq \pi_1(|I|) + \pi_3(|\overline{I}|,\pi_2(|\overline{I}|)\overline{\opt}) \\
&\leq \pi_1(|I|) + \pi_3\left( |\overline{I}|, \pi_2(|\overline{I}|) \big(
\tfrac{3n}{\eps'}\pi_2(|I|) + n(1+\tfrac{k}{p}) \big) \right) \;,
\end{align*}
where the last inequality holds because
\begin{align*}
	\overline{\opt} &\leq \frac{3n}{\eps'L}\Norm{\refpt}_p + \frac{3n}{\eps'L} \Norm{Cx^* - \refpt}_p
+ n\Big(1+\frac{k}{p}\Big) \\
	&\leq \frac{3n}{\eps'L}\cdot U + n\Big(1+\frac{k}{p}\Big) \\
	&\leq \frac{3n}{\eps'}\cdot \pi_2(|I|) + n\left(1+\frac{k}{p}\right) \;. \\
\end{align*}
Finally note that $|\overline{I}|\leq \pi_4(|I|,\log\frac{1}{\eps},\log\frac{k}{p})$ for some
polynomial $\pi_4$.
Thus the above calculations prove that the running time is indeed polynomial.
\end{proof}

\begin{remark}
	Theorem~\ref{thm:aissi-ext} also holds for $\CS(\calP,\Norm{\cdot}_p)$.
\end{remark}
\begin{proof}
  \newcommand{\idealbar}{\overline{y}^{\textup{id}}}
For compromise solutions, we can not choose the reference point of the modified instance as we see
fit. However, also for the ideal point, Equation~\eqref{eq:refpt-bound} still holds. To see this,
denote the
respective ideal points by $\ideal$ and $\idealbar$, and let $x^{(i)},\overline{x}^{(i)}$ for
$i\in[k]$ be optimal solutions of $\min_{x\in\calX} c_i x$ and $\min_{x\in\calX} \overline{c}_i x$.

It holds that
\begin{align*}
	\ideal_i &=	c_i x^{(j)}\geq c_i \overline{x}^{(j)} \geq \frac{\eps' L}{3n}
\overline{c}_i \overline{x}^{(j)} = \frac{\eps' L}{3n} \idealbar_i \;, \\
	\ideal_i &=	c_i x^{(j)} \leq \frac{\eps' L}{3n}(\overline{c}_i+\vecofones\transp) x^{(j)}
\leq \frac{\eps' L}{3n} \overline{c}_i x^{(j)} + \frac{\eps' L}{3} \leq \frac{\eps' L}{3n}
\idealbar_i + \frac{\eps' L}{3} \;,
\end{align*}
so Equation~\eqref{eq:refpt-bound} also holds for the ideal points.
\end{proof}

\begin{remark}
For the running time, it is essential that $p$ is fixed, or at least bounded from below by a positive
constant (e.g. $p\geq1$), as the running time is only polynomial in $\frac{1}{p}$. Since for
$p\rightarrow0$ compromise programming becomes equivalent to the weighted sum
problem, this is only a minor restriction.
\end{remark}

Similarly to Proposition 1 in Aissi et al.~\cite{Aissi2006}, we can show that the necessary bounds
$U$ and $L$
can
be computed, if the single-objective problem is tractable. This is a direct implication of the
weighted sum approximation described in Theorem~\ref{thm:weighted-sum-approx}.

\begin{corollary}
If there is an $\alpha$-approximation for the single-criterion version of $\calP$, then for all
instances of $\RPM(\calP, \norm{\cdot})$, we can compute $L$ such that $L\leq \min_{y\in \calY}
r(y) \leq \alpha kL$.
\end{corollary}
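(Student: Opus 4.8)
The plan is to turn the weighted-sum approximation of Theorem~\ref{thm:weighted-sum-approx} into a two-sided estimate of the optimum by simply evaluating the solution it produces. Write $\opt := \min_{y\in\calY}\rd(y)$ for the optimal reference point value. Given an $\alpha$-approximation for the single-criterion problem $\min_{y\in\calY}\lambda\transp y$, Theorem~\ref{thm:weighted-sum-approx} furnishes a polynomial-time $k\alpha$-approximation algorithm for $\RPM(\calP,\norm{\cdot}_\infty)$, and this is exactly the lever I would use.

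First I would run this algorithm on the given instance to obtain a feasible solution $y'\in\calY$ with $\rd(y')\leq k\alpha\,\opt$. Its objective value $\rd(y')$ is computable in polynomial time, since in $\RPM$ the reference point $\refpt$ is part of the input, so both $\norm{\refpt}$ and $\norm{y'-\refpt}$ can be evaluated directly. I then set
\[ L \;:=\; \tfrac{1}{k\alpha}\,\rd(y') \;. \]

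The two desired inequalities now follow at once from the two roles played by $y'$. The approximation guarantee $\rd(y')\leq k\alpha\,\opt$ gives $L=\tfrac{1}{k\alpha}\rd(y')\leq\opt$, which is the required lower bound. On the other hand, $y'$ is itself feasible, so $\opt\leq\rd(y')=k\alpha L$, i.e.\ $\opt\leq\alpha k L$. Chaining the two yields $L\leq\opt\leq\alpha k L$, as claimed.

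I expect no genuine obstacle here: the only point to watch is to use the computed solution in both directions---as an approximation (for the bound $L\leq\opt$) and as a feasible point (for the bound $\opt\leq\alpha k L$)---so that the optimum is squeezed between $L$ and $\alpha k L$. The argument is phrased for $\norm{\cdot}_\infty$, but it transfers verbatim to the cornered $p$-norm objective $\rd(y)=\Norm{\refpt}_p+\Norm{y-\refpt}_p$ appearing in Theorem~\ref{thm:aissi-ext}, because that norm is sandwiched between constant multiples of the weighted sum, namely $\tfrac{1}{k}\lambda\transp y\leq\Norm{y}_p^\lambda\leq(1+\tfrac{1}{p})\lambda\transp y$. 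The same weighted-sum approximation therefore yields a constant-factor $\RPM$ approximation for this norm as well, and hence the identical two-sided estimate of $\opt$.
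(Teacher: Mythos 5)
Your proof is correct and matches the paper's intended argument: the paper derives this corollary directly from Theorem~\ref{thm:weighted-sum-approx} in exactly the way you describe, using the value $\rd(y')$ of the computed $k\alpha$-approximate solution both as a feasible upper bound and, scaled by $\tfrac{1}{k\alpha}$, as the lower bound $L$. Your closing remark on transferring the bound to the cornered $p$-norm via the sandwich $\tfrac{1}{k}\lambda\transp y\leq\Norm{y}_p^\lambda\leq(1+\tfrac{1}{p})\lambda\transp y$ is a sensible addition the paper leaves implicit.
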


The pseudopolynomial algorithms for the shortest path problem (SP) and the minimum spanning tree
problem (MST) presented in Aissi et al.~\cite{Aissi2006} can be used to compute reference point
solutions as
well, as
they both compute all (non-dominated) regret vectors (that obey the upper bound $U$), and the
reference point solution always has a non-dominated regret vector.


\begin{corollary}
There is an FPTAS for $\RPM(\textup{SP}, \Norm{\cdot}_p)$ and $\RPM(\textup{MST}, \Norm{\cdot}_p)$
for any $p\in[1,\infty]$.
\end{corollary}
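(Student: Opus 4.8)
The plan is to verify the two hypotheses of Theorem~\ref{thm:aissi-ext} for $\calP \in \{\textup{SP}, \textup{MST}\}$ and then invoke that theorem directly. The fixed number of criteria $k$ is what keeps every relevant quantity under control, so I would emphasize that throughout.

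First I would establish the bounding hypothesis~(1). Both the single-criterion shortest path and the single-criterion minimum spanning tree problem are solvable in polynomial time, i.e.\ we have an exact ($\alpha = 1$) algorithm for the single-criterion version. Applying the preceding corollary then produces a value $L$ with $L \leq \min_{x \in \calX} \rd(Cx) \leq kL =: U$. Since $k$ is a fixed constant, $U \leq \pi_2(|I|)\,L$ holds with the constant polynomial $\pi_2 \equiv k$, so hypothesis~(1) of Theorem~\ref{thm:aissi-ext} is met with $U = kL$.

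Next I would supply the pseudopolynomial exact algorithm required by hypothesis~(2). The key observation is that an optimal reference point solution always has a non-dominated cost vector: since $\Norm{\cdot}_p$ is monotone, replacing a solution by one that dominates it can only decrease $\rd(Cx)$, so the minimum is attained at some Pareto optimal solution. Moreover, because $\refpt \leq \ideal \leq Cx$ for every feasible solution, we have $\rd(Cx) \geq \max_i\{(Cx)_i - \refpt_i\}$, so an optimal cost vector $Cx^*$ satisfies $(Cx^*)_i \leq \refpt_i + U$ for all $i$. Hence $Cx^*$ lies among the non-dominated cost vectors obeying the upper bound $U$, which are precisely the vectors enumerated by the pseudopolynomial algorithms of Aissi et al.~\cite{Aissi2006} for SP and MST. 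Running those enumeration algorithms, evaluating $\rd$ on each of the non-dominated vectors listed (of which there are at most $\bigo(U^{k-1})$, hence pseudopolynomially many in $U$), and returning the minimizer therefore solves $\min_{x \in \calX} \rd(Cx)$ exactly in time $\pi_3(|I|, U)$ for a suitable non-decreasing polynomial $\pi_3$, establishing hypothesis~(2). With both hypotheses verified, Theorem~\ref{thm:aissi-ext} yields the claimed FPTAS for $\RPM(\textup{SP}, \Norm{\cdot}_p)$ and $\RPM(\textup{MST}, \Norm{\cdot}_p)$ for every $p \in [1,\infty]$.

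The main obstacle is the middle step. One must check carefully that the regret-vector enumeration of Aissi et al.\ — originally designed for the min-max regret (i.e.\ $\Norm{\cdot}_\infty$) objective — in fact lists \emph{every} non-dominated cost vector within the bound $U$, independently of the objective that will later be evaluated on them, and that the optimal cornered-norm reference point solution is simultaneously non-dominated and $U$-bounded so that it genuinely appears in this list. The monotonicity of the cornered $p$-norm and the feasibility assumption $\refpt \leq \ideal$ are exactly the two ingredients that make this argument go through uniformly for all $p$, and not merely for $p = \infty$.
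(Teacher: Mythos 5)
Your proposal is correct and follows essentially the same route as the paper: both invoke Theorem~\ref{thm:aissi-ext}, obtaining the bounds $L \leq \opt \leq kL$ from the preceding corollary (with $\alpha = 1$, since single-criterion SP and MST are polynomially solvable) and the required pseudopolynomial algorithm from the Aissi et al.~enumeration of all non-dominated regret vectors obeying $U$, on which $\rd$ is evaluated directly. Your write-up merely makes explicit the details the paper leaves implicit, namely that monotonicity of $\Norm{\cdot}_p$ forces the optimum onto a non-dominated vector and that $\refpt \leq \ideal$ gives the $U$-bound placing it in the enumerated list.
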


\section{Maximization}\label{sec:maximization}
We now investigate which of the results from Section~\ref{sec:general} hold for maximization
problems. Note
that for the problem $\max_{y\in\calY}y$, the ideal point $\ideal$ is defined by $\ideal_i =
\max_{y\in\calY} y_i$, and a reference point $\refpt\in\ints_{\geq0}^k$ is called feasible if
$\refpt\geq\ideal$. A solution $y\in\calY$ is Pareto optimal if there is no
$y'\in\calY\setminus\{y\}$ with
$y'\geq y$. An $\alpha$-approximate Pareto set has to contain, for all $y\in\calP$, a solution $y'$
with $y'\geq\frac{1}{\alpha}\,y$. Accordingly, feasible answers to $\gap(P,\alpha)$ for an input
vector $y$ are either a vector $y'\in\calY$ with $y'\geq y$, or the guarantee that there is no
vector $y''\in\calY$ with $y''\geq \alpha y$.

Recall that the objective function for compromise and reference point
solutions is the value of the reference point, degraded by the price of
compromise. For maximization, we have to substract the price of
compromise, i.e., $\rd(y) = \norm{\refpt} - \norm{\refpt - y}$. 
This objective function is then aimed to be maximized.
To simplify the presentation, in this section we restrict to
statements about monotone norms and about the infinity-norm.

We begin our considerations with an observation.
\begin{observation}
  It is \emph{not} true that whenever there is a constant factor approximation algorithm for the
weighted sum problem $\max_{y\in\calY} \lambda\transp y$, then there also is an approximation
algorithm for the Pareto set.
\end{observation}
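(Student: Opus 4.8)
The plan is to refute the implication by exhibiting a single bicriteria maximization problem $\calP$ whose weighted-sum version $\max_{y\in\calY}\lambda\transp y$ is solvable \emph{exactly} in polynomial time, yet whose Pareto set admits no constant-factor approximation unless $\mathrm{P}=\mathrm{NP}$. The phenomenon I want to exploit is that, for maximization, a constant-factor guarantee on $\lambda\transp y$ controls only the weighted \emph{sum}, which can be dominated by ``extreme'' solutions that are large in one coordinate and tiny in the other. Such extreme solutions carry no information about ``balanced'' Pareto points, so (unlike in the minimization setting of Theorem~\ref{thm:weighted-sum-approx}) their approximability is not implied.

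Concretely, I would reduce from \textsc{Sat}. Given a formula $\phi$ on $n$ variables, let the feasible solutions be all assignments $x\in\{0,1\}^n$ together with two distinguished solutions $e_1,e_2$, and set $c(e_1)=(4^n,1)$, $c(e_2)=(1,4^n)$, $c(x)=(2^n,2^n)$ if $x$ satisfies $\phi$, and $c(x)=(1,1)$ otherwise. Each $c(x)$ is computable in polynomial time, and $M=4^n$ supplies the required exponential bound, so this is a legitimate multicriteria instance. The two things I must then verify are (i) the extremes dominate every weighted sum, and (ii) the balanced vector $(2^n,2^n)$ occurs only for satisfying assignments and is isolated from every other feasible vector.

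For (i), note that $\max\{\lambda\transp c(e_1),\lambda\transp c(e_2)\}\ge\max\{\lambda_1,\lambda_2\}\,4^n\ge\tfrac{\lambda_1+\lambda_2}{2}\,4^n$, whereas the balanced vector contributes only $(\lambda_1+\lambda_2)\,2^n$, which is no larger for every $n\ge1$; the vector $(1,1)$ is dominated outright. Hence $\max_{y}\lambda\transp y$ is always attained at $e_1$ or $e_2$, and evaluating both yields an exact polynomial-time algorithm for the weighted-sum problem (a fortiori a constant-factor approximation). For (ii), if $\phi$ is satisfiable then $(2^n,2^n)$ is Pareto optimal, and any $y'\in\calY$ with $y'\ge\tfrac1\alpha(2^n,2^n)$ must have both coordinates at least $2^n/\alpha>1$; for constant $\alpha$ and large $n$ the only such feasible vector is $(2^n,2^n)$ itself, since $e_1,e_2$ as well as every unsatisfied assignment have a coordinate equal to $1$. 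Thus any $\alpha$-approximate Pareto set must literally contain $(2^n,2^n)$, meaning the algorithm producing it must have found a satisfying assignment; conversely, if $\phi$ is unsatisfiable no such vector exists. Running the assumed approximation algorithm and testing whether its output contains a vector with both coordinates exceeding $1$ would therefore decide \textsc{Sat} in polynomial time (instances with $n\le\log_2\alpha$ are dispatched by brute force), a contradiction.

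I expect the main obstacle to be calibrating the magnitudes: the separation between the balanced point and the extremes must be \emph{exponential} rather than polynomial. A tempting but flawed alternative places all feasible vectors on a line $y_1+y_2=\Sigma$ (e.g.\ via \textsc{Partition}), which does make the weighted sum trivial; but there the critical Pareto point is only polynomially separated, and since \textsc{Subset Sum} admits an FPTAS the Pareto set turns out to be approximable, so no contradiction arises. The gap $4^n$ versus $2^n$ is precisely what guarantees, simultaneously, that a constant multiplicative slack cannot let an extreme solution stand in for the balanced one, and that the balanced point never influences the weighted-sum optimum.
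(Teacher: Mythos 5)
Your proposal is correct, but it proves the observation by a genuinely different route than the paper. The paper's proof is a tiny, unconditional indistinguishability argument: for $\calY = \{(1,1),(3,0),(0,3)\}$ the weighted-sum maximum is always attained at $(3,0)$ or $(0,3)$ (since $3\max\{\lambda_1,\lambda_2\} \geq \lambda_1+\lambda_2$), so an algorithm that accesses the instance only through a weighted-sum approximation receives identical answers whether or not $(1,1)$ is present; yet any $\alpha$-approximate Pareto set must literally contain $(1,1)$ when it exists, because each extreme point has a zero coordinate and hence never dominates $\tfrac{1}{\alpha}(1,1)$. That argument needs no complexity assumption, but it only rules out \emph{black-box} use of the weighted-sum oracle. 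You instead construct a concrete problem --- \textsc{Sat} encoded in the balanced objective vector $(2^n,2^n)$, flanked by extremes $(4^n,1)$ and $(1,4^n)$ --- whose weighted sum is solvable \emph{exactly} in polynomial time (your inequality $4^n\max\{\lambda_1,\lambda_2\} \geq 2^n(\lambda_1+\lambda_2)$ for $n \geq 1$ is right), while any constant-factor approximation of the Pareto set would have to exhibit $(2^n,2^n)$ whenever it exists, since for $n > \log_2 \alpha$ it is the unique feasible vector dominating $\tfrac{1}{\alpha}(2^n,2^n)$; this decides \textsc{Sat}. So your version rules out \emph{all} polynomial-time algorithms, not merely black-box reductions, at the price of conditioning on $\mathrm{P} \neq \mathrm{NP}$; the paper's version is weaker in scope but unconditional and instance-sized. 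One small inaccuracy in your closing discussion: the exponential gap is not actually necessary. Cost vectors $(2n,1)$, $(1,2n)$, $(n,n)$, $(1,1)$ would serve equally well, since weighted-sum dominance only needs $2n\max\{\lambda_1,\lambda_2\} \geq n(\lambda_1+\lambda_2)$ and isolation only needs $n/\alpha > 1$ for large $n$; what defeats your \textsc{Partition}-based alternative is, as you yourself note, the FPTAS for \textsc{Subset Sum}, not polynomial separation per se.
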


\begin{proof}
  Suppose there is an $\alpha$-approximation algorithm for $\max_{y\in\calY} \lambda\transp y$.
Consider an instance with $k=2$ and $\calY = \{ (1,1), (3,0), (0,3) \}$. For any
$\lambda\in\rats_{\geq0}^k$, the approximation algorithm could either return $(3,0)$ or $(0,3)$.
Hence an algorithm that only relies on the existence of a weighted sum approximation can not tell
whether the element $(1,1)$ exists or not. Any approximate Pareto set, however, has to contain the
point $(1,1)$, if it exists.
\end{proof}

This shows that at least one of the implications of approximability depicted in
Figure~\ref{fig:equivalence-graph} no longer holds for maximization problems. Some of them, however,
continue to hold. We get analogues of Lemma~\ref{lem:pareto-to-cs} and
Corollaries~\ref{cor:pareto-rpm} and \ref{cor:pareto-gives-compromise}, implying the approximability
of compromise respectively reference point solutions, in case the Pareto set is approximable.
\begin{lemma} \label{lem:max_pareto-to-cs2}
  Let $\refpt$ be a feasible reference point for $\max_{y\in\calY}y$, and let $\calY_\alpha$ be an
$\alpha$-approximate Pareto set. Then for any monotone norm $\norm{\cdot}$, $\max_{y\in\calY_\alpha}
\rd(y)  \geq \frac{1}{\alpha} \cdot \max_{y\in\calY}\rd(y)$, where $\rd(y)=\norm{\refpt} -
\norm{\refpt-y}$.
\end{lemma}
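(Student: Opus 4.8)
The plan is to mirror the minimization argument of Lemma~\ref{lem:pareto-to-cs}, but with the triangle inequality applied in the opposite direction, exploiting that feasibility now means $\refpt \geq \ideal$, so that $\refpt - y \geq 0$ for every $y \in \calY$. First I would fix an optimal solution $\yref \in \calY$ to $\max_{y\in\calY}\rd(y)$ and argue that it may be taken Pareto optimal: since $\refpt - y \geq 0$ and $\norm{\cdot}$ is monotone, replacing $y$ by a componentwise larger feasible point can only shrink $\norm{\refpt - y}$ and hence increase $\rd(y)$. By the defining property of an $\alpha$-approximate Pareto set (maximization version), there is then some $y' \in \calY_\alpha$ with $y' \geq \tfrac{1}{\alpha}\yref$.

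The core estimate is an upper bound on $\norm{\refpt - y'}$. Since $0 \leq \refpt - y' \leq \refpt - \tfrac{1}{\alpha}\yref$ componentwise, monotonicity gives $\norm{\refpt - y'} \leq \norm{\refpt - \tfrac{1}{\alpha}\yref}$. I would then use the decomposition
\[ \refpt - \tfrac{1}{\alpha}\yref = \tfrac{1}{\alpha}(\refpt - \yref) + \tfrac{\alpha-1}{\alpha}\refpt \]
together with the triangle inequality to obtain $\norm{\refpt - y'} \leq \tfrac{1}{\alpha}\norm{\refpt - \yref} + \tfrac{\alpha-1}{\alpha}\norm{\refpt}$.

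Plugging this into $\rd(y') = \norm{\refpt} - \norm{\refpt - y'}$ and collecting the $\norm{\refpt}$-terms, where $\norm{\refpt} - \tfrac{\alpha-1}{\alpha}\norm{\refpt} = \tfrac{1}{\alpha}\norm{\refpt}$, yields exactly
\[ \rd(y') \;\geq\; \tfrac{1}{\alpha}\big(\norm{\refpt} - \norm{\refpt - \yref}\big) \;=\; \tfrac{1}{\alpha}\,\rd(\yref), \]
which finishes the proof, since $y' \in \calY_\alpha$ implies $\max_{y\in\calY_\alpha}\rd(y) \geq \rd(y')$. I expect no serious obstacle here; the proof is essentially the sign-flipped twin of the minimization case. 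The two points demanding care are checking that both vectors fed into the monotonicity step are non-negative (guaranteed by $y' \leq \ideal \leq \refpt$ and $\tfrac{1}{\alpha}\yref \leq \yref \leq \refpt$), and verifying the clean cancellation of the $\norm{\refpt}$-terms, which is the feature that makes the factor come out to exactly $\tfrac{1}{\alpha}$ for the maximization objective.
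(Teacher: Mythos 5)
Your proof is correct and takes essentially the same route as the paper: the paper's single key inequality $\norm{\refpt - y'} \leq \frac{1}{\alpha}\norm{\refpt - \yref} + \bigl(1-\frac{1}{\alpha}\bigr)\norm{\refpt}$ is precisely your monotonicity step combined with the triangle inequality on the decomposition $\refpt - \tfrac{1}{\alpha}\yref = \tfrac{1}{\alpha}(\refpt - \yref) + \tfrac{\alpha-1}{\alpha}\refpt$, followed by the same cancellation of the $\norm{\refpt}$-terms. Your explicit verifications (non-negativity of the vectors fed to monotonicity, and the w.l.o.g.\ reduction to a Pareto optimal $\yref$) merely spell out details the paper leaves implicit.
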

\begin{proof}
  Let $\yref \in \calY$ be an optimal solution to $\min_{y\in\calY}\rd(y)$, and let
$y'\in\calY_\alpha$ with $y' \geq \frac{1}{\alpha} \yref$. Then
\[ \norm{\refpt - y'} \leq \frac{1}{\alpha} \norm{\refpt - \yref} + \Big(1-\frac{1}{\alpha}\Big)
\norm{\refpt} \;, \]
and hence
\[ \max_{y\in\calY_\alpha} \rd(y) \geq \norm{\refpt} - \norm{\refpt - y'} \geq
\frac{1}{\alpha} (\norm{\refpt} - \norm{\yref - \refpt}) \;. \qedhere \]
\end{proof}

\begin{corollary}\label{cor:max_pareto-rpm2}
If there is an $\alpha$-approximation algorithm for the Pareto set of $\max_{y\in\calY}y$, then
there is an $\alpha$-approximation for $\max_{y\in\calY} \norm{\refpt} - \norm{\refpt-y}$, for every
monotone and polynomially decidable norm $\norm{\cdot}$.
\end{corollary}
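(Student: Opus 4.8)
The plan is to read this corollary as the algorithmic counterpart of Lemma~\ref{lem:max_pareto-to-cs2}, in exactly the way Corollary~\ref{cor:pareto-rpm} is obtained from Lemma~\ref{lem:pareto-to-cs} in the minimization setting. The lemma already does all the analytic work: it guarantees that \emph{every} $\alpha$-approximate Pareto set $\calY_\alpha$ contains a solution whose reference point value $\rd(y) = \norm{\refpt} - \norm{\refpt - y}$ is at least $\tfrac{1}{\alpha}$ times the optimum. So the only thing left is to make this existence statement constructive and efficient. First I would invoke the assumed $\alpha$-approximation algorithm for the Pareto set to compute such a set $\calY_\alpha$. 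Since that algorithm runs in time polynomial in the encoding length of the instance, the set it outputs has polynomially bounded cardinality, which is what makes the subsequent search tractable.

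Next I would select the best element of $\calY_\alpha$ with respect to $\rd$. Because $\norm{\refpt}$ is a fixed constant, maximizing $\rd(y)$ over $\calY_\alpha$ is the same as \emph{minimizing} $\norm{\refpt - y}$ over this finite set. Here the hypothesis that $\norm{\cdot}$ is polynomially decidable is exactly what is needed: for any two candidates $y, y' \in \calY_\alpha$ it lets us decide whether $\norm{\refpt - y} \leq \norm{\refpt - y'}$ in polynomial time, so a single linear pass of pairwise comparisons identifies some $y^\star \in \argmax_{y \in \calY_\alpha} \rd(y)$. Returning $y^\star$ then yields an $\alpha$-approximation to $\max_{y \in \calY} \rd(y)$ directly by Lemma~\ref{lem:max_pareto-to-cs2}, and the total running time is polynomial in $|\calY_\alpha|$ and the encoding length of $\refpt$.

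I do not expect a genuine obstacle, since the entire analytic content is carried by the lemma; the corollary is a bookkeeping step combining polynomial-time computability of $\calY_\alpha$ with polynomial-time searchability of a polynomially sized set. The one point worth a sentence of care is that the comparisons in the selection step only ever involve norms of \emph{nonnegative} vectors: as $\refpt$ is a feasible reference point for $\max_{y\in\calY}y$ we have $\refpt \geq \ideal \geq y$ for every $y \in \calY$, so each $\refpt - y$ lies in $\reals^k_{\geq 0}$, which is precisely the domain on which monotonicity and polynomial decidability are assumed. Hence the norm comparisons are well defined and the guarantee of Lemma~\ref{lem:max_pareto-to-cs2} applies without modification.
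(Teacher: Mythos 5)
Your proposal is correct and matches the paper's intended argument exactly: the paper gives no separate proof of Corollary~\ref{cor:max_pareto-rpm2} because, just as Corollary~\ref{cor:pareto-rpm} follows from Lemma~\ref{lem:pareto-to-cs}, it follows immediately from Lemma~\ref{lem:max_pareto-to-cs2} by computing $\calY_\alpha$ with the assumed algorithm and selecting the element maximizing $\rd$ via polynomially decidable norm comparisons. Your added observation that $\refpt - y \geq 0$ for all $y \in \calY$ (since $\refpt \geq \ideal \geq y$) is a correct and welcome point of care, though the paper leaves it implicit.
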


\begin{corollary}\label{cor:max_pareto-gives-compromise2}
If there is an $\alpha$-approximation algorithm for the Pareto set of $\max_{y\in\calY}y$, then
there is an $\alpha^2$-approximation for $\max_{y\in\calY} \norm{\ideal} - \norm{\ideal-y}$, for
every monotone and polynomially decidable norm $\norm{\cdot}$.
\end{corollary}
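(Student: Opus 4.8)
The plan is to mirror the proof of Corollary~\ref{cor:pareto-gives-compromise}, replacing its minimization estimates by the maximization counterparts furnished by Lemma~\ref{lem:max_pareto-to-cs2} and Corollary~\ref{cor:max_pareto-rpm2}. Let $\calY_\alpha$ be the $\alpha$-approximate Pareto set produced by the assumed algorithm. First I would read a feasible reference point off $\calY_\alpha$. Since $\calY_\alpha$ contains, for each coordinate $i$, a point whose $i$-th entry is at least $\tfrac{1}{\alpha}\ideal_i$, the coordinate-wise maxima $m_i := \max_{y\in\calY_\alpha} y_i$ satisfy $\tfrac{1}{\alpha}\ideal_i \le m_i \le \ideal_i$. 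Rounding up, $\refpt_i := \ceil{\alpha\, m_i}$ then gives a feasible reference point (that is, $\refpt \ge \ideal$) with $\ideal \le \refpt \le \alpha\ideal$, up to the unavoidable integrality slack.

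I would then apply the reference-point approximation to this $\refpt$. By Corollary~\ref{cor:max_pareto-rpm2}, whose only requirement is feasibility of $\refpt$, the point $y' := \argmax_{y\in\calY_\alpha}\rd_\refpt(y)$ maximizing $\rd_\refpt(y) := \norm{\refpt} - \norm{\refpt - y}$ over $\calY_\alpha$ is an $\alpha$-approximation of $\max_{y\in\calY}\rd_\refpt(y)$; in particular $\rd_\refpt(y') \ge \tfrac{1}{\alpha}\rd_\refpt(\ycs)$ for the true compromise solution $\ycs$, which we may assume to be Pareto optimal by monotonicity.

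It remains to convert this guarantee, stated with respect to $\refpt$, into one with respect to $\ideal$ at the cost of at most one further factor $\alpha$. From $y' \le \ideal \le \refpt$ and monotonicity we get $\norm{\ideal - y'} \le \norm{\refpt - y'}$, hence $\rd_\ideal(y') \ge \norm{\ideal} - \norm{\refpt - y'}$; dually, the triangle inequality combined with $\refpt - \ideal \le (\alpha-1)\ideal$ yields $\norm{\refpt - \ycs} \le (\alpha-1)\norm{\ideal} + \norm{\ideal - \ycs}$, which relates $\rd_\refpt(\ycs)$ back to $\rd_\ideal(\ycs)$. Chaining these three estimates is the intended route to $\rd_\ideal(y') \ge \tfrac{1}{\alpha^2}\rd_\ideal(\ycs)$.

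The main obstacle is precisely this last transfer, and it is the one point where the maximization argument genuinely departs from Corollary~\ref{cor:pareto-gives-compromise}. In the minimization setting the constant $\norm{\ideal}$ is a \emph{lower} bound on the optimum, so the $\alpha^2$-slack easily swallows the reference-point error $\norm{\refpt - \ideal} \le (\alpha-1)\norm{\ideal}$. Here the constant enters with a minus sign and $\norm{\ideal}$ is instead an \emph{upper} bound on the optimum, so a purely additive chaining only gives $\rd_\ideal(y') \ge \tfrac{1}{\alpha}\rd_\ideal(\ycs) - (\alpha-1)\norm{\ideal}$, which is vacuous once the optimum is small. The delicate part is therefore to keep the transfer multiplicative: rather than committing to a single inflated surrogate $\refpt$, one should exploit that $\calY_\alpha$ always contains a point $\tilde y \ge \tfrac{1}{\alpha}\ycs$ with $\rd_\ideal(\tilde y) \ge \tfrac{1}{\alpha}\rd_\ideal(\ycs)$, and select the output so that it is provably good against the entire residual uncertainty $\ideal \in [m,\alpha m]$ in the unknown ideal point.
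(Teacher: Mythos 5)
Your setup is the natural dualization of the paper's proof of Corollary~\ref{cor:pareto-gives-compromise}, and your diagnosis of where it breaks is exactly right --- but the proposal stops at the diagnosis, so the statement is not proved. To make the failure concrete: with $R:=\refpt\geq\ideal$, $R\leq\alpha\ideal$, and $y':=\argmax_{y\in\calY_\alpha}\bigl(\norm{R}-\norm{R-y}\bigr)$, chaining monotonicity, the triangle inequality, and the guarantee of Corollary~\ref{cor:max_pareto-rpm2} gives
\begin{align*}
\norm{\ideal}-\norm{\ideal-y'} \;\geq\; \norm{\ideal}-\bigl(1-\tfrac{1}{\alpha}\bigr)\norm{R}-\tfrac{1}{\alpha}\norm{R-\ideal}-\tfrac{1}{\alpha}\norm{\ideal-\ycs} \;,
\end{align*}
and after inserting $\norm{R}\leq\alpha\norm{\ideal}$ and $\norm{R-\ideal}\leq(\alpha-1)\norm{\ideal}$, the inequality needed for an $\alpha^2$-guarantee reduces to $-(\alpha-1)^2(\alpha+1)\norm{\ideal}\geq(\alpha-1)\norm{\ideal-\ycs}$, which is false for every $\alpha>1$ unless $\ideal=0$. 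In the minimization proof the constant $\norm{\refpt}\leq\norm{\ideal}$ is a \emph{lower} bound on the optimum, so the slack $(\alpha^2-1)\norm{\refpt}$ is absorbed; here the optimum $\norm{\ideal}-\norm{\ideal-\ycs}$ can be arbitrarily small relative to $\norm{\ideal}$, exactly as you observe. Note that the paper itself states this corollary \emph{without} proof, merely as an ``analogue,'' so you are not missing a published argument --- the straightforward dualization genuinely does not go through.

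Your proposed repair --- selecting a point that is good against the whole residual uncertainty $\ideal\in[m,\alpha m]$ --- also cannot succeed, because the values $\rd_{\ideal}(y)$ of the candidates are not determined by $\calY_\alpha$ within any constant factor. Take $\alpha=2$, $k=2$, the infinity norm with unit weights, and $\calY_\alpha=\{u,v,w\}$ with $u=(201,0)$, $v=(0,100)$, $w=(5,100)$. If $\calY=\calY_\alpha$, then $\ideal=(201,100)$ and the compromise values are $\rd(u)=101$, $\rd(w)=5$, $\rd(v)=0$, so the optimum is $u$ with value $101$. If instead $\calY=\calY_\alpha\cup\{(0,200)\}$, then $\calY_\alpha$ is still a valid $2$-approximate Pareto set (since $v\geq\tfrac12(0,200)$), but now $\ideal=(201,200)$ and $\rd(u)=1$, $\rd(w)=5$, $\rd(v)=\rd((0,200))=0$, so the optimum is $w$ with value $5$. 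Any rule that decides based on $\calY_\alpha$ alone must output the same point in both situations: outputting $u$ loses a factor $5>\alpha^2=4$ in the second world, outputting $w$ loses $101/5$ in the first, and the gaps can be made arbitrarily large by scaling. So, in the same black-box sense in which the paper's own observations in Section~\ref{sec:maximization} rule out other implications for maximization, no selection from an approximate Pareto set yields the claimed $\alpha^2$-guarantee; the delicate step you flagged is not merely delicate but an actual obstruction, and establishing (or refuting) the corollary requires a mechanism beyond the one used for Corollary~\ref{cor:pareto-gives-compromise}, e.g., some way to evaluate $\rd_{\ideal}$ with \emph{multiplicative} rather than additive error.
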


Interestingly, in the reverse direction, compromise programming and reference point methods are 
suddenly of different complexities: For compromise solutions, there is no analogue of 
Corollary~\ref{cor:cs-to-pareto}:
\begin{observation}
  For any $0<\eps<1$ and sufficiently large $M$, there are instances $\calY(M,\eps)$ and
$\calY'(M,\eps)$, with encoding length $\bigo(\log M)$, that have different
$(1+\eps)$-approximate Pareto sets, but a $(1+\delta)$-approximation algorithm for $\max_{y\in\calY}
\norm{\ideal}^\lambda_\infty - \norm{\ideal - y}^\lambda_\infty$ can only distinguish between the
two instances for $\delta \in \bigo(1/M)$.
\end{observation}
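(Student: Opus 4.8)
The plan is to exhibit two instances with $k=2$ that share a common ``balancing'' point, so that their compromise optima coincide up to a factor $1+\bigo(1/M)$ for every weight, while differing in a single extremal Pareto point that any $(1+\eps)$-approximate Pareto set must resolve. Concretely, for even $M$ I would set $b:=M/2$ and take $\calY:=\{(b,b),(M,1)\}$ and $\calY':=\{(b,b),(M,2)\}$. Both have encoding length $\bigo(\log M)$ and, for $M>4$, the same ideal point $\ideal=(M,b)$. Since $(M,1)$ and $(M,2)$ differ by a factor $2>1+\eps$ in the second coordinate, neither $(1+\eps)$-approximates the other; and as the shared point $(b,b)$ only approximates itself (its first coordinate $b=M/2$ is below $M/(1+\eps)$ for $\eps<1$), the minimal $(1+\eps)$-approximate Pareto sets are $\{(b,b),(M,1)\}$ and $\{(b,b),(M,2)\}$, which are genuinely different. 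This settles the first assertion.

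For the second assertion I would bound the optimal compromise values. For a weight $\lambda$ the three relevant regrets are $\norm{\ideal-(b,b)}_\infty^\lambda=\lambda_1 b$, $\norm{\ideal-(M,1)}_\infty^\lambda=\lambda_2(b-1)$, and $\norm{\ideal-(M,2)}_\infty^\lambda=\lambda_2(b-2)$, so the optima are $V=\norm{\ideal}_\infty^\lambda-\min\{\lambda_1 b,\lambda_2(b-1)\}$ and $V'=\norm{\ideal}_\infty^\lambda-\min\{\lambda_1 b,\lambda_2(b-2)\}$. Because $\lambda_2(b-2)\le\lambda_2(b-1)$, one reads off directly that $0\le V'-V\le\lambda_2$. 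The crux is a matching lower bound on $V$: from $\norm{\ideal}_\infty^\lambda\ge\lambda_1 M$ we get $V\ge\lambda_1 M-\lambda_1 b=\lambda_1 b$, while $V'-V>0$ forces $\lambda_2(b-2)<\lambda_1 b$, i.e.\ $\lambda_1>\lambda_2(1-2/b)$. Hence whenever the two optima differ, $V\ge\lambda_1 b>\lambda_2(1-2/b)\,b$, so that $(V'-V)/V\le\lambda_2/\bigl(\lambda_2(1-2/b)b\bigr)=\bigo(1/M)$. Thus $V'/V\le 1+\bigo(1/M)$ for every $\lambda$, and a $(1+\delta)$-approximation of the compromise value can separate the two instances only when $\delta=\bigo(1/M)$, as claimed.

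The mechanism, and the point I expect to require the most care, is the behaviour at balanced weights $\lambda_1=\lambda_2$. There no feasible point is simultaneously close to the ideal in both coordinates, so a naive estimate makes the optimum look small and the additive gap of $\lambda_2$ between the instances relatively large --- exactly the effect that lets compromise programming detect a balanced point and that powers the minimization reduction in Corollary~\ref{cor:cs-to-pareto}. The construction defeats this by placing the large ideal coordinate on the \emph{first} axis: since $\norm{\ideal}_\infty^\lambda\ge\lambda_1 M$ stays of order $\lambda_1 M$ even at balanced weights, $V$ remains $\Theta(M)$-large there, whereas the distinction between $(M,1)$ and $(M,2)$ enters only through the subtracted Chebyshev term as an additive $\bigo(\lambda_2)$. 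This is the genuinely maximization-specific phenomenon: the objective $r(y)=\norm{\ideal}-\norm{\ideal-y}$ carries the large additive constant $\norm{\ideal}$, so a multiplicative $(1+\delta)$ guarantee controls the subtracted distance only up to absolute error $\Theta(\delta M)$, which drowns the $\bigo(1)$ Pareto-relevant signal unless $\delta=\bigo(1/M)$. I would close by stressing that it is precisely the \emph{value} of the compromise optimum --- the quantity on which the reduction of Corollary~\ref{cor:cs-to-pareto} relies --- that is being obscured, which is why no analogue of that reduction can exist in the maximization setting.
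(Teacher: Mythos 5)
Your first assertion goes through, but the second does not, and the failure is in the choice of construction, not in the algebra. The statement requires that the two instances be \emph{indistinguishable} through a $(1+\delta)$-approximation oracle for compromise solutions, and since such an oracle returns a feasible \emph{solution} (this is exactly what the reduction of Corollary~\ref{cor:cs-to-pareto} consumes), indistinguishability means: for every weight vector $\lambda$ there must exist a single point, feasible in both instances, that is a legitimate $(1+\delta)$-approximate answer in both. Your argument instead bounds the ratio of the two \emph{optimal values} $V,V'$ across the instances; that computation is correct but answers the wrong question. Concretely, take $\lambda=(1,0)$ (or $\lambda=(1,t)$ with small $t>0$) in your instances $\calY=\{(b,b),(M,1)\}$, $\calY'=\{(b,b),(M,2)\}$ with $b=M/2$: the common point $(b,b)$ has value $\norm{\ideal}^\lambda_\infty-\lambda_1 b = M-b = M/2$, while the optimum is $M-t(b-1)\approx M$, a factor of $2$ off. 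Hence for any $\delta<1$ the \emph{only} valid output on $\calY$ is $(M,1)$ and on $\calY'$ it is $(M,2)$; the oracle's outputs necessarily differ and reveal the instance for \emph{every} constant $\delta$, so your pair can be distinguished, contradicting the claim. (Even a value-only oracle distinguishes: at $\lambda=(1,1)$ the only achievable output values are $M-b+1$ versus $M-b+2$.) The underlying issue is that your differing points are themselves extremal and become the forced optimum at skewed weights, and moreover your instances are not nested, so closeness of $V$ and $V'$ does not yield a common valid answer.

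The paper's construction implements your (correct) intuition about the additive constant $\norm{\ideal}$ drowning an $\bigo(1)$ signal, but in the way that the indistinguishability requirement actually demands: it takes $\calY'=\calY\setminus\{y\}$ \emph{nested}, where $y=(1,M+1)$ is a balanced point that any $(1+\eps)$-approximate Pareto set of $\calY$ must contain, and plants two ``shadow'' points $y'=((1+\eps)^{-2},\,M+(1+\eps)^{-2})$ and $y''=(1,\,M/2+1)$ present in \emph{both} instances such that for every $\lambda$ one of them (depending on whether $\lambda_2\geq 2/3$ or $\lambda_2\leq 2/3$) has compromise value within a factor $1+\bigo(1/M)$ of $r_\lambda(y)$. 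Nestedness then guarantees that this common point is simultaneously a near-optimal answer in both instances, so an adversarial oracle can answer identically on both for every $\lambda$, and only $\delta\in\bigo(1/M)$ permits separation. To repair your proof you would have to add such shadow points ensuring the removed (or differing) point is never the unique near-optimal solution at any weight --- at which point you essentially rebuild the paper's instances.
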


\begin{proof}
  Let $\rd_\lambda(y) := \norm{\ideal}^\lambda_\infty + \norm{\ideal - y}^\lambda_\infty$. Since
  approximation is invariant under scaling of $\lambda$, we can assume w.l.o.g.~that 
  $\norm{\lambda}_\infty = 1$. Consider the two sets
  \begin{align*}
    \calY(M,\eps) &= \left\{ y=\binom{1}{M+1}, y'=\binom{(1+\eps)^{-2}}{M+(1+\eps)^{-2}},
y''=\binom{1}{\frac{M}{2}+1}, \binom{M+1}{0}, \binom{0}{2M+1} \right\}\;, \\
    \calY'(M,\eps) &= \calY(M,\eps) \setminus \{y\} \;.
  \end{align*}
  For sufficiently large values of $M$, a $(1+\eps)$-approximate Pareto set of $\calY$ has to 
  contain $y$. However, for any $\lambda\in\rats_{\geq0}^2$, either $y'$ (for $\lambda_2\geq2/3$) 
  or $y''$ (for $\lambda_2\leq2/3$) is a $(1+\delta)$-approximation to $r_\lambda(y)$, unless 
  $\delta\in\bigo(1/M)$.
\end{proof}

If the reference point can be chosen freely, however, we do get an analogue of 
Lemma~\ref{lem:ref-to-pareto}:
\begin{lemma}
  Let $\calP$ be a multicriteria maximization problem, and let 
  $\alpha > 1$. There is a polynomial time algorithm for $\gap(\calP, \alpha)$, 
  if for any feasible reference point $\refpt$ and any $\lambda\in\rats^k_{\geq0}$ there is an
  $\alpha$-approximation algorithm for $\max_{z \in \calY} r_\lambda(z)$, where $r_\lambda(z) = 
  \norm{\refpt}_\infty^\lambda - \norm{\refpt - z}_\infty^\lambda$.
\end{lemma}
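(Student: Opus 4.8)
The plan is to adapt the proof of Lemma~\ref{lem:ref-to-pareto} to the maximization setting, exploiting a feature that makes the reduction \emph{lossless}: an $\alpha$-approximation for $\max_{z}r_\lambda$ will yield an algorithm for $\gap(\calP,\alpha)$ with no increase in the factor. Given the $\gap$ input $y\in\rats^k_{\geq0}$, I would call the approximation oracle a single time, for one carefully chosen feasible reference point and matching weights. Let $S:=\{i: y_i>0\}$ be the support of $y$; coordinates $i\notin S$ impose no constraint, since every $z\in\calY$ trivially satisfies $z_i\geq0=y_i$ and $z_i\geq0=\alpha y_i$. For a scalar $\gamma>1$ to be fixed, I set $\refpt_i:=\gamma\,y_i$ and $\lambda_i:=\tfrac{1}{(\gamma-1)y_i}$ for $i\in S$, and $\refpt_i:=M$, $\lambda_i:=0$ for $i\notin S$. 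The crucial point is that this \emph{proportional} choice makes $\lambda_i\refpt_i=\tfrac{\gamma}{\gamma-1}=:N$ constant over all $i\in S$, so that $\norm{\refpt}_\infty^\lambda=N$ while $\norm{\refpt-y}_\infty^\lambda=1$. This constancy is exactly what removes the loss present in the minimization case.

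For feasibility I need $\refpt\geq\ideal$. The zero-weight coordinates are set to $M\geq\ideal_i$, and for $i\in S$ it suffices to take $\gamma\geq M/\min_{i\in S}y_i$, since then $\gamma y_i\geq M\geq\ideal_i$ by the exponential bound $\calY\subseteq[0,M]^k$; this $\gamma$ is polynomially encodable. Integrality of $\refpt$ is maintained without destroying the proportionality by choosing $\gamma$ as a suitable integer multiple of $\mathrm{lcm}$ of the denominators of the $y_i$, $i\in S$, large enough to meet the bound. The degenerate cases are disposed of first: if $y=0$ the answer is positive (return any solution), and if $y_i>M$ for some $i$ the answer is negative. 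With $\refpt$ feasible, I invoke the oracle to obtain $y'\in\calY$ with $r_\lambda(y')\geq\tfrac1\alpha\max_{z\in\calY}r_\lambda(z)$, and branch on whether $r_\lambda(y')\geq r_\lambda(y)=N-1$.

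In the positive branch, $r_\lambda(y')\geq N-1$ is equivalent to $\norm{\refpt-y'}_\infty^\lambda\leq1$ (using $y'\leq\ideal\leq\refpt$, so $\refpt-y'\geq0$), hence $\lambda_i(\refpt_i-y'_i)\leq1$ for every $i\in S$, which rearranges to $y'_i\geq y_i$; together with the trivial coordinates this gives $y'\geq y$, a valid positive answer. In the negative branch, $r_\lambda(y')<N-1$ together with the approximation guarantee gives $\max_z r_\lambda(z)\leq\alpha\,r_\lambda(y')<\alpha(N-1)$, so every $z\in\calY$ satisfies $\norm{\refpt-z}_\infty^\lambda>\alpha-(\alpha-1)N$. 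Taking the index $i^*\in S$ attaining this maximum and substituting $\refpt_{i^*}=\gamma y_{i^*}$, $\lambda_{i^*}=\tfrac{1}{(\gamma-1)y_{i^*}}$, and $N=\tfrac{\gamma}{\gamma-1}$, the $\gamma$-dependence cancels and leaves precisely $z_{i^*}<\alpha y_{i^*}$. Hence no $z$ dominates $\alpha y$, a valid negative answer.

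I expect the main obstacle — and the key idea — to be recognizing that the proportional reference point $\refpt=\gamma y$ is what makes the argument lossless: it is the unique choice (up to the zero coordinates) rendering $\lambda_i\refpt_i$ constant, which is exactly the condition allowing the index attaining the maximum of $\norm{\refpt-z}_\infty^\lambda$ to differ from the one attaining $\norm{\refpt}_\infty^\lambda$ without weakening the bound; in the minimization proof the analogous mismatch forced the factor $\beta=\tfrac{\alpha^2}{2\alpha-1}$. The remaining points are routine: ensuring feasibility without knowing $\ideal$ (handled via the bound $M$), and handling the zero coordinates of $y$ and the integrality of $\refpt$ (handled by zero weights and by choosing $\gamma$ as an integer multiple of the common denominator).
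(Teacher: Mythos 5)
Your proposal is correct and takes essentially the same route as the paper: the paper likewise uses the proportional reference point $\refpt = c\cdot y$ with $c=\max_{i:\,y_i\neq 0} M/y_i$ and weights $\lambda_i = 1/y_i$ on the support of $y$ (a uniform rescaling of your $(\gamma,\lambda)$ pair, with the same constancy $\lambda_i\refpt_i \equiv c$ doing the work), makes a single oracle call, and branches on $r_\lambda(y')\geq r_\lambda(y)$ to obtain the same lossless factor $\alpha$. If anything, your treatment of the zero coordinates (setting $\refpt_i = M$, $\lambda_i = 0$ there) and of the integrality of $\refpt$ is slightly more careful than the paper's, which simply sets $\refpt = c\cdot y$ and asserts feasibility.
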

\begin{proof}
  Let $y\in\rats^k$ be the input to the \gap\ problem, w.l.o.g.~$y\neq0$.
  Let $I=\{i\in[k]: y_i\neq0\}$, and let $M$ be an upper bound on the objective values.
  Further, let $c=\max_{i\in I} \frac{M}{y_i}$, and set $\refpt = c\cdot y$. Note that this is a 
  feasible reference point.
We now set the weight vector to
  \[ \lambda_i = \begin{cases} 1/y_i &\text{for } i\in I, \\ 0 &\text{otherwise.} \end{cases} \]
  Let $y'$ be an $\alpha$-approximate solution to $\max_{z\in\calY} r_\lambda(z)$.
  
  If $r_\lambda(y') \geq r_\lambda(y)$, then $y'_i\geq y_i$ for all $i\in I$, and $y'_i \geq0=y_i$ for all other
$i$, so $y'$ is a positive answer to the \gap\ problem. Otherwise, i.e.~if $r_\lambda(y')<r_\lambda(y)$, for any
$y''\in\calY$, we know $r_\lambda(y'')\leq\alpha r_\lambda(y') <\alpha r_\lambda(y)$. Let $j = \arg\max_{i\in I}
\frac{\refpt_i - y''_i}{y_i}$. Then,
  \begin{align*}
  r_\lambda(y) &= c\norm{y}^\lambda_\infty - (c-1)\norm{y}^\lambda_\infty = \norm{y}^\lambda_\infty = 1
\\[.5em]
  \Rightarrow\quad \alpha &= \alpha r_\lambda(y) > r_\lambda(y'') = c\cdot\norm{y}^\lambda_\infty - \norm{\refpt -
y''}^\lambda_\infty = c - \frac{cy_j-y''_j}{y_j} \\
  \Rightarrow\quad y''_j &< \alpha y_j - c y_j + c y_j = \alpha y_j \;.
\end{align*}
We can therefore conclude that there is no $y''\in\calY$ with $y''\geq\alpha y$, and answer the
\gap\ problem negatively.
\end{proof}

All approximability reductions for compromise and reference point solutions for maximization
problems are depicted in Figure~\ref{fig:max_graph-of-equivalences} below.

\begin{figure}[ht]
  \begin{center}
		\begin{tikzpicture}
    [xscale=1.2,yscale=1.5,textnode/.style={rounded corners,draw=black,inner sep=5pt,align=left}]
      \footnotesize   
      \node[textnode](pareto) at (0,1) {Pareto set};
      \node[textnode](gap) at (2,1) {$\gap(y,\alpha)$\\$\forall\;y\in\rats^k$};
      \node[textnode](ws) at (4,1) {$\min_y \lambda\transp y$};
      \node[textnode](cs-gen) at (0,2) {$\CS(\norm{\cdot})$};
      \node[textnode](cs-inf) at (4,2) {$\CS(\norm{\cdot}_\infty)$};
      \node[textnode](rpm-gen) at (0,0) {$\RPM(\norm{\cdot})$};
      \node[textnode](rpm-inf) at (4,0) {$\RPM(\norm{\cdot}_\infty)$};
      \draw[<->] (pareto) -- (gap);
      \draw[->] (gap) -- (ws);
      \draw[->] (cs-gen) -- node[above]{} (cs-inf);
      \draw[->] (pareto) -- node[above]{} (cs-gen);
%
      \draw[->] (rpm-gen) -- node[above]{} (rpm-inf);
      \draw[->] (pareto) -- node[above]{} (rpm-gen);
      \draw[->] (rpm-inf) -- node[above]{} (gap);
    \end{tikzpicture}
		\caption{Reductions of approximability for maximization
problems.\label{fig:max_graph-of-equivalences}}
  \end{center}
\end{figure}
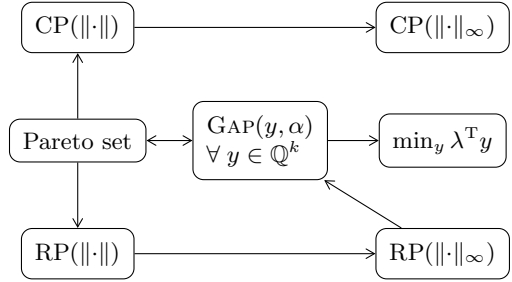

\section{Conclusion}

A multicriteria optimization problem lacks a single, unifying objective function. A priori, there is
no metric justifying a preference on the set of Pareto optimal solutions. Still, the Pareto
solutions are not equivalent like, e.g., the set of optima for a single criterion. Decision makers
can have preferences among the Pareto solutions. Reference point
methods model such preferences. These methods are widespread in practice
and are a more powerful model than a simple weighing of the objectives. They are the most powerful
model in the sense that every Pareto solution can become the unique optimum, for some choice of the
additional input. 
To the
best of our knowledge, this paper provides the first extensive theoretical study of these methods in
the context of approximation. 

Our main results establishes computational equivalence between the
approximation of the Pareto set and the approximation of reference
point solutions, thus linking the rich body of mathematical research
on Pareto sets to the practically widespread reference point methods.
Moreover, this work lifts a number of
important and general algorithmic techniques known for single criteria
optimization to the setting of reference point solutions.


\paragraph{Acknowledgment.} We are grateful to G\"unter
Ziegler for a discussion that significantly simplified the proof of \ref{lem:pareto-to-cs}.

\bibliographystyle{plain}
\bibliography{cs}

\newpage
\section*{Appendix}

\begin{lemma}[Lemma~\ref{lem:ref-to-pareto} revisited, for $\Norm{\cdot}_p$ and $\norm{\cdot}_p$]
\label{lem:ref_p-to-pareto}
Let $\alpha > 1$ and set $\beta := \tfrac{\alpha^2}{2\alpha-1}$. There is a polynomial time
algorithm for $\gap(\calP, \alpha)$, if there are two polynomial time algorithms $A_1, A_2$ such
that, 
\begin{itemize}
\item given an instance of $\calP$, algorithm $A_1$ computes in polynomial time a feasible reference
point $\refpt \in \ints^k_{\geq0}$ for that instance, and,
\item additionally given $\refpt$ and $\lambda \in \rats^k_{\geq0}$ and $p \geq 1$, algorithm $A_2$
computes in polynomial time a solution $y' \in \calY$ with $r(y') \leq \beta \min_{z \in \calY}
r(z)$, for $r(z) = \Norm{\refpt}_p^\lambda + \Norm{\refpt - z}_p^\lambda$ or $r(z) =
\norm{\refpt}_p^\lambda + \norm{\refpt - z}_p^\lambda$, respectively.
\end{itemize}
\end{lemma}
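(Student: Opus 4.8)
The plan is to mirror the proof of Lemma~\ref{lem:ref-to-pareto} for $\norm{\cdot}_\infty$, carrying out the two cases in the cornered- and $\ell^p$-geometry and exploiting the fact that we may pass an arbitrarily large $p$ to $A_2$ (the running-time budget only charges $\log p$). First I would normalise exactly as in the infinity-norm proof: given the \gap\ input $y\in\rats^k_{\geq0}$, assume w.l.o.g. $y\geq\alpha\refpt$ for the reference point $\refpt$ returned by $A_1$ (otherwise no $y''\leq\tfrac1\alpha y$ exists and \gap\ is answered negatively at once), and set $\lambda_i:=1/(y_i-\refpt_i)$ where $y_i>\refpt_i$ and $\lambda_i:=2$ in the only degenerate case $y_i=\refpt_i=0$ (since $\alpha>1$, $y\geq\alpha\refpt$ excludes $y_i=\refpt_i>0$). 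With these weights the target value becomes $B:=\Norm{y-\refpt}_p^\lambda=1+k/p$ for the cornered norm and $B:=\norm{y-\refpt}_p^\lambda=k^{1/p}$ for the $\ell^p$-norm. Writing $A:=\Norm{\refpt}_p^\lambda$ (resp. $\norm{\refpt}_p^\lambda$) and $r$ for the corresponding reference point objective, I would call $A_2$ once with this $p$ and $\lambda$ to obtain a $\beta$-approximation $y'$, and then distinguish the cases $r(y')\leq r(y)$ and $r(y')>r(y)$.

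The negative case $r(y')>r(y)$ transfers essentially verbatim. The two quantitative ingredients are the bound $A\leq B/(\alpha-1)$, which follows from $\lambda_i\refpt_i\leq 1/(\alpha-1)$ (a consequence of $y\geq\alpha\refpt$) together with monotonicity of the $p$-norms, and the bound $r(y'')<A+B/\alpha$ valid for every candidate $y''\leq\tfrac1\alpha y$, which follows coordinatewise from $\lambda_i(y''_i-\refpt_i)<1/\alpha$ and the fact that $y''\geq\ideal\geq\refpt$. A short computation shows that $\beta=\tfrac{\alpha^2}{2\alpha-1}$ is exactly the threshold for which $\beta(A+B/\alpha)\leq A+B$ holds under $A\leq B/(\alpha-1)$; hence any hypothetical $y''\leq\tfrac1\alpha y$ would satisfy $\beta r(y'')<\beta(A+B/\alpha)\leq A+B=r(y)$, contradicting $\beta r(y'')\geq r(y')>r(y)$. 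So no such $y''$ exists and the negative answer is correct, with no change to $\beta$.

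The positive case $r(y')\leq r(y)$ is where the $p$-norms depart from $\norm{\cdot}_\infty$ and is the main obstacle. Because the unit balls of these norms are not axis-parallel boxes, $r(y')\leq r(y)$ no longer forces $y'\leq y$: extracting the largest weighted coordinate only gives $\lambda_i(y'_i-\refpt_i)\leq B$, i.e. $y'_i\leq y_i+\tfrac{k}{p}(y_i-\refpt_i)$ for the cornered norm and $y'_i\leq y_i+(k^{1/p}-1)(y_i-\refpt_i)$ for $\ell^p$, which (using $\calY\subseteq[0,M]^k$, so that coordinates with $y_i>M$ are dominated trivially and otherwise $y_i-\refpt_i\leq M$) is a per-coordinate slack of at most $\tfrac{kM}{p}$ resp. $(k^{1/p}-1)M$. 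The hard part is to close this slack exactly, since $y$ is merely rational. The remedy I would use is to choose $p$ large enough that the slack drops below $2^{-N}$, where $N$ is a polynomial bound on the encoding length of $y$; this forces $p$ to be exponential but keeps $\log p$ polynomial (admissible because $A_2$ runs in time polynomial in $\log p$, and $M\leq 2^{\pi(|I|)}$ keeps $\log(kM)$ polynomial). Every non-integral coordinate of $y$ is then at distance at least $2^{-N}$ from the next larger integer, so $y'_i<y_i+2^{-N}$ together with $y'_i\in\ints_{\geq0}$ yields $y'_i\leq y_i$ on every coordinate; hence $y'\leq y$ and $y'$ is a valid positive answer. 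I expect the only delicate point to be this interplay between the choice of $p$, the exponential objective bound $M$, and the resolution of $y$; the remaining estimates are a routine recomputation of the infinity-norm argument in the new geometry.
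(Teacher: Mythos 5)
Your proposal is correct and follows the paper's overall architecture: same normalization $y \geq \alpha\refpt$, same weights $\lambda_i = 1/(y_i-\refpt_i)$ with $\lambda_i=2$ on the degenerate coordinates, same case split on $r(y')$ versus $r(y)$, and the same key trick of passing an exponentially large $p$ (with $\log p$ polynomial, using $M \leq 2^{\pi(|I|)}$) so that in the positive case the per-coordinate slack drops below the resolution of $y$ and integrality of $y'$ forces $y'\leq y$; the paper parameterizes this via the largest denominator $q$ of $y$ (choosing $p = \max\{\log k/\log(1+\tfrac{1}{2M}),\, 2kMq\}$) where you use $2^{-N}$ for $N$ the encoding length, which is equivalent. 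The one genuine divergence is the negative case: the paper reduces it back to the infinity-norm computation of Lemma~\ref{lem:ref-to-pareto}, after projecting out the degenerate coordinates, via the sandwich $\norm{z}^\lambda_\infty \leq \Norm{z}^\lambda_p \leq (1+\tfrac{k}{p})\norm{z}^\lambda_\infty$ together with the exact identity $\Norm{y-\refpt}^\lambda_p = 1+\tfrac{k}{p}$ (and $\sqrt[p]{k}$ for $\ell^p$), whereas you rerun the argument directly in the $p$-geometry through the two clean estimates $A \leq B/(\alpha-1)$ and $r(y'') \leq A + B/\alpha$ for any $y''\leq\tfrac{1}{\alpha}y$, and verify that $\beta = \tfrac{\alpha^2}{2\alpha-1}$ is exactly the threshold for $\beta(A+B/\alpha)\leq A+B$. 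Your variant is arguably cleaner: it is self-contained, imposes no constraint on $p$ from the negative case, needs no projection onto $[k]\setminus I$ (the coordinatewise inequalities hold trivially there), and makes transparent why $\beta$ is tight; the paper's route buys reuse of an already-proven lemma. One small omission: in the positive case your slack bound $y'_i \leq y_i + (B-1)(y_i-\refpt_i)$ is stated only for $i\notin I$; for $i\in I$ you should note that $2y'_i \leq B < 2$ and integrality give $y'_i = 0 = y_i$ --- immediate, and exactly why $\lambda_i=2$ was chosen, but it should be said.
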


\begin{proof}
Let $y \in \rats_{\geq0}^k$ be the input to the $\gap$ problem. W.l.o.g., we can assume that $y \geq
\alpha \refpt$ for the reference point $\refpt$ computed by $A_1$, as otherwise there is no $y' \leq
\tfrac{1}{\alpha}y$ and $\gap$ can be answered negatively. 

We will solve the \gap\ problem with a single call of the $\beta$-approximation algorithm for
$\RPM(\calP, \Norm{\cdot}_p)$ (or $\RPM(\calP, \norm{\cdot}_p)$, respectively) with
\[p := \max \left\{\frac{\log k}{\log(1+\frac{1}{2M})}, 2kMq\right\} \;, \]
where $q$ is the largest denominator of all the components in $y$, and $M$ is an upper bound on the
objectives in $\calY$.
 
Let $I:=\{i\in[k]: y_i = \refpt_i = 0\}$. For $i \in [k]$, $\lambda_i = \begin{cases} 2 &\text{if }
i \in I, \\ \frac{1}{y_i-\refpt_i} &\text{otherwise.}
\end{cases}$

Let $y'$ be a $\beta$-approximation to $\min_{z \in \calY} r(z)$. 

If $r(y') \leq r(y)$, we return $y'$ as a positive answer to the $\gap$ problem. Observe that
$$\lambda_i(y'_i - \refpt_i) \leq \Norm{y' - \refpt}_p^\lambda \leq \Norm{y - \refpt}_p^\lambda \leq
1 + \tfrac{k}{p} \;.$$
If $i \in I$, we have $y'_i \leq \tfrac{1}{2}(1+\tfrac{1}{2Mq}) < 1$. If $i \notin I$, then $y'_i
\leq (1 + \tfrac{k}{p})y_i \leq y_i + \tfrac{1}{2q} < y_i + 1$. In both cases, integrality of $y'$
implies $y'_i \leq y_i$. The same holds for the $\norm{\cdot}_p$-norm with $1+\tfrac{k}{p}$ replaced
by $\sqrt[p]{k}$ -- in this case, the choice of $p$ guarantees $\sqrt[p]{k}\cdot y'_i < y'_i+1$.


If $r(y') > r(y)$, we answer $\gap$ negatively: Let $y'' \in \calY$. We show that there is an $i \in
[k]$ with $ y''_i > \frac{1}{\alpha} y_i$. This is true if $y''_i > 0 = y_i$ for any $i \in I$.
Thus, we can restrict to the projection of $\rats^k$ to the components in $[k] \setminus I$, and
w.l.o.g. assume $I = \emptyset$. First observe that $\beta r(y'') \geq r(y') > r(y)$, which implies
$$\beta \Norm{y'' - \refpt}_p^\lambda > \Norm{y - \refpt}_p^\lambda - (\beta - 1)
\Norm{\refpt}_p^\lambda \;.$$
It is easy to verify that $\Norm{z}^\lambda_p \leq (1 + \tfrac{k}{p}) \norm{z}^\lambda_\infty$ for
all $z \in \rats^k$, and furthermore $\Norm{y - \refpt}^\lambda_p = 1 + \tfrac{k}{p}$, as $I =
\emptyset$. This yields
$$(1 + \tfrac{k}{p}) \beta \norm{y'' - \refpt}_\infty^\lambda > (1 + \tfrac{k}{p}) \norm{y -
\refpt}_\infty^\lambda - (1 + \tfrac{k}{p}) (\beta - 1) \norm{\refpt}_\infty^\lambda \;,$$
which brings us back to the case of the $\norm{\cdot}_\infty$-norm.
The same holds true for the $\norm{\cdot}_p$-norm, with the factor $1 + \tfrac{k}{p}$ replaced by
$\sqrt[p]{k}$.
\end{proof}

\begin{corollary}[Corollary~\ref{cor:cs-to-pareto} revisited, for $\Norm{\cdot}_p$ and
$\norm{\cdot}_p$] \label{cor:cs_p-to-pareto}
Let $\alpha > 1$ and set $\beta := \sqrt{\tfrac{\alpha^2}{2\alpha-1}}$. There is a polynomial time
algorithm for $\gap(\calP, \alpha)$, if there is a $\beta$-approximation algorithm for $\CS(\calP,
\Norm{\cdot}_p)$ ($\CS(\calP, \norm{\cdot}_p)$, respectively) for every $p \geq 1$ and the running
time of all algorithms is bounded by a polynomial in the instance size and $\log(p)$.
\end{corollary}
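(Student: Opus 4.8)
The plan is to follow the proof of Corollary~\ref{cor:cs-to-pareto} almost verbatim, with the infinity-norm replaced by $\Norm{\cdot}_p$ (respectively $\norm{\cdot}_p$) and with the appeal to Lemma~\ref{lem:ref-to-pareto} replaced by an appeal to its $p$-norm analogue, Lemma~\ref{lem:ref_p-to-pareto}. Concretely, I would exhibit the two algorithms $A_1$ and $A_2$ demanded by Lemma~\ref{lem:ref_p-to-pareto}, both realized through calls to the given $\CS$-oracle. For $A_1$, for each coordinate $i \in [k]$ I would call the oracle with the degenerate weight vector $\bar\lambda$ having $\bar\lambda_i = 1$ and $\bar\lambda_j = 0$ otherwise. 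With a single active weight the compromise objective $r_{\ideal,\bar\lambda}$ reduces (up to the harmless constant factor $1+\tfrac1p$ present only for the cornered norm) to the $i$-th coordinate $z_i$, so its minimizer attains $\ideal_i$ and any $\beta$-approximation $\bar y^{(i)}$ satisfies $\bar y^{(i)}_i \le \beta\, \ideal_i$. Since this reduction is independent of $p$, $A_1$ may invoke the oracle at any fixed value, say $p = 1$. Setting $\refpt_i := \ceil{\tfrac1\beta \bar y^{(i)}_i}$ then yields, using $\ideal \in \ints^k$, a feasible reference point with $\refpt \le \ideal \le \beta\refpt$, exactly as in the infinity-norm case.

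For $A_2$, given $\refpt$, $\lambda$, and the value of $p$ that Lemma~\ref{lem:ref_p-to-pareto} prescribes, I would call the oracle to obtain a $\beta$-approximation $y'$ to $\min_{z\in\calY} r_{\ideal,\lambda}(z)$ with respect to the same $p$, and then bound $r_{\refpt,\lambda}(y')$ against $r_{\refpt,\lambda}(\yref)$ for the true $\RPM$-optimum $\yref$. The bound is obtained by the identical chain of estimates as in Corollary~\ref{cor:cs-to-pareto}: first split $\Norm{y'-\refpt}_p^\lambda \le \Norm{y'-\ideal}_p^\lambda + \Norm{\ideal-\refpt}_p^\lambda$ by the triangle inequality, absorb $\Norm{\ideal-\refpt}_p^\lambda \le (\beta-1)\Norm{\refpt}_p^\lambda$ via $\ideal-\refpt \le (\beta-1)\refpt$ and monotonicity, then feed in the $\beta$-approximation guarantee for the compromise objective, and finally use $\Norm{\ideal}_p^\lambda \le \beta\Norm{\refpt}_p^\lambda$ together with $\Norm{\yref-\ideal}_p^\lambda \le \Norm{\yref-\refpt}_p^\lambda$ (the latter by $\refpt \le \ideal \le \yref$ and monotonicity). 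This yields $r_{\refpt,\lambda}(y') \le \beta^2\, r_{\refpt,\lambda}(\yref)$. Since $\beta = \sqrt{\alpha^2/(2\alpha-1)}$, the factor $\beta^2 = \alpha^2/(2\alpha-1)$ is precisely the approximation guarantee that $A_2$ is required to meet in Lemma~\ref{lem:ref_p-to-pareto}, so the lemma delivers the polynomial-time algorithm for $\gap(\calP,\alpha)$. Note that the whole argument uses only the triangle inequality and monotonicity of the weighted norm, both of which hold for $\Norm{\cdot}_p^\lambda$ and $\norm{\cdot}_p^\lambda$, so the two norm families can be treated simultaneously.

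The main point requiring care --- and the only genuine departure from the infinity-norm proof --- is the value of $p$. Lemma~\ref{lem:ref_p-to-pareto} fixes $p := \max\{\log k / \log(1+\tfrac1{2M}),\, 2kMq\}$, where $q$ is the largest denominator occurring in the $\gap$-input $y$ and $M$ is the exponential bound on the objectives; this $p$ may be exponentially large in $|I|$, so $A_2$ can only afford to call the oracle if the oracle's running time is polynomial in $\log p$ rather than in $p$. This is exactly the extra hypothesis stated in the corollary, and it is why $A_1$ is deliberately run at a fixed small $p$. I expect the bookkeeping around this choice of $p$, together with the notational clash between the corollary's $\beta$ and the factor internal to Lemma~\ref{lem:ref_p-to-pareto} (which equals our $\beta^2$), to be the only subtle points; the inequality chain itself is routine once the infinity-norm version is in hand.
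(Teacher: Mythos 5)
Your proposal is correct and takes essentially the same approach as the paper, whose own proof consists of the single remark that the argument of Corollary~\ref{cor:cs-to-pareto} carries over verbatim (via Lemma~\ref{lem:ref_p-to-pareto} in place of Lemma~\ref{lem:ref-to-pareto}) because its second part uses only monotonicity and the triangle inequality. The details you spell out---running $A_1$ at a fixed $p$, passing the lemma's possibly exponentially large $p$ to the oracle (the reason for the $\log(p)$ running-time hypothesis), and the identification of the corollary's $\beta^2$ with the lemma's internal factor $\tfrac{\alpha^2}{2\alpha-1}$---are exactly what the paper leaves implicit.
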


\begin{proof}
The proof is identical to that of Corollary~\ref{cor:cs-to-pareto} given in the paper. In fact, the
second
part of this proof only uses properties of monotone norms.
\end{proof}

\end{document}